\newif\ifcomments     
\newif\ifanonymous    
\newif\ifextended     
\newif\ifsubmission   
\newif\ifpublic       
\def\orcidID#1{\smash{\href{http://orcid.org/#1}{\protect\raisebox{-1.25pt}{\protect\includegraphics{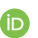}}}}}
\definecolor{light-gray}{gray}{0.90}
\let\mto\to
\renewcommand{\to}{\rightarrow}
\newcommand{\scw}[1]{\textcolor{blue}{{SCW: #1}}}
\newcommand{\rae}[1]{\textcolor{magenta}{{RAE: #1}}}
\newcommand{\hde}[1]{\textcolor{gray}{{HDE: #1}}}
\newcommand{\pc}[1]{\textcolor[rgb]{0.5, 0.0, 0.0}{{PC: #1}}}
\newcommand{\scw}[1]{}
\newcommand{\rae}[1]{}
\newcommand{\hde}[1]{}
\newcommand{\pc}[1]{}
\newcommand{\auxiliarymaterial}{the appendix}
\newcommand{\auxref}[1]{Appendix~\ref{#1}}
\newcommand{\auxiliarymaterial}{the anonymized supplementary material}
\newcommand{\auxiliarymaterial}{the extended version of this paper~\citep{DDC-arXiv}}
\newcommand{\auxref}[1]{\auxiliarymaterial}
\newcommand{\cd}[1]{\lstinline!#1!}
\newcommand{\DDC}{\textsc{DDC}}
\newcommand{\DDCt}{\DDC$^{\top}$}
\newenvironment{ottdefnblock}[3][]{ \framebox{\mbox{#2}} \quad #3 \\[0pt]}{}
\newcommand{\ottnt}[1]{\mathit{#1}}
\newcommand{\ottmv}[1]{\mathit{#1}}
\newcommand{\ottkw}[1]{\mathbf{#1}}
\newcommand{\ottsym}[1]{#1}
\begin{document}

\bibliographystyle{splncs04nat}

\ifextended
\title{A Dependent Dependency Calculus (Extended Version)}
\else
\title{A Dependent Dependency Calculus}
\fi

\ifanonymous
\author{Anonymous authors}
\else
\author{Pritam Choudhury \inst{1}\Letter \and 
       Harley {Eades III}\inst{2}
       \orcidID{0000-0001-8474-5971}\and
       Stephanie Weirich\inst{3}
       \orcidID{0000-0002-6756-9168}}
\institute{University of Pennsylvania, Philadelphia, PA, USA, \email{pritam@seas.upenn.edu}
  \and Augusta University, Augusta, GA, USA
  \and University of Pennsylvania, Philadelphia, PA, USA }
\fi

\authorrunning{Pritam Choudhury, Harley Eades III, and Stephanie Weirich}

\maketitle
\ifextended
\thispagestyle{plain}
\pagestyle{plain}
\fi

\begin{abstract}
Over twenty years ago, Abadi et al. established the Dependency Core Calculus
(DCC) as a general purpose framework for analyzing dependency in typed
programming languages. Since then, dependency analysis has shown many
practical benefits to language design: its results can help users and
compilers enforce security constraints, eliminate dead code, among other applications.
In this work, we present a Dependent
Dependency Calculus (DDC), which extends this general idea to the
setting of a dependently-typed language. We use this calculus to track both
run-time and compile-time irrelevance, enabling faster
type-checking and program execution.

\keywords{Dependent Types \and Information Flow \and Irrelevance}
\end{abstract}


\section{Dependency Analysis}

Consider this judgment from a type system that has been augmented with
\emph{dependency analysis}.

\[
    \ottmv{x} \! :^{  { \color{black}{L} }  }\! \ottkw{Int}  ,   \ottmv{y} \! :^{  { \color{black}{H} }  }\! \ottkw{Bool}   ,   \ottmv{z} \! :^{  { \color{black}{M} }  }\! \ottkw{Bool}    \vdash \,  \ottkw{if} \, \ottmv{z} \, \ottkw{then} \, \ottmv{x} \, \ottkw{else} \, \ottsym{3}  \, :^{  { \color{black}{M} }  } \,  \ottkw{Int} 
\]

In this judgment, $ { \color{black}{L} } $, $ { \color{black}{M} } $ and $ { \color{black}{H} } $ stand for low, medium and 
high security levels respectively. The computed value of the expression is meant to be a medium-security result. The inputs, $\ottmv{x}$, $\ottmv{y}$ and $\ottmv{z}$ have 
been marked with their respective security levels. This expression type-checks
because it is permissible for medium-security results to \emph{depend} on both
low and medium-security inputs. Note that the high-security boolean variable $\ottmv{y}$ is
not used in the expression. However, if we replace $\ottmv{z}$ with $\ottmv{y}$ in the
conditional, then the type checker would reject that expression. Even though the
high-security input would not be returned directly, the medium-security result would still  
depend on it.

Dependency analysis, as we see above, is an \emph{expressive} addition to
programming languages. Such analyses allow languages to protect
sensitive information~\citep{smith-volpano,slam}, support run-time code generation~\citep{binding-time}, slice programs while preserving behavior~\citep{slicing}, etc. Several existing dependency analyses were unified by \citet{dcc} in their Dependency Core Calculus (DCC). This calculus has served as a foundation
for static analysis of dependencies in programming languages.

What makes DCC powerful is the parameterization of the type system
by a \emph{generic} lattice of dependency levels. Dependency
analysis, in essence, is about ensuring secure information flow---that information never flows from more secure to less secure levels. \citet{denning1} showed that a lattice model, where increasing order corresponds to higher security, can be used to enforce secure flow of information. DCC integrates this lattice model with the computational $\lambda$-calculus \citep{moggi} by grading the monad operator of the latter with elements of the former. This integration enables DCC to analyze dependencies in its type system.

However, even though many typed languages have included dependency analysis in
some form, this feature has seen relatively little attention in the context of
\emph{dependently-typed} languages. This is unfortunate because, as we show in
this paper, dependency analysis can provide an elegant foundation for
compile-time and run-time irrelevance, two important concerns in the design of dependently-typed
languages. Compile-time irrelevance identifies sub-expressions that are not needed for type checking while run-time irrelevance identifies sub-expressions that do not affect the result of evaluation. 
By ignoring or erasing such sub-expressions, compilers for dependently-typed languages increase the expressiveness of the type system,
improve on compilation time and produce more efficient executables.

Therefore, in this work, we augment a dependently-typed language with a
\emph{primitive} notion of dependency analysis and use it to track compile-time and
run-time irrelevance. We call this language DDC, for Dependent Dependency Calculus, in homage to
DCC. Although our dependency analyses are structured differently,
we show that DDC can faithfully embed the terminating fragment of DCC and
support its many well-known applications, in addition to our
novel application of tracking compile-time and run-time irrelevance.

More specifically, our work makes the following contributions:
\begin{itemize}
\item We design a language SDC, for Simple Dependency Calculus, that can analyze 
  dependencies in a simply-typed language. We show that SDC is no less expressive
  than the terminating fragment of DCC. The structure of dependency analysis in SDC 
  enables a relatively straightforward syntactic proof of non-interference. (Section~\ref{SDC})
    
\item We extend SDC to a dependent calculus, $\textsc{DDC}^{\top}$. Using this calculus, we analyze
  run-time irrelevance and show the analysis is correct using a non-interference theorem. 
  $\textsc{DDC}^{\top}$ contains SDC as a sub-language. As such, it can be used to track other forms of dependencies as
  well.   (Section~\ref{DDCT})
\item We generalize $\textsc{DDC}^{\top}$ to DDC. Using this calculus, we analyze both run-time and
  compile-time irrelevance and show that the analyses are correct. To the best of our knowledge, DDC 
  is the only system that can distinguish run-time and compile-time irrelevance as separate modalities, necessary for the proper treatment  
  of projection from irrelevant $\Sigma$-types. (Section~\ref{sec:compile-time-irrelevance})
\item We have used the Coq proof assistant to mechanically verify the most
  important and delicate part of our designs, the non-interference and type
  soundness theorems for DDC. \ifanonymous Our proof scripts are available to
  reviewers as anonymized supplementary material and we plan to make this
  development publicly available. \else This mechanization is available online\footnote{
  \url{https://github.com/sweirich/graded-haskell}} and as a self-contained
  artifact~\citep{esop22:artifact}. \fi
\end{itemize}

\section{Irrelevance and Dependent Types}

\label{sec:irrelevance}

\textit{Run-time irrelevance} (sometimes called \emph{erasure}) and
\textit{compile-time irrelevance} are two forms of \textit{dependency} analyses that
arise in dependent type theories. Tracking these dependencies helps compilers produce faster
executables and makes type checking more flexible. \citep{pfenning:2001,miquel,barras:icc-star,mishra,abel,McBride:2016,atkey,Nuyts18,matus,Moon:2021}.






\subsection{Run-time irrelevance}

Parts of a program that are not required during run time are said to be
run-time irrelevant. Our goal is to identify such parts. Let's consider some examples. We shall mark variables and arguments
with $\top$ if they can be erased prior to execution and leave them unmarked if
they should be preserved. 

For example, the polymorphic identity function can be marked as:
\begin{lstlisting}
   id : Π x:$\!^{ { \color{black}{\top} } }$Type. x -> x 
   id = $\lambda\!^{ { \color{black}{\top} } }$x. λy. y
\end{lstlisting}
The first parameter, $\ottmv{x}$, of the identity function is only needed during
type checking; it can be erased before execution. The second
parameter, $\ottmv{y}$, though, is required during runtime.
When we apply this function to arguments, as in \lstinline{(id Bool$\!^{ { \color{black}{\top} } }$ True)}, we
can erase the first argument, \lstinline{Bool}, but the second one,
\lstinline{True}, must be retained.

Indexed data structures provide another example of run-time irrelevance. 

Consider the \cd{Vec} datatype for length-indexed
vectors, as it might look in a core
language inspired by GHC~\citep{systemfc,weirich:systemd}. The \cd{Vec} datatype has two parameters, \cd{n} and \cd{a}, that also appear in the types of the data constructors \cd{Nil} and \cd{Cons}. 
These parameters are relevant to \cd{Vec}, but irrelevant to the data constructors. (In the types of the constructors, the equality constraints
\cd{(n $\ \sim$ Zero)} and \cd{(n $\ \sim$ Succ m)} force \cd{n} to be equal to the length
of the vector.)

\begin{lstlisting}
Vec  : Nat -> Type -> Type
Nil  : Π n:$\!^{ { \color{black}{\top} } }$Nat. Π a:$\!^{ { \color{black}{\top} } }$Type. (n $\sim$ Zero) => Vec n a
Cons : Π n:$\!^{ { \color{black}{\top} } }$Nat. Π a:$\!^{ { \color{black}{\top} } }$Type. Π m:$\!^{ { \color{black}{\top} } }$Nat. (n $\sim$ Succ m) => a -> Vec m a -> Vec n a 
\end{lstlisting}

Now consider a function \cd{vmap} that maps a given function over a given vector. The length of the vector and the type arguments are not necessary for running \cd{vmap}; they are all erasable. So we assign them $ { \color{black}{\top} } $.

\noindent
\begin{minipage}{\linewidth}
\begin{lstlisting}
vmap : Π n:$\!^{ { \color{black}{\top} } }$Nat.Π a b:$\!^{ { \color{black}{\top} } }$Type.  (a -> b) -> Vec n a -> Vec n b
vmap = λ$\!^{ { \color{black}{\top} } }$ n a b. λ f xs.
          case xs of 
            Nil -> Nil 
            Cons m$\!^{ { \color{black}{\top} } }$ x xs -> Cons m$\!^{ { \color{black}{\top} } }$ (f x) (vmap m$\!^{ { \color{black}{\top} } }$ a$\!^{ { \color{black}{\top} } }$ b$\!^{ { \color{black}{\top} } }$ f xs)
\end{lstlisting}
\end{minipage}

Note that the $ { \color{black}{\top} } $-marked variables \cd{m}, \cd{a} and \cd{b} appear in the definition of \cd{vmap}, but only in $ { \color{black}{\top} } $ contexts. By requiring that `unmarked' terms \textit{don't depend} on terms marked with $ { \color{black}{\top} } $, we can track run-time irrelevance and guarantee safe erasure. Observe that even though
these arguments are marked with $ { \color{black}{\top} } $ to describe their use in the \emph{definition} of \cd{vmap}, this marking does not reflect their usage in the \emph{type} of \cd{vmap}. In particular, we are free to use these variables with \cd{Vec} in a relevant manner.

\subsection{Compile-time Irrelevance} \label{comp-irr}

Some type constructors may have arguments which can be ignored during type checking.
Such arguments are said to be \emph{compile-time irrelevant}. For example, suppose we have a constant function that ignores its argument and returns a type.

\begin{lstlisting}
phantom : Nat$\!^{ { \color{black}{\top} } }$ -> Type
phantom = λ$\!^{ { \color{black}{\top} } }$ x.  Bool
\end{lstlisting}

To type check \cd{idp} below, we must show that
\cd{phantom 0} equals \cd{phantom 1}. Without compile-time irrelevance,
we need to $\beta$-reduce both sides to show that the input and output types are equal.

\begin{lstlisting}
idp : phantom 0$\!^{ { \color{black}{\top} } }$ -> phantom 1$\!^{ { \color{black}{\top} } }$
idp = λ x. x
\end{lstlisting}

However, in the presence of compile-time irrelevance, we can use the dependency
information contained in the type of a function to reason about it abstractly.
Because the function \cd{f} below ignores its argument, it is sound to equate the input and output types.

\begin{lstlisting}
ida : Π f $:\!^{ { \color{black}{\top} } }$(Nat$\!^{ { \color{black}{\top} } }$ -> Type).  f 0$\!^{ { \color{black}{\top} } }$ -> f 1$\!^{ { \color{black}{\top} } }$
ida = λ$\!^{ { \color{black}{\top} } }$ f. λ x. x
\end{lstlisting}

In the absence of compile-time irrelevance, we cannot type-check \cd{ida}. So compile-time irrelevance makes type checking more flexible.

Compile-time irrelevance can also make type checking faster when the types contain
expensive computation that can be safely ignored. For example, consider the
following program that type checks without compile-time irrelevance.
However, in that case, the type checker must show that \cd{fib 28} reduces to
\cd{317811}, where \cd{fib} represents the Fibonacci function.

\begin{lstlisting}
idn : Π f $:\!^{ { \color{black}{\top} } }$(Nat$\!^{ { \color{black}{\top} } }$ -> Type). f (fib 28)$\!^{ { \color{black}{\top} } }$ -> f 317811$\!^{ { \color{black}{\top} } }$
idn = λ$\!^{ { \color{black}{\top} } }$ f. λ x. x
\end{lstlisting}

So far, we have used two annotations on variables and terms: $ { \color{black}{\top} } $ for irrelevant ones
and `unmarked' for relevant ones. We used $ { \color{black}{\top} } $ to mark both arguments that can be erased at runtime
and arguments that can be safely ignored by the type checker. However, sometimes we need a finer distinction.

\subsection{Strong Irrelevant $\Sigma$-types} 

Consider the type \cd{$\Sigma$m:$\!^{ { \color{black}{\top} } }$Nat. Vec m a}, which contains pairs whose
first component is marked as irrelevant. This type might be useful, say, for the output
of a \cd{filter} function for vectors, where the length of the output vector
cannot be calculated statically. If we never need to use this length at
runtime, then it would be good to mark it with $ { \color{black}{\top} } $ so that
it need not be stored.\footnote{It is, however, safe for \cd{m} to be used in a relevant position in the
  body of the $\Sigma$-type even when it is marked with $ { \color{black}{\top} } $. This marking indicates
  how the first component of a pair having this type is used, not how the bound variable \cd{m}
  is used in the body of the type.}

However, marking \cd{m} with $ { \color{black}{\top} } $ means that the first component of the pair of this type must also be
\emph{compile-time} irrelevant. This results in a significant
limitation for strong $\Sigma$ types: we cannot project the second component from the pair.
Say we have \cd{ys: $\Sigma$m:$\!^{ { \color{black}{\top} } }$Nat. Vec m a}. The type of (\cd{$\pi_1$ ys}) is 
a \cd{Nat} that can only be used in irrelevant positions. However, note that the argument \cd{n} in \cd{Vec n a} must be compile-time relevant; otherwise the type checker would equate \cd{Vec 0 a} with \cd{Vec 1  a}, making the length index meaningless. The 
type of (\cd{$\pi_2$ ys}) would then be \cd{Vec ($\pi_1$ ys) a}, 
which is ill-formed because an irrelevant term ($\pi_1$ ys) appears in a relevant position. 

Therefore, we don't want to mark the first component of the output of \cd{filter} with $ { \color{black}{\top} } $.
However, if we leave it unmarked, we cannot erase it at runtime, something that we might want to.
A way out of this quandry comes by considering terms that are run-time irrelevant but not compile-time irrelevant. Such terms exist between completely irrelevant and completely relevant terms. They should not depend upon irrelevant terms and relevant terms should not depend upon them. We mark such terms with a new annotation, $ { \color{black}{C} } $, with the constraints that `unmarked' terms do not depend on $ { \color{black}{C} } $ and $ { \color{black}{C} } $ terms do not depend on $ { \color{black}{\top} } $ terms. The three annotations, then, correspond to the three levels of a lattice modelling secure information flow, with $ { \color{black}{\bot} }  <  { \color{black}{C} }  <  { \color{black}{\top} } $, using $ { \color{black}{\bot} } $ in lieu of `unmarked'. We call the lattice $\mathcal{L}_I$, for irrelevance lattice. 
Using this lattice, we can type check the following \cd{filter} function.

\noindent
\begin{minipage}{\linewidth}
\noindent
\begin{lstlisting}
filter : Πn:$\!^{ { \color{black}{\top} } }$Nat.Πa:$\!^{ { \color{black}{\top} } }$Type.(a -> Bool) -> Vec n a -> $\Sigma$m:$^{ { \color{black}{C} } }$Nat. Vec m a
filter = λ$\!^{ { \color{black}{\top} } }$ n a. λ f vec.
          case vec of  
            Nil -> (Zero$^{ { \color{black}{C} } }$, Nil)  
            Cons n1$\!^{ { \color{black}{\top} } }$ x xs 
              | f x  -> ((Succ ($\pi_1$ ys))$^{ { \color{black}{C} } }$, Cons ($\pi_1$ ys)$\!^{ { \color{black}{\top} } }$ x ($\pi_2$ ys))
                            where 
                              ys = filter n1$\!^{ { \color{black}{\top} } }$ a$\!^{ { \color{black}{\top} } }$ f xs
              | _    -> filter n1$\!^{ { \color{black}{\top} } }$ a$\!^{ { \color{black}{\top} } }$ f xs
\end{lstlisting}
\end{minipage}

\citet{eisenberg:existentials} observe that, in Haskell, it is important to use
projection functions to access the components of the pair that
results from the recursive call (as in \cd{$\pi_1$ ys} and \cd{$\pi_2$ ys}) to
ensure that \cd{filter} is not excessively strict.  If \cd{filter} instead used pattern matching to eliminate the pair returned by the recursive call, it would
have needed to filter the entire vector before returning the first successful
value. This \cd{filter} function demonstrates the practical utility of strong irrelevant $\Sigma$-types because it supports the same run-time behavior of the usual list \cd{filter} function but with a more richly-typed data structure.

\section{A Simple Dependency Analyzing Calculus} \label{SDC}


Our ultimate goal is a dependent dependency calculus. However, we first start with a simply-typed version
so that we can explain our approach to dependency analysis and non-interference in a simplified setting. 

We call the calculus of this section SDC, for Simple Dependency Calculus.  This calculus is parameterized by a lattice of \emph{labels} or \emph{grades}, which can also be thought of as security \emph{levels}.\footnote{We use the terms label, level and grade interchangeably.} 
An excerpt of this calculus appears in Figure~\ref{fig:min}; it 
is an extension of the simply-typed $\lambda$-calculus with a grade-indexed
modal type $ T^{ \ell }\;  \ottnt{A} $. The modal type $ T^{ \ell }\;  \ottnt{A} $ can be thought of as putting a security barrier of grade $\ell$ around the values of $\ottnt{A}$. The calculus itself is also \emph{graded}, which means that in a typing judgment, the
derived term and every variable in the context carries a label or grade.
(The specification of the full system, which includes unit, products and sums, appears in \auxref{app:sdc-rules}.) 

\begin{figure}
\centering
\begin{flushright}
\textit{(Grammar)}
\end{flushright}
\vspace{-3ex}
\[
\begin{array}{llcll}
\textit{labels} & \ell, k & ::= &  { \color{black}{\bot} }  \mid  { \color{black}{\top} }  \mid  k  \wedge  \ell  \mid  k  \vee  \ell  \mid \ldots \\ 
\textit{types} & \ottnt{A},\ottnt{B}   & ::=& \ottkw{Unit} \mid   \ottnt{A}  \to  \ottnt{B}   \mid  T^{ \ell }\;  \ottnt{A} \\
\textit{terms} & \ottnt{a}, \ottnt{b}   & ::=& \ottmv{x} \mid  \lambda \ottmv{x} \!:\! \ottnt{A} . \ottnt{a}  \mid  \ottnt{a}  \;  \ottnt{b}  & \mbox{\it variables and functions} \\ 
                              && \mid &  \eta^{ \ell }\;  \ottnt{a}  \mid  \ottkw{bind} ^{ \ell } \,  \ottmv{x}  =  \ottnt{a}  \,  \ottkw{in}  \,  \ottnt{b}  & \mbox{\it graded modality}
\\
\textit{contexts} & \Omega & ::= & \varnothing \mid  \Omega ,   \ottmv{x} \! :^{ \ell }\! \ottnt{A}  \\
\\
\end{array}
\]
\vspace{-6ex}
\drules[SDC]{$ \Omega  \vdash \,  \ottnt{a}  \, :^{ \ell } \,  \ottnt{A} $}{Typing rules}
{Var,Abs,App,Return,Bind}
\vspace{-6ex}
\drules[SDCStep]{$ \ottnt{a}  \leadsto  \ottnt{a'} $}{Small step}
{Beta,BindBeta}
\vspace{-3ex}
\caption{Simple Dependency Calculus (Excerpt)}
\label{fig:min}
\label{fig:typing}
\end{figure}

\subsection{Type System}

The typing judgment has the form $ \Omega  \vdash \,  \ottnt{a}  \, :^{ \ell } \,  \ottnt{A} $ which means that ``$\ell$ is allowed to
observe $\ottnt{a}$'' or that ``$\ottnt{a}$ is visible at $\ell$''. Selected typing rules for SDC appear in Figure \ref{fig:typing}. Most rules are straightforward and propagate the level of the sub-terms to the expression. 

The \rref{SDC-Var} requires that the grade of the variable in the
context must be less than or equal to the grade of the observer. In other
words, an observer at level $\ell$ is allowed to use a variable from level
$k$ if and only if $k  \leq  \ell$.  If the variable's level is too high, then
this rule does not apply, ensuring that information can always flow to more
secure levels but never to less secure ones.  Abstraction
\rref{SDC-Abs} uses the current level of the expression for the newly
introduced variable in the context. This makes sense because the argument to
the function is checked at the same level in \rref{SDC-App}.

The modal type, introduced and eliminated with \rref{SDC-Return} and
\rref{SDC-Bind} respectively, manipulates the levels. The former says that, if a
term is $( \ell  \vee  \ell_{{\mathrm{0}}} )$-secure, then we can put it in an
$\ell_{{\mathrm{0}}}$-secure box and release it at level $\ell$. An $\ell_{{\mathrm{0}}}$-secure
boxed term can be unboxed only by someone who has security clearance for 
$\ell_{{\mathrm{0}}}$, as we see in the latter rule. The join operation in \rref{SDC-Bind}
ensures that $\ottnt{b}$ can depend on $\ottnt{a}$ only if $\ottnt{b}$ itself is
$\ell_{{\mathrm{0}}}$-secure or $\ell_{{\mathrm{0}}}  \leq  \ell$.

\subsection{Meta-theoretic Properties} \label{sec:sdc-metatheory}


This type system satisfies the following properties related to levels.

First, we can always weaken our assumptions about the variables in the
context.  If a term is derivable with an assumption held at some grade, then
that term is also derivable with that assumption held at any lower
grade. Below, for any two contexts $\Omega_{{\mathrm{1}}} , \Omega_{{\mathrm{2}}}$, we say that
$\Omega_{{\mathrm{1}}}  \leq  \Omega_{{\mathrm{2}}}$ iff they are the same modulo the grades and further, for
any $\ottmv{x}$, if $ \ottmv{x} \! :^{ \ell_{{\mathrm{1}}} }\! \ottnt{A}  \in  \Omega_{{\mathrm{1}}} $ and $ \ottmv{x} \! :^{ \ell_{{\mathrm{2}}} }\! \ottnt{A}  \in  \Omega_{{\mathrm{2}}} $, then
$\ell_{{\mathrm{1}}}  \leq  \ell_{{\mathrm{2}}}$.

\begin{lemma}[Narrowing]
If $ \Omega'  \vdash \,  \ottnt{a}  \, :^{ \ell } \,  \ottnt{A} $ and $\Omega  \leq  \Omega'$, then $ \Omega  \vdash \,  \ottnt{a}  \, :^{ \ell } \,  \ottnt{A} $.
\end{lemma}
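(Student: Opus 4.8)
The plan is a routine structural induction on the derivation of $\Omega' \vdash a :^\ell A$. The key observation is that the grades recorded in the context are consulted by exactly one typing rule, \rref{SDC-Var}, and there they appear only on the \emph{left} of an inequality $k \le \ell$; narrowing lowers those grades, so it can only make that side condition easier to discharge. Every other rule either ignores the context grades entirely or merely threads them through unchanged, so those cases are settled by re-applying the rule to the outputs of the induction hypotheses. Because the lemma is already stated for an arbitrary $\Omega \le \Omega'$, the induction hypothesis is available at every context below the relevant sub-context, which is exactly what the binder cases need.

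Concretely, in the \rref{SDC-Var} case the derivation supplies $x :^{k'} A \in \Omega'$ together with the side condition $k' \le \ell$. From $\Omega \le \Omega'$ we know the two contexts have the same shape modulo grades, hence $x :^{k} A \in \Omega$ for some grade $k$ with $k \le k'$; by transitivity of the lattice order $k \le \ell$, so \rref{SDC-Var} rederives $\Omega \vdash x :^\ell A$. In the binder cases \rref{SDC-Abs} and \rref{SDC-Bind} the body is typed in a context extended by a binding $x :^{k_0} A_0$ whose grade $k_0$ is determined by the rule (the current observer level in \rref{SDC-Abs}, and an expression in $\ell$ and the modal label in \rref{SDC-Bind}) and is independent of the ambient context. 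Since $\Omega \le \Omega'$ and the freshly added binding carries the identical grade on both sides, $\Omega, x :^{k_0} A_0 \le \Omega', x :^{k_0} A_0$, so the induction hypothesis applies to the body's sub-derivation; reassembling with the same rule finishes these cases. The cases \rref{SDC-App} and \rref{SDC-Return} touch neither the shape of the context nor the variable grades, so they follow immediately from the induction hypotheses on their premises.

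The only thing that needs care is bookkeeping around the definition of $\Omega \le \Omega'$: it must be fixed so that the two contexts have the same length, the same variable names, and the same types, with pointwise $\le$ on grades. With that definition in hand, the two facts used above---that a membership fact $x :^{k'} A \in \Omega'$ transports to some $x :^{k} A \in \Omega$ with $k \le k'$, and that context extension by a common binding is monotone---are immediate, and no further difficulty remains; in particular there is no appeal to the operational semantics or to a substitution lemma. Mechanically this is one of the lighter lemmas in the development.
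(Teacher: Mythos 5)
Your proposal is correct and matches the paper's own (very terse) proof: induction on the typing derivation, with the variable case discharged by transitivity of $\leq$ on the lattice and all other cases following directly from the induction hypotheses, since binder rules extend both contexts with identically graded bindings. Nothing further is needed.
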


Narrowing says that we can always downgrade any variable in the
context. Conversely, we cannot upgrade context variables in general, but we
can upgrade them to the level of the judgment.
\begin{lemma}[Restricted Upgrading] \label{pumping}
If $   \Omega_{{\mathrm{1}}} ,   \ottmv{x} \! :^{ \ell_{{\mathrm{0}}} }\! \ottnt{A}   ,  \Omega_{{\mathrm{2}}}   \vdash\,  \ottnt{b}  \, :^{ \ell }  \,  \ottnt{B} $ and $\ell_{{\mathrm{1}}}  \leq  \ell$, then $   \Omega_{{\mathrm{1}}} ,   \ottmv{x} \! :^{   \ell_{{\mathrm{0}}}  \vee  \ell_{{\mathrm{1}}}   }\! \ottnt{A}   ,  \Omega_{{\mathrm{2}}}   \vdash\,  \ottnt{b}  \, :^{ \ell }  \,  \ottnt{B} $.
\end{lemma}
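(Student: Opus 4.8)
The plan is to proceed by induction on the typing derivation $   \Omega_{{\mathrm{1}}} ,   \ottmv{x} \! :^{ \ell_{{\mathrm{0}}} }\! \ottnt{A}   ,  \Omega_{{\mathrm{2}}}   \vdash\,  \ottnt{b}  \, :^{ \ell }  \,  \ottnt{B} $, generalizing suitably over $\Omega_{{\mathrm{1}}}$, $\Omega_{{\mathrm{2}}}$, $\ottnt{b}$, $\ottnt{B}$ and the observer level $\ell$ so that the induction hypothesis can be applied to sub-derivations that type-check sub-terms at possibly different levels. The key point driving the proof is that in every rule the variable $\ottmv{x}$ appears in the context only through applications of \rref{SDC-Var}, and in any such leaf the side condition $k \leq \ell'$ (where $k$ is the recorded grade and $\ell'$ is the current observer level) is all that must be re-established after the upgrade.

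The main case is \rref{SDC-Var} applied to $\ottmv{x}$ itself. Here the derivation has conclusion $   \Omega_{{\mathrm{1}}} ,   \ottmv{x} \! :^{ \ell_{{\mathrm{0}}} }\! \ottnt{A}   ,  \Omega_{{\mathrm{2}}}   \vdash\,  \ottmv{x}  \, :^{ \ell }  \,  \ottnt{A} $ with side condition $\ell_{{\mathrm{0}}} \leq \ell$. After upgrading the grade of $\ottmv{x}$ to $\ell_{{\mathrm{0}}} \vee \ell_{{\mathrm{1}}}$, I must show $\ell_{{\mathrm{0}}} \vee \ell_{{\mathrm{1}}} \leq \ell$; this follows from $\ell_{{\mathrm{0}}} \leq \ell$ (given) and $\ell_{{\mathrm{1}}} \leq \ell$ (the hypothesis of the lemma) by the universal property of join. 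For \rref{SDC-Var} applied to some other variable $\ottmv{y} \neq \ottmv{x}$, the grade of $\ottmv{y}$ is unchanged and the side condition is untouched, so the case is immediate.

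The remaining cases are the structural rules. \rref{SDC-Abs} and \rref{SDC-Return} do not change the observer level of the premise, so the induction hypothesis applies directly (noting, for \rref{SDC-Abs}, that extending the context on the right preserves the shape $   \Omega_{{\mathrm{1}}} ,   \ottmv{x} \! :^{ \ell_{{\mathrm{0}}} }\! \ottnt{A}   ,  \Omega_{{\mathrm{2}}} $). \rref{SDC-App} and \rref{SDC-Bind} type their sub-terms at the same observer level $\ell$ (or, for \rref{SDC-Bind}, at $\ell$ in the continuation and $\ell \vee \ell_{{\mathrm{0}}}$ in the scrutinee); since $\ell_{{\mathrm{1}}} \leq \ell \leq \ell \vee \ell_{{\mathrm{0}}}$ by transitivity, the induction hypothesis is applicable to each premise with the same $\ell_{{\mathrm{1}}}$, and the upgraded derivations recombine under the same rule. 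I do not anticipate a serious obstacle here; the only thing to be careful about is stating the induction hypothesis with $\ell$ universally quantified (so it can be instantiated at the larger level $\ell \vee \ell_{{\mathrm{0}}}$ in the \rref{SDC-Bind} scrutinee premise) and keeping the hypothesis $\ell_{{\mathrm{1}}} \leq \ell$ in the form "$\ell_{{\mathrm{1}}}$ is below the current observer level," which is preserved by moving to any higher level via transitivity of $\leq$.
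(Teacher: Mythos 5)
Your proof is correct and follows the same route as the paper's (mechanized) argument: induction on the typing derivation, with the variable case for $\ottmv{x}$ discharged by the universal property of the join ($\ell_{{\mathrm{0}}}  \leq  \ell$ and $\ell_{{\mathrm{1}}}  \leq  \ell$ give $ \ell_{{\mathrm{0}}}  \vee  \ell_{{\mathrm{1}}}   \leq  \ell$), and all other cases closed by the induction hypothesis using transitivity of $ \leq $ whenever a premise is checked at a level above $\ell$. One small slip: it is \rref{SDC-Return} whose premise is checked at the raised level $ \ell  \vee  \ell_{{\mathrm{0}}} $, while \rref{SDC-Bind} checks the scrutinee at $\ell$ and only introduces the bound variable at grade $ \ell  \vee  \ell_{{\mathrm{0}}} $ — but your general observation that ``$\ell_{{\mathrm{1}}}$ below the current observer level'' is preserved when moving to any higher level covers both situations, so the argument goes through.
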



The restricted upgrading lemma is needed to show subsumption. Subsumption states that, if a term is visible at some grade, then it is also visible at all higher grades.

\begin{lemma}[Subsumption] \label{subusage}
If $ \Omega  \vdash \,  \ottnt{a}  \, :^{ \ell } \,  \ottnt{A} $ and $\ell  \leq  k$, then $ \Omega  \vdash \,  \ottnt{a}  \, :^{ k } \,  \ottnt{A} $.
\end{lemma}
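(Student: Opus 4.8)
The plan is to prove Subsumption by induction on the typing derivation $\Omega \vdash \, \ottnt{a} \, :^{ \ell } \, \ottnt{A}$, with the grade $k$ allowed to vary (so we are really proving ``for all $k \geq \ell$...''), using Restricted Upgrading (Lemma~\ref{pumping}) in the binder cases. For each rule, the goal is to rederive the same conclusion at grade $k$ while keeping the context $\Omega$ unchanged.

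First I would handle \rref{SDC-Var}. Here the premise gives $ \ottmv{x} \! :^{ k' }\! \ottnt{A}  \in  \Omega $ with $k' \leq \ell$; since $\ell \leq k$, transitivity of $ \leq $ gives $k' \leq k$, so \rref{SDC-Var} applies at grade $k$. Next, the rules that simply thread the ambient grade through their subterms --- \rref{SDC-App}, \rref{SDC-Return}, and \rref{SDC-Bind} --- are handled by applying the induction hypothesis to each premise at grade $k$ and reassembling with the same rule; for \rref{SDC-Return} and \rref{SDC-Bind} one must also check that the side conditions on joins still hold, which they do because $\ell \vee \ell_0 \leq k \vee \ell_0$ and $\ell_0 \leq \ell \leq k$ follow monotonically.

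The interesting case is \rref{SDC-Abs} (and any other rule that places the current grade on a newly bound variable). There the premise is $ \Omega ,  \ottmv{x} \! :^{ \ell }\! \ottnt{A}  \vdash \, \ottnt{a} \, :^{ \ell } \, \ottnt{B} $ and we need $ \Omega ,  \ottmv{x} \! :^{ k }\! \ottnt{A}  \vdash \, \ottnt{a} \, :^{ k } \, \ottnt{B} $. Applying the induction hypothesis to the premise (which is at grade $\ell$) gives the body typed at grade $k$ but still with $\ottmv{x}$ held at $\ell$ in the context: $ \Omega ,  \ottmv{x} \! :^{ \ell }\! \ottnt{A}  \vdash \, \ottnt{a} \, :^{ k } \, \ottnt{B} $. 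Now I would invoke Restricted Upgrading with $\ell_1 := k$: since $k \leq k$, we may upgrade $\ottmv{x}$'s grade from $\ell$ to $\ell \vee k = k$ (as $\ell \leq k$), yielding $ \Omega ,  \ottmv{x} \! :^{ k }\! \ottnt{A}  \vdash \, \ottnt{a} \, :^{ k } \, \ottnt{B} $, and then \rref{SDC-Abs} closes the case. The same pattern covers the binders in \rref{SDC-Bind} for the variable it introduces.

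The main obstacle is precisely this binder mismatch: the naive induction hypothesis changes only the judgment grade, not the context grades, so without Restricted Upgrading one gets stuck at \rref{SDC-Abs} with a context that is too weak to apply the rule. This is why Lemma~\ref{pumping} was proved first, and why the statement of that lemma is exactly tailored (upgrade by a join with something bounded by the judgment grade) to make this step go through. Everything else is a routine rule-by-rule check using monotonicity of $ \wedge $ and $ \vee $ and transitivity of $ \leq $ in the lattice of grades.
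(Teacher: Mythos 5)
Your proposal is correct and follows essentially the same route as the paper's proof: induction on the typing derivation, transitivity of $ \leq $ for \rref{SDC-Var}, monotonicity of the join for \rref{SDC-Return}, and Restricted Upgrading (Lemma~\ref{pumping}) to fix the binder grade in \rref{SDC-Abs} (and the bound variable of \rref{SDC-Bind}). Your diagnosis of the binder mismatch as the reason Lemma~\ref{pumping} must be proved first matches the paper's intent exactly.
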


Subsumption is necessary (along with a standard weakening lemma) to show that substitution holds for this language. For substitution, we need to ensure that the level of the variable matches up with that of the substituted expression.
\begin{lemma}[Substitution] \label{substitution}
If $    \Omega_{{\mathrm{1}}} ,   \ottmv{x} \! :^{ \ell_{{\mathrm{0}}} }\! \ottnt{A}    ,  \Omega_{{\mathrm{2}}}   \vdash \,  \ottnt{b}  \, :^{ \ell } \,  \ottnt{B} $ and $ \Omega_{{\mathrm{1}}}  \vdash \,  \ottnt{a}  \, :^{ \ell_{{\mathrm{0}}} } \,  \ottnt{A} $, then $  \Omega_{{\mathrm{1}}} ,  \Omega_{{\mathrm{2}}}   \vdash \,  \ottnt{b}  \ottsym{\{}  \ottnt{a}  \ottsym{/}  \ottmv{x}  \ottsym{\}}  \, :^{ \ell } \,  \ottnt{B} $.
\end{lemma}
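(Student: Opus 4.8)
The plan is to prove the Substitution Lemma by induction on the typing derivation $    \Omega_{{\mathrm{1}}} ,   \ottmv{x} \! :^{ \ell_{{\mathrm{0}}} }\! \ottnt{A}    ,  \Omega_{{\mathrm{2}}}   \vdash \,  \ottnt{b}  \, :^{ \ell } \,  \ottnt{B} $, generalizing over $\Omega_{{\mathrm{2}}}$ (and the observer level $\ell$ and type $\ottnt{B}$), while keeping $\Omega_{{\mathrm{1}}}$, $\ottmv{x}$, $\ell_{{\mathrm{0}}}$, and $\ottnt{A}$ fixed along with the given derivation of $ \Omega_{{\mathrm{1}}}  \vdash \,  \ottnt{a}  \, :^{ \ell_{{\mathrm{0}}} } \,  \ottnt{A} $. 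The interesting case is \rref{SDC-Var}: either the variable in question is $\ottmv{x}$ or it is some $\ottmv{y}$ recorded in $\Omega_{{\mathrm{1}}}$ or $\Omega_{{\mathrm{2}}}$. In the latter case, $ \ottnt{b}  \ottsym{\{}  \ottnt{a}  \ottsym{/}  \ottmv{x}  \ottsym{\}} = \ottmv{y}$ and we simply re-apply \rref{SDC-Var} in the context $  \Omega_{{\mathrm{1}}} ,  \Omega_{{\mathrm{2}}} $, using that $\ottmv{y}$'s grade is unchanged and the side condition $k  \leq  \ell$ is inherited. In the former case, $ \ottnt{b}  \ottsym{\{}  \ottnt{a}  \ottsym{/}  \ottmv{x}  \ottsym{\}} = \ottnt{a}$; the \rref{SDC-Var} hypothesis gives $\ell_{{\mathrm{0}}}  \leq  \ell$, so we take the given derivation $ \Omega_{{\mathrm{1}}}  \vdash \,  \ottnt{a}  \, :^{ \ell_{{\mathrm{0}}} } \,  \ottnt{A} $, apply Subsumption (Lemma~\ref{subusage}) to raise its observer level to $\ell$, and then apply Weakening to extend the context from $\Omega_{{\mathrm{1}}}$ to $  \Omega_{{\mathrm{1}}} ,  \Omega_{{\mathrm{2}}} $, yielding $  \Omega_{{\mathrm{1}}} ,  \Omega_{{\mathrm{2}}}   \vdash \,  \ottnt{a}  \, :^{ \ell } \,  \ottnt{A} $ as required.

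The remaining cases are structural and follow the inductive hypothesis essentially mechanically. For \rref{SDC-Abs} we extend $\Omega_{{\mathrm{2}}}$ with the bound variable at the current level $\ell$ and appeal to the IH with the enlarged $\Omega_{{\mathrm{2}}}$, being careful that the freshly-bound variable is distinct from $\ottmv{x}$ (a standard Barendregt-convention assumption) so that substitution commutes with going under the binder. \rref{SDC-App} splits into two sub-derivations, both at the same level $\ell$, to each of which the IH applies directly, and substitution distributes over application. \rref{SDC-Return} is handled by the IH at the adjusted level $ \ell  \vee  \ell_{{\mathrm{0}}} $: the hypothesis derivation checks the body at $ \ell  \vee  \ell_{{\mathrm{0}}} $, and since $\ell_{{\mathrm{0}}}  \leq  \ell  \vee  \ell_{{\mathrm{0}}} $ we can first weaken $ \Omega_{{\mathrm{1}}}  \vdash \,  \ottnt{a}  \, :^{ \ell_{{\mathrm{0}}} } \,  \ottnt{A} $ up to observer level $ \ell  \vee  \ell_{{\mathrm{0}}} $ via Subsumption before invoking the IH with that higher level. (Strictly, this means the statement should be read with the substituted term's derivation available at exactly the level of the variable being eliminated; the monadic rules are the place where an auxiliary Subsumption step is needed before the IH fires.) \rref{SDC-Bind} is similar: it has a sub-derivation for $\ottnt{a}$ at some level and a sub-derivation for $\ottnt{b}$ under an extra assumption at that level, so the IH applies to each after possibly raising the level of the substituted-term derivation, and substitution pushes through the $\ottkw{bind}$ form syntactically.

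The main obstacle I anticipate is bookkeeping around levels in the two monadic rules rather than any genuine difficulty: one must track that when the IH is invoked at a different observer level than $\ell_{{\mathrm{0}}}$ (as in \rref{SDC-Return}, where the body is checked at $ \ell  \vee  \ell_{{\mathrm{0}}} $, and in \rref{SDC-Bind}, where the continuation lives at the join), the hypothesis $ \Omega_{{\mathrm{1}}}  \vdash \,  \ottnt{a}  \, :^{ \ell_{{\mathrm{0}}} } \,  \ottnt{A} $ about the substituted term still suffices, which it does precisely because $\ell_{{\mathrm{0}}}$ sits below every level at which the variable $\ottmv{x}$ could be used in the derivation — that monotonicity is exactly what Subsumption delivers. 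A cleaner route, and the one I would actually take, is to prove a slightly stronger statement by mutual use of Weakening and Subsumption up front: establish that if $    \Omega_{{\mathrm{1}}} ,   \ottmv{x} \! :^{ \ell_{{\mathrm{0}}} }\! \ottnt{A}    ,  \Omega_{{\mathrm{2}}}   \vdash \,  \ottnt{b}  \, :^{ \ell } \,  \ottnt{B} $ then in any derivation $\ottmv{x}$ is only ever observed at levels $\geq \ell_{{\mathrm{0}}}$, so the single given derivation of $\ottnt{a}$, weakened to $  \Omega_{{\mathrm{1}}} ,  \Omega_{{\mathrm{2}}} $ and subsumed to whatever level is needed, plugs in uniformly. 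Either way the proof is routine once Weakening and Subsumption are in hand.
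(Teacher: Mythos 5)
Your proof is correct and follows essentially the same route as the paper's: induction on the typing derivation of $b$, with the variable case (when the variable is $x$) discharged by weakening and subsumption using the side condition $\ell_0 \leq \ell$, and all other cases by the inductive hypothesis. The only quibble is that the auxiliary subsumption step you insert before the IH in the Return/Bind cases is unnecessary (and, read literally, would mismatch the IH): the hypothesis about $a$ is pinned to the grade $\ell_0$ of $x$ in the \emph{context}, which is unchanged even when the observer level of the sub-derivation rises to $\ell \vee \ell_0$, so the induction hypothesis applies verbatim there.
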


SDC terms are reduced using a call-by-name strategy. An excerpt of the small-step semantics appears in Figure \ref{fig:typing}. Note how the labels on the introduction form and the corresponding elimination form match up in \rref{SDCStep-Beta,SDCStep-BindBeta}. Further, note that we could have also used a call-by-value strategy to reduce SDC terms; we chose a call-by-name strategy because our development is motivated by potential applications in Haskell. 

For a call-by-name operational semantics, the above lemmas allow us to prove, a standard progress and preservation based type soundness result, which we omit here.  

Next, we show that our type system is secure by proving non-interference.

\subsection{A Syntactic Proof of Non-interference} 
\label{sec:geq}

When users with low-security clearance are oblivious to high-security data, we say that the system enjoys \emph{non-interference}. Non-interference results from level-specific views of the world. The values $ \eta^{  { \color{black}{H} }  }\;  \ottkw{True} $ and $ \eta^{  { \color{black}{H} }  }\;  \ottkw{False} $ appear the same to an $ { \color{black}{L} } $-user while an $ { \color{black}{H} } $-user can differentiate between them. To capture this notion of a level-specific view, we design an indexed equivalence relation on open terms, $\sim_{\ell}$, called \textit{indexed indistinguishability}, and shown in Figure \ref{fig:guarded-equality}. To define this relation, we need the labels of the variables in the context but not their types. So, we use grade-only contexts $\Phi$, defined as $\Phi ::= \varnothing \mid  \Phi  ,   \ottmv{x}  \! :  \ell  $. These contexts are like graded contexts $\Omega$ without the type information on variables, also denoted by $ |  \Omega  | $.  

\begin{figure}
\centering
\drules[SGEq]{$ \Phi  \vdash  \ottnt{a}  \sim_{ \ell }  \ottnt{b} $}{Indexed Indistinguishability}
{Var,Abs,App,Return,Bind}
\boxed{ \Phi  \vdash^{ \ell_{{\mathrm{0}}} }_{ \ell }  \ottnt{a_{{\mathrm{1}}}}  \sim  \ottnt{a_{{\mathrm{2}}}} }
\[ \drule{SEq-Leq} \qquad \drule{SEq-Nleq} \]
\caption{Indexed indistiguishability for SDC (Excerpt)}
\label{fig:guarded-equality}
\end{figure} 

Informally, $\Phi \vdash \ottnt{a} \sim_{\ell} \ottnt{b}$ means that $\ottnt{a}$ and $\ottnt{b}$ appear the same to an $\ell$-user. For example, $  \eta^{  { \color{black}{H} }  }\;  \ottkw{True}   \sim_{  { \color{black}{L} }  }   \eta^{  { \color{black}{H} }  }\;  \ottkw{False}  $ but $\neg(  \eta^{  { \color{black}{H} }  }\;  \ottkw{True}   \sim_{  { \color{black}{H} }  }   \eta^{  { \color{black}{H} }  }\;  \ottkw{False}  )$. We define this relation $\sim_{\ell}$ by structural induction on terms. We think of terms as ASTs annotated at various nodes with labels, say $\ell_{{\mathrm{0}}}$, that determine whether an observer $\ell$ is allowed to look at the corresponding sub-tree. If $\ell_{{\mathrm{0}}}  \leq  \ell$, then observer $\ell$ can start exploring the sub-tree; otherwise the entire sub-tree appears as a blob. So we can also read $\Phi \vdash \ottnt{a} \sim_{\ell} \ottnt{b}$ as: ``$\ottnt{a}$ is syntactically equal to $\ottnt{b}$ at all parts of the terms marked with any label $\ell_{{\mathrm{0}}}$, where $\ell_{{\mathrm{0}}}  \leq  \ell$, but may be 
arbitrarily different elsewhere.''


Note the \rref{SGEq-Return} in Figure \ref{fig:guarded-equality}. It uses an auxiliary relation, $ \Phi  \vdash^{ \ell_{{\mathrm{0}}} }_{ \ell }  \ottnt{a_{{\mathrm{1}}}}  \sim  \ottnt{a_{{\mathrm{2}}}} $. This auxiliary \textit{extended equivalence} relation $ \Phi  \vdash^{ \ell_{{\mathrm{0}}} }_{ \ell }  \ottnt{a_{{\mathrm{1}}}}  \sim  \ottnt{a_{{\mathrm{2}}}} $ formalizes the idea discussed above: if $\ell_{{\mathrm{0}}}  \leq  \ell$, then $\ottnt{a_{{\mathrm{1}}}}$ and $\ottnt{a_{{\mathrm{2}}}}$ must be indistinguishable at $\ell$; otherwise, they may be arbitrary terms. 

Now, we explore some properties of the indistinguishability relation.\footnote{Because this relation is untyped, its analogue for DDC is similar. For each lemma below, we include a reference to the location in the Coq development where it may be found for the dependent system.} \\ If we remove the second component from an indistinguishability relation, $ \Phi  \vdash  \ottnt{a}  \sim_{ \ell }  \ottnt{b} $, we get a new judgment, $ \Phi  \vdash  \ottnt{a}  :  \ell $, called grading judgment. Now, corresponding to every indistinguishability rule, we define a grading rule where the indistinguishability judgments have been replaced with their grading counterparts. Terms derived using these grading rules are called well-graded. We can show that well-typed terms are well-graded. 

\begin{lemma}[Typing implies grading] \label{typing_grading}
If $ \Omega  \vdash\,  \ottnt{a}  \, :^{ \ell }  \,  \ottnt{A} $ then $  |  \Omega  |   \vdash  \ottnt{a}  :  \ell $.
\end{lemma}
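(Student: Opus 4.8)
The plan is to proceed by a straightforward induction on the derivation of $ \Omega  \vdash\,  \ottnt{a}  \, :^{ \ell }  \,  \ottnt{A} $, using the fact that the grading judgment $ \Phi  \vdash  \ottnt{a}  :  \ell $ is, by construction, the image of the typing judgment under erasure of all type information: every grading rule is the type-erased analogue of the corresponding indistinguishability rule (\rref{SGEq-Var}, \rref{SGEq-Abs}, \rref{SGEq-App}, \rref{SGEq-Return}, \rref{SGEq-Bind}, together with the rules for unit, products, and sums from the full system), and those in turn retain exactly the grade-level side conditions of the typing rules. So the induction reduces to checking, rule by rule, that the grade conditions carry over unchanged. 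Two small pieces of bookkeeping are needed first: that the erasure commutes with context extension, i.e.\ $ |   \Omega_{{\mathrm{1}}} ,   \ottmv{x} \! :^{ \ell_{{\mathrm{0}}} }\! \ottnt{A}   ,  \Omega_{{\mathrm{2}}}  |  =  |  \Omega_{{\mathrm{1}}}  |  ,   \ottmv{x}  \! :  \ell_{{\mathrm{0}}}  ,  |  \Omega_{{\mathrm{2}}}  | $ (immediate from the definition of $ |  \Omega  | $), and that each grade-level premise of a typing rule appears verbatim in the corresponding grading rule (true by construction).

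The cases then go as follows. For \rref{SDC-Var}, we have $ \ottmv{x} \! :^{ k }\! \ottnt{A}  \in  \Omega $ with $k  \leq  \ell$; since $ |  \Omega  | $ contains $ \ottmv{x}  \! :  k $ and $k  \leq  \ell$ is precisely the side condition of the grading variable rule, we are done. For \rref{SDC-Abs}, the bound variable is added at the current grade $\ell$ in both systems, so the induction hypothesis on the body yields $  |  \Omega  |  ,   \ottmv{x}  \! :  \ell    \vdash  \ottnt{a}  :  \ell $, and dropping the annotation gives $  |  \Omega  |   \vdash   \lambda \ottmv{x} \!:\! \ottnt{A} . \ottnt{a}   :  \ell $. \rref{SDC-App} is immediate from the two induction hypotheses, since both sub-terms are checked at $\ell$ in both systems. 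For \rref{SDC-Return} the premise requires $ \Omega  \vdash\,  \ottnt{a}  \, :^{  \ell  \vee  \ell_{{\mathrm{0}}}  }  \,  \ottnt{A} $, and the grading Return rule has the same join $ \ell  \vee  \ell_{{\mathrm{0}}} $ in its premise, so the induction hypothesis transfers directly; similarly \rref{SDC-Bind} places the bound variable at grade $ \ell  \vee  \ell_{{\mathrm{0}}} $ in both systems and checks the scrutinee at $\ell$, so again the induction hypotheses suffice. The remaining cases (unit, products, sums) just propagate $\ell$ to the sub-terms with no additional grade conditions, and are discharged by the induction hypotheses.

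I do not expect a genuine obstacle here: once the grading system has been set up as the type-erased shadow of typing, the lemma is essentially a definitional unfolding, and the only thing that demands attention is confirming that the grade arithmetic in \rref{SDC-Return} and \rref{SDC-Bind} (the joins $ \ell  \vee  \ell_{{\mathrm{0}}} $) is reproduced exactly in the grading rules rather than, say, being weakened to a $ \leq $-inequality. The value of the lemma is infrastructural: it lets us transport the (untyped) metatheory of indexed indistinguishability developed in Section~\ref{sec:geq} to well-typed SDC terms, and — via the analogous statement flagged for the Coq development — to DDC terms as well, which is what is ultimately needed for the non-interference argument.
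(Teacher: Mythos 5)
Your proof is correct and follows the same route as the paper's (mechanized) argument: a direct induction on the typing derivation, matching each typing rule against its grading counterpart. The only imprecision is in the \rref{SDC-Return} case, where the grading premise is not literally ``$\ottnt{a}$ graded at $ \ell  \vee  \ell_{{\mathrm{0}}} $'' but the conditional judgment mirroring \rref{SEq-Leq,SEq-Nleq}; this is harmless, since when $\ell_{{\mathrm{0}}}  \leq  \ell$ the join collapses to $\ell$ and the induction hypothesis discharges the \rref{CG-Leq}-style premise, and otherwise the condition is vacuous.
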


\begin{lemma}[Equivalence] \label{ind_equivalence}
Indexed indistinguishability at $\ell$ is an equivalence relation on well-graded terms at $\ell$.
\end{lemma}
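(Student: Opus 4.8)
The plan is to show that the relation $\sim_\ell$ (together with its companion extended relation $\vdash^{\ell_0}_\ell$) is reflexive, symmetric, and transitive when restricted to well-graded terms at level $\ell$. Since the indistinguishability judgment is untyped and defined by structural induction on terms via rules that mirror the grading rules, each property will follow by a straightforward induction; the only subtlety is that reflexivity must be stated relative to well-gradedness, because the rules \rref{SGEq-Var} etc.\ carry side conditions on grades (e.g.\ $k \le \ell$ in the variable rule) that only hold for terms that are themselves well-graded at $\ell$.

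First I would prove \emph{reflexivity}: if $\Phi \vdash \ottnt{a} : \ell$ then $\Phi \vdash \ottnt{a} \sim_\ell \ottnt{a}$, by induction on the grading derivation. Each grading rule has a matching indistinguishability rule with the same premises, so the inductive hypotheses supply exactly what is needed; for \rref{SGEq-Return} one also needs reflexivity of the extended relation $\Phi \vdash^{\ell_0}_\ell \ottnt{a} \sim \ottnt{a}$, which splits on whether $\ell_0 \le \ell$: in the $\le$ case invoke the main inductive hypothesis, in the $\not\le$ case the rule \rref{SEq-Nleq} applies unconditionally. Next, \emph{symmetry}: if $\Phi \vdash \ottnt{a} \sim_\ell \ottnt{b}$ then $\Phi \vdash \ottnt{b} \sim_\ell \ottnt{a}$, by induction on the indistinguishability derivation, again with a simultaneous statement for the extended relation; every rule is visibly symmetric in $\ottnt{a}$ and $\ottnt{b}$, so each case just reassembles the rule after applying the inductive hypotheses to the premises.

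The main work is \emph{transitivity}: if $\Phi \vdash \ottnt{a} \sim_\ell \ottnt{b}$ and $\Phi \vdash \ottnt{b} \sim_\ell \ottnt{c}$ then $\Phi \vdash \ottnt{a} \sim_\ell \ottnt{c}$, proved by simultaneous induction on the two derivations (equivalently, induction on $\ottnt{b}$, which is the common middle term, using the fact that the relation is syntax-directed so both derivations end in the rule determined by the head constructor of $\ottnt{b}$). For each constructor the two derivations must conclude with the same rule, so we can pair up their premises and apply the inductive hypothesis componentwise, then reapply the rule. The interesting case is \rref{SGEq-Return}/\rref{SEq-Leq}/\rref{SEq-Nleq}: from $\Phi \vdash^{\ell_0}_\ell \ottnt{a}_1 \sim \ottnt{a}_2$ and $\Phi \vdash^{\ell_0}_\ell \ottnt{a}_2 \sim \ottnt{a}_3$ we must derive $\Phi \vdash^{\ell_0}_\ell \ottnt{a}_1 \sim \ottnt{a}_3$; if $\ell_0 \le \ell$ both were obtained by \rref{SEq-Leq} and we appeal to transitivity of $\sim_\ell$ (the main inductive hypothesis, at a structurally smaller term), while if $\ell_0 \not\le \ell$ the conclusion follows from \rref{SEq-Nleq} with no hypotheses needed.

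The step I expect to be the main obstacle is getting the induction for transitivity to go through cleanly: because $\sim_\ell$ and $\vdash^{\ell_0}_\ell$ are mutually referential, the induction must be set up as a single simultaneous statement over both relations, and one must check that in every rule the recursive premises are on structurally smaller terms so that the appeal to the (stronger, transitivity) inductive hypothesis is well-founded — in particular in \rref{SGEq-Return} the extended-relation premise is on a strict subterm of $\eta^{\ell_0}\;\ottnt{a}$, and its $\le$ sub-case hands control back to $\sim_\ell$ on that same strict subterm, so the recursion is decreasing. Restricting to well-graded terms is what guarantees the side conditions line up; with that caveat in place, none of the individual cases is hard. I would also remark (as the footnote promises) that the identical argument, relation-for-relation, carries over verbatim to DDC since the relation is untyped.
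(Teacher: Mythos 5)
Your proposal is correct and matches the paper's approach: the paper delegates this lemma to the Coq development, whose lemma names (\texttt{GEq\_refl}, \texttt{GEq\_symmetry}, \texttt{GEq\_trans}) reflect exactly your decomposition --- reflexivity by induction on the grading derivation (this is where well-gradedness is needed), symmetry and transitivity by induction on the indistinguishability derivations, with the extended relation handled mutually and the $\ell_0 \not\leq \ell$ case of \rref{SEq-Nleq} discharging trivially.
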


The above lemma shows that indistinguishability is an equivalence relation. Observe that at the highest element of the lattice, $ { \color{black}{\top} } $, this equivalence degenerates to the identity relation.\\
Indistinguishability is closed under extended equivalence. The following is like a substitution lemma for the relation.

\begin{lemma}[Indistinguishability under substitution] 
  If $  \Phi  ,   \ottmv{x}  \! :  \ell    \vdash  \ottnt{b_{{\mathrm{1}}}}  \sim_{ k }  \ottnt{b_{{\mathrm{2}}}} $ and $ \Phi  \vdash^{ \ell }_{ k }  \ottnt{a_{{\mathrm{1}}}}  \sim  \ottnt{a_{{\mathrm{2}}}} $ then $ \Phi  \vdash  \ottnt{b_{{\mathrm{1}}}}  \ottsym{\{}  \ottnt{a_{{\mathrm{1}}}}  \ottsym{/}  \ottmv{x}  \ottsym{\}}  \sim_{ k }  \ottnt{b_{{\mathrm{2}}}}  \ottsym{\{}  \ottnt{a_{{\mathrm{2}}}}  \ottsym{/}  \ottmv{x}  \ottsym{\}} $.
\end{lemma}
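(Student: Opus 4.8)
The plan is to prove the lemma by \emph{simultaneous} induction on the derivations of the two indexed relations. Alongside the stated claim for $\sim_k$, I would prove the companion fact for extended equivalence: if $\Phi, x\!:\!\ell \vdash^{\ell_0}_{k} b_1 \sim b_2$ and $\Phi \vdash^{\ell}_{k} a_1 \sim a_2$, then $\Phi \vdash^{\ell_0}_{k} b_1\{a_1/x\} \sim b_2\{a_2/x\}$. The two statements must be proved together because \rref{SGEq-Return} has an extended-equivalence premise and the two judgments are defined by mutual recursion. Since these relations are untyped, only the grades recorded in $\Phi$ matter, and I will assume throughout the usual convention that bound variables are chosen distinct from $x$ and from the free variables of $a_1$ and $a_2$, so that substitution commutes with the binders.

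The variable case is the heart of the argument. In \rref{SGEq-Var} we have $b_1 = b_2 = y$ for some variable $y$ with $y\!:\!k' \in (\Phi, x\!:\!\ell)$ and $k' \leq k$. If $y \neq x$, substitution does nothing on either side and $y\!:\!k' \in \Phi$ still holds, so $\Phi \vdash y \sim_k y$ by the same rule. If $y = x$, then $k' = \ell$, hence $\ell \leq k$; substitution yields $a_1$ on the left and $a_2$ on the right, and we must show $\Phi \vdash a_1 \sim_k a_2$. This is exactly what the hypothesis $\Phi \vdash^{\ell}_{k} a_1 \sim a_2$ provides after inversion: since $\ell \leq k$, that derivation cannot have used \rref{SEq-Nleq}, so it was obtained by \rref{SEq-Leq} from a derivation of $\Phi \vdash a_1 \sim_k a_2$.

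The remaining congruence cases (\rref{SGEq-Abs,SGEq-App,SGEq-Bind}) follow by pushing the substitutions inward and applying the induction hypotheses to the sub-derivations; for \rref{SGEq-App} this is immediate. For the binder cases the sub-derivation lives in an extended context ($\Phi, x\!:\!\ell, y\!:\!k$ and $\Phi, x\!:\!\ell, y\!:\!(k \vee \ell_0)$ respectively, matching \rref{SDC-Abs} and \rref{SDC-Bind}), so before invoking the induction hypothesis one first weakens the hypothesis about $a_1, a_2$ to the larger context; this relies on a routine weakening lemma for $\sim_k$ and $\vdash^{\ell_0}_{k}$ that I would establish beforehand with the other basic properties of indistinguishability. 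I would also state the lemma with the substituted variable allowed at an arbitrary position in the context (equivalently, prove exchange), so that the induction hypotheses apply underneath these binders. For the extended-equivalence half, \rref{SEq-Leq} reduces directly to the $\sim_k$ induction hypothesis followed by re-applying \rref{SEq-Leq}, while \rref{SEq-Nleq} is immediate since $\ell_0 \not\leq k$ still holds and that rule imposes no constraint on the related terms. The only genuinely delicate point is the variable case above; the rest is structural bookkeeping, modulo having the weakening and exchange infrastructure in place to descend through binders.
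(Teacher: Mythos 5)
Your proposal is correct and matches the paper's (mechanized) argument: the Coq lemma \texttt{CEq\_GEq\_equality\_substitution} is exactly the mutual statement over the indistinguishability and extended-equivalence judgments that you propose, proved by simultaneous induction with the variable case resolved by inverting the conditional hypothesis $ \Phi  \vdash^{ \ell }_{ k }  \ottnt{a_{{\mathrm{1}}}}  \sim  \ottnt{a_{{\mathrm{2}}}} $. Your handling of the binder cases via weakening/exchange on the untyped grade context is the same routine bookkeeping the formal development performs.
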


With regard to the above lemma, consider the situation when $ \neg  \ottsym{(}  \ell  \leq  k  \ottsym{)} $, for example, when $\ell =  { \color{black}{H} } $ and $k =  { \color{black}{L} } $. In such a situation, for any two terms $\ottnt{a_{{\mathrm{1}}}}$ and $\ottnt{a_{{\mathrm{2}}}}$, if $  \Phi  ,   \ottmv{x}  \! :  \ell    \vdash  \ottnt{b_{{\mathrm{1}}}}  \sim_{ k }  \ottnt{b_{{\mathrm{2}}}} $, then $ \Phi  \vdash  \ottnt{b_{{\mathrm{1}}}}  \ottsym{\{}  \ottnt{a_{{\mathrm{1}}}}  \ottsym{/}  \ottmv{x}  \ottsym{\}}  \sim_{ k }  \ottnt{b_{{\mathrm{2}}}}  \ottsym{\{}  \ottnt{a_{{\mathrm{2}}}}  \ottsym{/}  \ottmv{x}  \ottsym{\}} $. Let us work out a concrete example. For a typing derivation $  \ottmv{x} \! :^{  { \color{black}{H} }  }\! \ottnt{A}   \vdash\,  \ottnt{b}  \, :^{  { \color{black}{L} }  }  \,  \ottkw{Bool} $, we have, by lemmas \ref{typing_grading} and \ref{ind_equivalence}, $  \ottmv{x}  \! :   { \color{black}{H} }    \vdash  \ottnt{b}  \sim_{  { \color{black}{L} }  }  \ottnt{b} $. Then, $ \varnothing  \vdash  \ottnt{b}  \ottsym{\{}  \ottnt{a_{{\mathrm{1}}}}  \ottsym{/}  \ottmv{x}  \ottsym{\}}  \sim_{  { \color{black}{L} }  }  \ottnt{b}  \ottsym{\{}  \ottnt{a_{{\mathrm{2}}}}  \ottsym{/}  \ottmv{x}  \ottsym{\}} $. This is almost non-interference in action. What's left to show is that the indistinguishability relation respects the small step semantics, written $ \ottnt{a_{{\mathrm{1}}}}  \leadsto  \ottnt{a_{{\mathrm{2}}}} $. The small-step relation is standard call-by-name reduction. 

\begin{theorem}[Non-interference]
\label{lemma:GEq_respects_Step}
  If $ \Phi  \vdash  \ottnt{a_{{\mathrm{1}}}}  \sim_{ k }  \ottnt{a'_{{\mathrm{1}}}} $ and $ \ottnt{a_{{\mathrm{1}}}}  \leadsto  \ottnt{a_{{\mathrm{2}}}} $ then there exists some $\ottnt{a'_{{\mathrm{2}}}}$ such that $ \ottnt{a'_{{\mathrm{1}}}}  \leadsto  \ottnt{a'_{{\mathrm{2}}}} $ and $ \Phi  \vdash  \ottnt{a_{{\mathrm{2}}}}  \sim_{ k }  \ottnt{a'_{{\mathrm{2}}}} $.
\end{theorem}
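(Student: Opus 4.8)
The plan is to prove the theorem by induction on the derivation of $a_1 \leadsto a_2$, using the hypothesis $\Phi \vdash a_1 \sim_k a'_1$ only through \emph{inversion}. The structural observation that makes this work is that, apart from the contents of an $\eta^{\ell_0}$-box whose label $\ell_0$ is not below $k$, indexed indistinguishability matches terms node for node: from $\Phi \vdash a_1 \sim_k a'_1$ one can always read off the top-level shape of $a'_1$ together with indistinguishability derivations for the corresponding immediate subterms. I would first record this as a small inversion lemma, with one case per term former. Since the call-by-name step relation pairs each redex rule with a matching construct, the simulation is lock-step: one step of $a_1$ is matched by exactly one step of $a'_1$.

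For the congruence rules — reducing the head of an application and the scrutinee of a $\ottkw{bind}$ — the argument is routine. Say $a_1 = c\;b$ with $c \leadsto c''$ (so $a_2 = c''\;b$); inversion gives $a'_1 = c'\;b'$ with $\Phi \vdash c \sim_k c'$ and $\Phi \vdash b \sim_k b'$. Applying the induction hypothesis to the reducing subterm yields $c' \leadsto c'''$ with $\Phi \vdash c'' \sim_k c'''$; firing the same congruence rule on $a'_1$ gives $a'_1 \leadsto c'''\;b'$, and \rref{SGEq-App} rebuilds $\Phi \vdash c''\;b \sim_k c'''\;b'$. The $\ottkw{bind}$ congruence case is identical, using \rref{SGEq-Bind}.

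The interesting cases are the two $\beta$-rules, and both reduce to the \emph{Indistinguishability under substitution} lemma. For \rref{SDCStep-Beta}, $a_1 = (\lambda x{:}A.b)\;c$; inverting \rref{SGEq-App} and then \rref{SGEq-Abs} gives $a'_1 = (\lambda x{:}A'.b')\;c'$ with $\Phi, x\!:\!k \vdash b \sim_k b'$ and $\Phi \vdash c \sim_k c'$, so $a'_1 \leadsto b'\{c'/x\}$ by \rref{SDCStep-Beta}; since $k \leq k$ we have $\Phi \vdash^{k}_{k} c \sim c'$ by \rref{SEq-Leq}, and the substitution lemma gives $\Phi \vdash b\{c/x\} \sim_k b'\{c'/x\}$. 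For \rref{SDCStep-BindBeta}, $a_1 = \ottkw{bind}^{\ell_0}\,x = \eta^{\ell_0}\;c\,\ottkw{in}\,b$; inverting \rref{SGEq-Bind} and then \rref{SGEq-Return} gives $a'_1 = \ottkw{bind}^{\ell_0}\,x = \eta^{\ell_0}\;c'\,\ottkw{in}\,b'$ with $\Phi, x\!:\!(k\vee\ell_0) \vdash b \sim_k b'$ and $\Phi \vdash^{\ell_0}_{k} c \sim c'$, so $a'_1 \leadsto b'\{c'/x\}$; and since $\ell_0 \leq k$ iff $(k\vee\ell_0) \leq k$, that extended-equivalence hypothesis is equally a derivation of $\Phi \vdash^{k\vee\ell_0}_{k} c \sim c'$, so the substitution lemma again yields $\Phi \vdash b\{c/x\} \sim_k b'\{c'/x\}$.

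The main obstacle — and the only place the grading machinery genuinely does work — is lining up the labels across these last two steps: the grade attached to the bound variable ($k$ for $\lambda$, but $k\vee\ell_0$ for $\ottkw{bind}$) must match the index of the extended-equivalence judgment that the substitution lemma consumes, and inverting \rref{SGEq-Return} must hand back an extended equivalence at index $\ell_0$ rather than a bare indistinguishability at the wrong level. Once two bridging facts are in hand — $\Phi \vdash a \sim_k a'$ implies $\Phi \vdash^{\ell_0}_{k} a \sim a'$ for any $\ell_0$, and $\Phi \vdash^{\ell_0}_{k} a \sim a'$ is interchangeable with $\Phi \vdash^{k\vee\ell_0}_{k} a \sim a'$, both immediate from the two rules \rref{SEq-Leq} and \rref{SEq-Nleq} — the remainder is bookkeeping. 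The full system adds the analogous congruence and $\beta$ rules for units, products and sums, but these follow the same pattern and introduce no new difficulty.
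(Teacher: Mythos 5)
Your proposal is correct and follows essentially the same route as the paper's (mechanized) proof: induction on the step derivation, syntax-directed inversion of the indistinguishability hypothesis to match the redex shape on the other side, and the Indistinguishability-under-substitution lemma to discharge the two $\beta$ cases, with the lattice fact $k \vee \ell_0 \leq k \iff \ell_0 \leq k$ bridging the grade on the bound variable with the index of the extended-equivalence premise from \rref{SGEq-Return}. Nothing further is needed.
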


Since the step relation is deterministic, in the above lemma, there is exactly one such $\ottnt{a'_{{\mathrm{2}}}}$ that $\ottnt{a'_{{\mathrm{1}}}}$ steps to. Now, going back to our last example, we see that $\ottnt{b}  \ottsym{\{}  \ottnt{a_{{\mathrm{1}}}}  \ottsym{/}  \ottmv{x}  \ottsym{\}}$ and $\ottnt{b}  \ottsym{\{}  \ottnt{a_{{\mathrm{2}}}}  \ottsym{/}  \ottmv{x}  \ottsym{\}}$ take steps in tandem and they are $ { \color{black}{L} } $-indistinguishable after each and every step. Since the language itself is terminating, both the terms reduce to boolean values, values that are themselves $ { \color{black}{L} } $-indistinguishable as well. But the indistinguishability for boolean values is just the identity relation. This means that $\ottnt{b}  \ottsym{\{}  \ottnt{a_{{\mathrm{1}}}}  \ottsym{/}  \ottmv{x}  \ottsym{\}}$ and $\ottnt{b}  \ottsym{\{}  \ottnt{a_{{\mathrm{2}}}}  \ottsym{/}  \ottmv{x}  \ottsym{\}}$ reduce to the same value.

The indistinguishability relation gives us a syntactic method of proving non-interference for programs derived in SDC. Essentially, we show that a user with low-security clearance cannot distinguish between high security values just by observing program behavior.\\ Next, we show that SDC is no less expressive than the terminating fragment of DCC.

\subsection{Relation with Sealing Calculus and Dependency Core Calculus}

SDC is extremely similar to the sealing calculus $\lambda^{[]}$ of
\citet{igarashi}. Like SDC, $\lambda^{[]}$ has a label on the typing
judgment.\footnote{Note that our labels correspond to observer levels of
\citet{igarashi}, which can be viewed as a lattice.} But unlike SDC, $\lambda^{[]}$ uses
standard ungraded typing contexts $\Gamma$. Both the calculi have the same types. As far as terms are concerned, there is only one difference. The sealing calculus has an $\ottkw{unseal}$ term whereas SDC uses
$\ottkw{bind}$. We present the rules for sealing and unsealing terms in $\lambda^{[]}$ below.\footnote{We take the liberty of making small cosmetic changes in the presentation.}

\[ \drule{Sealing-Seal}\drule{Sealing-Unseal} \]

\citet{igarashi} have shown that $\lambda^{[]}$ is equivalent to
$\text{DCC}_{\text{pc}}$, an extension of the terminating fragment of
DCC. Therefore, we compare SDC to DCC by simulating $\lambda^{[]}$ in SDC.
For this, we define a translation $\bar{\cdot}$, from $\lambda^{[]}$ to SDC. Most of the cases are handled inductively in a straightforward manner. For $\ottkw{unseal}$, we have, $ \overline{   \ottkw{unseal}^{ \ell }  \ottnt{a}   }  :=  \ottkw{bind} ^{ \ell } \,  \ottmv{x}  =    \overline{ \ottnt{a} }    \,  \ottkw{in}  \,  \ottmv{x} $.

With this translation, we can give a forward and a backward simulation
connecting the two languages. The reduction relation $\leadsto$ below is
full reduction for both the languages, the reduction strategy used by \citet{igarashi} for $\lambda^{[]}$. Full reduction is a non-deterministic reduction strategy whereby a $\beta$-redex in any sub-term may be reduced.

\begin{theorem}[Forward Simulation]
If $ \ottnt{a}  \leadsto  \ottnt{a'} $ in $\lambda^{[]}$, then $  \overline{ \ottnt{a} }   \leadsto   \overline{ \ottnt{a'} }  $ in SDC.
\end{theorem}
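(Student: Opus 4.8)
The plan is to argue by induction on the derivation of $a \leadsto a'$ in $\lambda^{[]}$, matching each $\lambda^{[]}$ step with \emph{exactly one} SDC step, so the conclusion holds for single-step $\leadsto$ and not merely for its reflexive-transitive closure. The one substantive ingredient is a substitution lemma for the translation: for all $\lambda^{[]}$ terms $b,c$ and variables $x$ we have $\overline{b\{c/x\}} = \overline{b}\{\overline{c}/x\}$. This goes through by a routine structural induction on $b$; since types are inert in a simply-typed setting, only term positions are touched. The only clause that is not purely mechanical is $b = \ottkw{unseal}^{\ell}\, b_0$: the translation of $\ottkw{unseal}$ introduces a bound variable $y$, which we take fresh for $b_0$, $c$ and $x$, and then $\overline{b\{c/x\}} = \ottkw{bind}^{\ell}\, y = \overline{b_0\{c/x\}}\ \ottkw{in}\ y$, which by the induction hypothesis is $\ottkw{bind}^{\ell}\, y = \overline{b_0}\{\overline{c}/x\}\ \ottkw{in}\ y = \overline{b}\{\overline{c}/x\}$. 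I will also use, without comment, that SDC full reduction is the congruence closure of the contractions \rref{SDCStep-Beta} and \rref{SDCStep-BindBeta}, hence closed under every term constructor.

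In the induction, the congruence steps are uniform: the contracted redex sits inside a proper subterm, and since $\overline{\cdot}$ is defined compositionally over the enclosing constructor --- in the $\ottkw{unseal}$ case, the argument $a$ of $\ottkw{unseal}^{\ell}\, a$ becomes the $\overline{a}$ occurrence of $\ottkw{bind}^{\ell}\, y = \overline{a}\ \ottkw{in}\ y$ --- the induction hypothesis supplies a single SDC step on the image of that subterm, which closure under the corresponding SDC constructor lifts to the required single step on $\overline{a}$. Two base cases remain. Ordinary application $(\lambda x{:}A.\,b)\,c \leadsto b\{c/x\}$ translates to $(\lambda x{:}\overline{A}.\,\overline{b})\,\overline{c}$, which takes one \rref{SDCStep-Beta} step to $\overline{b}\{\overline{c}/x\}$, equal to $\overline{b\{c/x\}}$ by the substitution lemma. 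The unseal contraction $\ottkw{unseal}^{\ell}\,(\eta^{\ell}\, a) \leadsto a$ translates to $\ottkw{bind}^{\ell}\, y = \eta^{\ell}\,\overline{a}\ \ottkw{in}\ y$ --- using that the $\eta$-style seal introduction, which SDC shares, translates by the evident clause $\overline{\eta^{\ell}\, a} = \eta^{\ell}\,\overline{a}$ --- and this takes one \rref{SDCStep-BindBeta} step to $y\{\overline{a}/y\} = \overline{a}$. That exhausts the cases.

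I do not anticipate a genuine obstacle: this is the standard simulation argument for a compositional translation, and its only delicate point --- commutation with substitution --- is packaged into the substitution lemma. The one thing worth checking explicitly is that the administrative $\ottkw{bind}$-layer inserted for $\ottkw{unseal}$ neither creates nor destroys redexes: a term $\ottkw{bind}^{\ell}\, y = e\ \ottkw{in}\ y$ is a $\ottkw{bind}$-redex precisely when $e$ is an $\eta^{\ell}$-value, i.e.\ precisely when the corresponding $\ottkw{unseal}$ in the source stands over a seal, so the step correspondence stays exactly one-for-one and no spurious SDC redexes are introduced.
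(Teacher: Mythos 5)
Your proposal is correct, and it is the standard argument the paper is implicitly relying on (the paper states this theorem without giving a proof): induction on the $\lambda^{[]}$ step derivation, a substitution-commutation lemma $\overline{b\{c/x\}}=\overline{b}\{\overline{c}/x\}$ for the compositional translation, congruence cases lifted through the full-reduction congruence closure, and the two contraction cases, with the unseal contraction matched one-for-one by a bind-beta step on the administrative $\mathsf{bind}\,y=\cdot\ \mathsf{in}\ y$ wrapper. The only point worth flagging is that you correctly use \emph{full} reduction (as the paper specifies for this comparison) rather than the call-by-name relation used elsewhere for SDC, which is what makes the congruence cases for reduction under $\eta^{\ell}$ and inside the bound position of $\mathsf{bind}$ go through.
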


\begin{theorem}[Backward Simulation]
For any term $a$ in $\lambda^{[]}$, if $  \overline{ \ottnt{a} }   \leadsto  \ottnt{b} $ in SDC, then there exists $\ottnt{a'}$ in $\lambda^{[]}$ such that $\ottnt{b}  \ottsym{=}   \overline{ \ottnt{a'} } $ and $ \ottnt{a}  \leadsto  \ottnt{a'} $.
\end{theorem}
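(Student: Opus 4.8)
The plan is to prove the statement by structural induction on the $\lambda^{[]}$-term $a$, inverting the single SDC step $\overline{a} \leadsto b$ in each case and reflecting it back to a single $\lambda^{[]}$ step. Two auxiliary facts about the translation $\overline{\cdot}$ drive the argument. First, $\overline{\cdot}$ commutes with substitution: $\overline{a\{b/x\}} = \overline{a}\{\overline{b}/x\}$, which I would prove by an easy induction on $a$ whose only non-routine case is $\ottkw{unseal}$, where freshness of the variable bound by the generated $\ottkw{bind}$ is used. Second, I would establish inversion principles that recover the top-level shape of $a$ from the top constructor of $\overline{a}$: a variable comes only from a variable; $\overline{a} = \lambda x{:}A.\,c$ forces $a = \lambda x{:}A.\,a_0$ with $\overline{a_0} = c$; $\overline{a} = c_1\,c_2$ forces $a = a_1\,a_2$ with $\overline{a_1}=c_1$, $\overline{a_2}=c_2$; $\overline{a} = \eta^{\ell}\,c$ forces $a = \ottkw{seal}^{\ell}\,a_0$ with $\overline{a_0}=c$; and, crucially, any $\ottkw{bind}$ in the image has exactly the shape $ \ottkw{bind} ^{ \ell } \,  \ottmv{x}  =  \overline{a_0}  \,  \ottkw{in}  \,  \ottmv{x} $ and comes only from $a = \ottkw{unseal}^{\ell}\,a_0$. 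Each of these is immediate by unfolding the structurally-recursive definition of $\overline{\cdot}$.

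With these in hand the induction is largely bookkeeping. The variable case is vacuous, as $\overline{a}$ has no redex. For $a = \lambda x{:}A.\,a_0$ (and, identically, $a = \ottkw{seal}^{\ell}\,a_0$), any SDC step must be a congruence step inside $\overline{a_0}$; the induction hypothesis gives $a_0 \leadsto a_0'$ with reduct $\overline{a_0'}$, and I rebuild $a'$ with the same outer former. For $a = a_1\,a_2$, the step is either a congruence step inside $\overline{a_1}$ or $\overline{a_2}$ (again via the induction hypothesis) or a $\beta$-step by \rref{SDCStep-Beta}, which needs $\overline{a_1}$ to be syntactically a $\lambda$-abstraction; inversion then yields $a_1 = \lambda x{:}A.\,a_0$, and substitution-commutation identifies the SDC reduct with $\overline{a_0\{a_2/x\}}$, so it is reflected by $(\lambda x{:}A.\,a_0)\,a_2 \leadsto a_0\{a_2/x\}$. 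The extra SDC term formers (unit, products, sums) behave exactly like application.

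The one case needing care is $a = \ottkw{unseal}^{\ell}\,a_0$, so $\overline{a} =  \ottkw{bind} ^{ \ell } \,  \ottmv{x}  =  \overline{a_0}  \,  \ottkw{in}  \,  \ottmv{x} $. Because the body $\ottmv{x}$ is irreducible, a step of $\overline{a}$ is either a congruence step inside $\overline{a_0}$ --- reflected by $\ottkw{unseal}^{\ell}\,a_0 \leadsto \ottkw{unseal}^{\ell}\,a_0'$ through the induction hypothesis --- or a \rref{SDCStep-BindBeta} step. The latter can fire only when $\overline{a_0}$ is \emph{already} $\eta^{\ell}\,c$, which by inversion forces $a_0 = \ottkw{seal}^{\ell}\,a_0'$ with $\overline{a_0'}=c$; then $\overline{a} \leadsto \ottmv{x}\{c/\ottmv{x}\} = c = \overline{a_0'}$, which is reflected by $\ottkw{unseal}^{\ell}(\ottkw{seal}^{\ell}\,a_0') \leadsto a_0'$. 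The matching-label side conditions on \rref{SDCStep-BindBeta} and on its $\lambda^{[]}$ counterpart agree automatically because $\overline{\cdot}$ preserves the labels on introduction and elimination forms --- important here since the theorem assumes no typing, so the correspondence must be, and is, purely syntactic.

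The hard part is not any single calculation but spotting the invariant that makes the proof go through: the translation introduces exactly one syntactic artifact, a $\ottkw{bind}$ whose body is its own bound variable, and this artifact is transparent to reduction. Its sole head reduction, \rref{SDCStep-BindBeta} against a translated $\ottkw{seal}$, is in bijection with an $\ottkw{unseal}$/$\ottkw{seal}$ contraction, and congruence steps inside it mirror congruence steps inside $\ottkw{unseal}$; so every SDC step on a translated term reflects to exactly one $\lambda^{[]}$ step, which is the claim. Once the inversion lemmas make this precise, the remaining cases are routine.
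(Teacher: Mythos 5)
Your proof is correct, and since the paper states this theorem without printing a proof, your argument — structural induction with translation-inversion lemmas, commutation of $\overline{\cdot}$ with substitution, and the key observation that the only translation artifact is a $\ottkw{bind}$ whose body is its own bound variable — is exactly the standard argument the authors evidently intend. The only point worth double-checking in a write-up is that the theorem is stated for \emph{full} (non-deterministic) reduction rather than the call-by-name relation of Figure~1, so the congruence cases must cover reduction under binders (under $\lambda$, inside $\eta$, and inside the bound term of $\ottkw{bind}$); your case analysis already does this, so no change to the substance is needed.
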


The translation also preserves typing. In fact, a source term and its target have the same type. Below, for an ordinary context $\Gamma$, the graded context $ \Gamma ^{ \ell } $ denotes $\Gamma$ with the labels
for all the variables set to $\ell$.

\begin{theorem}[Translation Preserves Typing]
If $ \Gamma   \vdash  \ottnt{a}  :^{ \ell }  \ottnt{A} $, then $  \Gamma ^{ \ell }   \vdash\,   \overline{ \ottnt{a} }   \, :^{ \ell }  \,  \ottnt{A} $.
\end{theorem}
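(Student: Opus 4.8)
The plan is to prove the statement by induction on the derivation of $\Gamma \vdash a :^{\ell} A$ in $\lambda^{[]}$, using that the translation $\overline{\cdot}$ is the identity on types and a structural homomorphism on every term former other than $\ottkw{unseal}$. The observation that makes the non-modal cases routine is that $\Gamma^{\ell}$ assigns the single grade $\ell$ to every variable: (i) the side condition of \rref{SDC-Var} degenerates to $\ell \leq \ell$, which always holds, and (ii) $(\Gamma, x : A)^{\ell} = \Gamma^{\ell}, x :^{\ell} A$, so the grade introduced by \rref{SDC-Abs} is exactly the one the induction hypothesis supplies, and \rref{SDC-App} checks the argument at the same observer level the source application uses. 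Thus for the variable, abstraction, and application rules of $\lambda^{[]}$ we just apply the induction hypothesis to the premises and conclude with the matching SDC rule, the observer level $\ell$ being threaded unchanged throughout.

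The interesting cases are the modal ones. Since the introduction form for $T^{\ell_0}$ translates to $\eta^{\ell_0}$, that case follows directly from \rref{SDC-Return} applied to the induction hypothesis, the join appearing in the premise of \rref{SDC-Return} matching what \rref{Sealing-Seal} demands of the sealed subterm. The crux is $\ottkw{unseal}$: by definition $\overline{\ottkw{unseal}^{\ell_0}\, a} = \ottkw{bind}^{\ell_0}\, x = \overline{a}\ \ottkw{in}\ x$, so we must assemble an instance of \rref{SDC-Bind}. The induction hypothesis gives $\Gamma^{\ell} \vdash \overline{a} :^{\ell} T^{\ell_0} A$ (where $T^{\ell_0} A$ is the source type of $a$), and it remains to derive $\Gamma^{\ell}, x :^{\ell \vee \ell_0} A \vdash x :^{\ell} A$. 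By \rref{SDC-Var} this needs $\ell \vee \ell_0 \leq \ell$, i.e., $\ell_0 \leq \ell$ --- precisely the clearance side condition guarding \rref{Sealing-Unseal} (one may unseal only at an observer level that dominates the seal grade). Hence the premise of the source rule furnishes exactly what \rref{SDC-Bind} requires, and the bind is well typed.

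If $\lambda^{[]}$ additionally has a rule that weakens the observer level (concluding $\Gamma \vdash a :^{k} A$ from $\Gamma \vdash a :^{\ell} A$ with $\ell \leq k$), that case is discharged by first invoking Subsumption (Lemma~\ref{subusage}) to obtain $\Gamma^{\ell} \vdash \overline{a} :^{k} A$, then applying Restricted Upgrading (Lemma~\ref{pumping}) once per variable to raise the uniformly graded context $\Gamma^{\ell}$ to $\Gamma^{k}$ --- legitimate precisely because $k$, the target grade, is the level of the judgment. I expect the main obstacle to be exactly the $\ottkw{unseal}$ case: lining up the side condition of \rref{SDC-Bind} with the precise form of the $\lambda^{[]}$ typing rule for $\ottkw{unseal}$, and checking that translating into a $\ottkw{bind}$ whose body is a bare variable never demands anything stronger than the source rule already guarantees. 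A secondary bookkeeping subtlety is keeping the passage between ungraded source contexts $\Gamma$ and uniformly graded SDC contexts $\Gamma^{\ell}$ coherent at any step of the derivation where the observer level changes.
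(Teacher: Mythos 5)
Your proposal is correct and follows essentially the same route as the paper: induction on the $\lambda^{[]}$ derivation, with the seal case discharged by \rref{SDC-Return} and the unseal case by \rref{SDC-Bind} over a variable body, whose \rref{SDC-Var} premise reduces to exactly the clearance condition $\ell_{{\mathrm{0}}}  \leq  \ell$ of the source rule. The only step you leave implicit is the one concrete instance of the bookkeeping subtlety you flag at the end: in the seal case the induction hypothesis yields a judgment in context $ \Gamma ^{   \ell  \vee  \ell_{{\mathrm{0}}}   } $, so after applying \rref{SDC-Return} one further application of Narrowing (the paper calls this dereliction) is needed to lower the context to $ \Gamma ^{ \ell } $, which is legitimate since $\ell  \leq   \ell  \vee  \ell_{{\mathrm{0}}} $.
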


The above translation shows that the terminating fragment of DCC can be embedded into SDC. Therefore SDC is at least as expressive as the terminating fragment of DCC. Further, SDC lends itself nicely to syntactic proof techniques for non-interference. This approach generalizes to more expressive systems, as we shall see in the next section, where we extend SDC to a general dependent dependency calculus.

\section{A Dependent Dependency Analyzing Calculus} \label{DDCT}

\begin{figure}
\centering
\[
\begin{array}{lcll}
\ottnt{a}, \ottnt{A}, \ottnt{b}, \ottnt{B}   
   & ::=  &  \ottnt{s}  \mid \ottkw{unit} \mid \ottkw{Unit} & \mbox{\it sorts and unit }\\
   & \mid &  \Pi  \ottmv{x} \!:^{ \ell }\! \ottnt{A} . \ottnt{B}  \mid \ottmv{x} \mid 
             \lambda  \ottmv{x} \!:^{ \ell }\! \ottnt{A}  .  \ottnt{a}  \mid  \ottnt{a}  \;  \ottnt{b} ^{ \ell }  & \mbox{\it dependent functions} \\
   & \mid &  \Sigma  \ottmv{x} \!\!:^{ \ell }\!\! \ottnt{A} . \ottnt{B}  \mid 
             ( \ottnt{a} ^{ \ell },  \ottnt{b} )  \mid  \ottkw{let} \; ( \ottmv{x} ^{ \ell } ,  \ottmv{y} )\ =\  \ottnt{a} \  \ottkw{in} \  \ottnt{b}  & \mbox{\it dependent pairs} \\
   & \mid &  \ottnt{A}  +  \ottnt{B}  \mid  \ottkw{inj}_1\,  \ottnt{a}  \mid  \ottkw{inj}_2\,  \ottnt{a}  \mid 
             \ottkw{case} \,  \ottnt{a} \, \ottkw{of}\,  \ottnt{b_{{\mathrm{1}}}}  ;  \ottnt{b_{{\mathrm{2}}}}  & \mbox{\it disjoint unions} \\
\\
\end{array}
\]
\caption{Dependent Dependency Calculus Grammar (Types and Terms)}
\label{fig:ddc-grammar}
\end{figure}

Here and in the next section, we present dependently-typed languages, with dependency analysis in the style of SDC.
The first extension, called $\textsc{DDC}^{\top}$ is a straightforward integration of labels and dependent types. This 
system subsumes SDC, and so can be used for the same purposes. Here, we show how it can be used to analyze \emph{run-time irrelevance}. 
Then, in Section \ref{sec:compile-time-irrelevance}, we generalize this system to \DDC{}, which allows definitional equality to ignore unnecessary sub-terms, thus also enabling \emph{compile-time irrelevance}. We present the system in this way both to simplify the
presentation and to show that $\textsc{DDC}^{\top}$ is an intermediate point in the design space.

Both $\textsc{DDC}^{\top}$ and \DDC{} are pure type systems~\citep{pts}.
They share the same syntax, shown in Figure~\ref{fig:ddc-grammar}, combining 
terms and types into the same grammar.
They are parameterized by a set of sorts $\ottnt{s}$, a set of axioms $ \mathcal{A}( \ottnt{s_{{\mathrm{1}}}} , \ottnt{s_{{\mathrm{2}}}} ) $ which is a binary relation on sorts, and a set of rules $ \mathcal{R}( \ottnt{s_{{\mathrm{1}}}} , \ottnt{s_{{\mathrm{2}}}} , \ottnt{s_{{\mathrm{3}}}} ) $ which is a ternary relation on sorts. For simplicity, we assume, without loss of generality, that for every sort $\ottnt{s_{{\mathrm{1}}}}$, there is some sort $\ottnt{s_{{\mathrm{2}}}}$, such that $ \mathcal{A}( \ottnt{s_{{\mathrm{1}}}} , \ottnt{s_{{\mathrm{2}}}} ) $.\footnote{This assumption does not lead to any loss in generality because given a pure type system $(\mathit{S'},\mathit{A'},\mathit{R'})$ that does not meet the above condition, we can provide another pure type system $(\mathit{S''}, \mathit{A''}, \mathit{R''})$, where $\mathit{S''} = S' \cup \{ \pentagon \}$ (given $\pentagon \notin S'$) and $\mathit{A''} = A' \cup \{ (s, \pentagon) | s \in S'' \}$ and $\mathit{R''} = \mathit{R'}$, such that there exists a straightforward bisimulation between the two systems.}

We annotate several syntactic forms with grades for dependency analysis.
The dependent function type, written $ \Pi  \ottmv{x} \!:^{ \ell }\! \ottnt{A} . \ottnt{B} $, includes the grade of the
argument to a function having this type. Similarly, the dependent pair type, written
$ \Sigma  \ottmv{x} \!\!:^{ \ell }\!\! \ottnt{A} . \ottnt{B} $, includes the grade of the first component of a pair having this type. \footnote{We use standard abbreviations when $\ottmv{x}$ is not free in $\ottnt{B}$: we write $ {}^{ \ell } \!  \ottnt{A}  \to  \ottnt{B} $
for $ \Pi  \ottmv{x} \!:^{ \ell }\! \ottnt{A} . \ottnt{B} $ and $ {}^{ \ell } \!  \ottnt{A}  \times  \ottnt{B} $ for $ \Sigma  \ottmv{x} \!\!:^{ \ell }\!\! \ottnt{A} . \ottnt{B} $.} We can interpret these types as a fusion of the usual, ungraded dependent types and the graded modality $ T^{ \ell }\;  \ottnt{A} $ we saw earlier. In other words, $ \Pi  \ottmv{x} \!:^{ \ell }\! \ottnt{A} . \ottnt{B} $ acts like the type
$ \Pi \ottmv{y} \!:\! \ottsym{(}   T^{ \ell }\;  \ottnt{A}   \ottsym{)} .  \ottkw{bind}  \,  \ottmv{x}  =  \ottmv{y}  \,  \ottkw{in}  \,  \ottnt{B}  $ and $ \Sigma  \ottmv{x} \!\!:^{ \ell }\!\! \ottnt{A} . \ottnt{B} $ acts like the type
$ \Sigma \ottmv{y} \!:\! \ottsym{(}   T^{ \ell }\;  \ottnt{A}   \ottsym{)} .  \ottkw{bind}  \,  \ottmv{x}  =  \ottmv{y}  \,  \ottkw{in}  \,  \ottnt{B}  $. Because of this fusion, we do not need
to add the graded modality type as a separate form---we can define $ T^{ \ell }\;  \ottnt{A} $ as $ \Sigma  \ottmv{x} \!\!:^{ \ell }\!\! \ottnt{A} . \ottkw{Unit} $.  Using $ \Pi  \ottmv{x} \!:^{ \ell }\! \ottnt{A} . \ottnt{B} $ instead of $ \Pi \ottmv{y} \!:\! \ottsym{(}   T^{ \ell }\;  \ottnt{A}   \ottsym{)} .  \ottkw{bind}  \,  \ottmv{x}  =  \ottmv{y}  \,  \ottkw{in}  \,  \ottnt{B}  $ has an advantage:
the former allows $\ottmv{x}$ to be held at differing grades while type checking $\ottnt{B}$ and the body of a function having this $\Pi$-type while the latter requires $\ottmv{x}$ to be held at the same grade in both the cases. We utilize this flexibility in Section \ref{sec:compile-time-irrelevance}. 


\subsection{$\textsc{DDC}^{\top}$ : $\Pi$-types}
\label{sec:erasure}

\begin{figure}
\begin{drulepar*}[]{$ \Omega  \vdash  \ottnt{a} :^{ \ell }  \ottnt{A} $}{Typing}
  \drule{DCT-Var}
  \drule{DCT-Type}
  \drule[width=2.5in]{DCT-Pi}
  \drule{DCT-Abs}
  \drule{DCT-App}
  \drule{DCT-Conv}
\end{drulepar*}
\caption{$\textsc{DDC}^{\top}$ type system (core rules)}
\label{fig:sdc-typing}
\end{figure}

The core typing rules for $\textsc{DDC}^{\top}$ appear in
Figure~\ref{fig:sdc-typing}. As in the simple type system, the variables in
the context are labelled and the judgement itself includes a label $\ell$.
\Rref{DCT-Var} is similar to its counterpart in the simply-typed language: the variable being observed must be graded less than or equal to the level of the observer. 
\Rref{DCT-Pi} propagates the level of the expression to the sub-terms of the
$\Pi$-type.  Note that this type is annotated with an arbitrary label
$\ell_{{\mathrm{0}}}$: the purpose of this label $\ell_{{\mathrm{0}}}$ is to denote the level at
which the argument to a function having this type may be used.

In \rref{DCT-Abs}, the parameter of the function is introduced into the context
at level $ \ell_{{\mathrm{0}}}  \vee  \ell $ (akin to \rref{SDC-Bind}). In \rref{DCT-App}, the
argument to the function is checked at level $ \ell_{{\mathrm{0}}}  \vee  \ell $ (akin to \rref{SDC-Return}).
Note that the $\Pi$-type is checked at $ { \color{black}{\top} } $ in \rref{DCT-Abs}. In
$\textsc{DDC}^{\top}$, level $ { \color{black}{\top} } $ corresponds to `compile time' observers
and motivates the superscript $ { \color{black}{\top} } $ in the language name. 

\Rref{DCT-Conv} converts the type of an expression to an equivalent type.  The
judgment $  |  \Omega  |   \vdash  \ottnt{A}  \equiv_{  { \color{black}{\top} }  }  \ottnt{B} $ is a label-indexed definitional
equality relation instantiated to $ { \color{black}{\top} } $. This relation is the closure of the indexed indistinguishability relation (Section \ref{sec:geq}) under small-step call-by-name evaluation. When instantiated to $ { \color{black}{\top} } $, the relation degenerates to $\beta$-equivalence. So the \rref{DCT-Conv} is essentially casting a term to a $\beta$-equivalent type; however, in the next section, we utilize the flexibility of label-indexing to cast a term to a type that may not be $\beta$-equivalent. Also, note that the equality relation itself is untyped. As such, we need the third premise to guarantee that the new type is well-formed.

\subsection{$\textsc{DDC}^{\top}$ : $\Sigma$-types}

The language $\textsc{DDC}^{\top}$ includes $\Sigma$ types, as specified by the rules below.

\[ \drule[width=3in]{DCT-WSigma} \ \drule[width=3in]{DCT-WPair} \]


Like $\Pi$-types, $\Sigma$-types include a grade that is not related to how the bound variable is used in the body of the type. The grade indicates the level at which the first component of a pair having the $\Sigma$-type may be used.
In \rref{DCT-WPair}, we check the first component $\ottnt{a}$ of the
pair at a level raised by $\ell_{{\mathrm{0}}}$, the level annotating the
type, akin to \rref{SDC-Return}. The second component $\ottnt{b}$ is checked at the current level.

\[ \drule[width=5in]{DCT-LetPair} \]

The \rref{DCT-LetPair} eliminates pairs using dependently-typed
pattern matching. The pattern variables $\ottmv{x}$ and $\ottmv{y}$
are introduced into the context while checking the body $\ottnt{c}$. Akin to \rref{SDC-Bind},
the level of the first pattern variable, $\ottmv{x}$, is raised by $\ell_{{\mathrm{0}}}$. 
The result type $\ottnt{C}$ is refined by the
pattern match, informing the type system that the pattern $ ( \ottmv{x} ^{ \ell_{{\mathrm{0}}} },  \ottmv{y} ) $ is equal
to the scrutinee $\ottnt{a}$.

Because of this refinement in the result type, we can define
the projection operations through pattern matching. In particular, the first projection, $ \pi_1^{ \ell_{{\mathrm{0}}} }\; \ottnt{a}  :=  \ottkw{let} \; ( \ottmv{x} ^{ \ell_{{\mathrm{0}}} } ,  \ottmv{y} )\ =\  \ottnt{a} \  \ottkw{in} \  \ottmv{x} $ while the second projection, $ \pi_2^{ \ell_{{\mathrm{0}}} }  \ottnt{a}  :=  \ottkw{let} \; ( \ottmv{x} ^{ \ell_{{\mathrm{0}}} } ,  \ottmv{y} )\ =\  \ottnt{a} \  \ottkw{in} \  \ottmv{y} $. These projections can be type checked
according to the following derived rules:

\[ \drule[width=3in]{DCT-ProjOne} \qquad \drule{DCT-ProjTwo} \]

Note that the derived \rref{DCT-Proj1} limits access to the first
component through the premise $\ell_{{\mathrm{0}}}  \leq  \ell$, akin to \rref{Sealing-Unseal}.  This condition makes sense
because it aligns the observability of the first
component of the pair with the label on the $\Sigma$-type.


\subsection{Embedding SDC into $\textsc{DDC}^{\top}$}

Here, we show how to embed SDC into $\textsc{DDC}^{\top}$.

We define a translation function, $\overline{\overline{\cdot}}$, that takes the types and terms in SDC to terms in $\textsc{DDC}^{\top}$. For types, the translation is defined as: $ \overline{\overline{   \ottnt{A}  \to  \ottnt{B}   } }  :=  {}^{  { \color{black}{\bot} }  } \!    \overline{\overline{ \ottnt{A} } }    \to    \overline{\overline{ \ottnt{B} } }   $, $ \overline{\overline{   \ottnt{A}  \times  \ottnt{B}   } }  :=  {}^{  { \color{black}{\bot} }  } \!    \overline{\overline{ \ottnt{A} } }    \times    \overline{\overline{ \ottnt{B} } }   $ and $ \overline{\overline{   T^{ \ell }\;  \ottnt{A}   } }  :=  \Sigma  \ottmv{x} \!\!:^{ \ell }\!\!  \overline{\overline{ \ottnt{A} } }  . \ottkw{Unit} $. For terms, the translation is straightforward except for the following cases: $ \overline{\overline{   \eta^{ \ell }\;  \ottnt{a}   } }  :=  (  \overline{\overline{ \ottnt{a} } }  ^{ \ell },  \ottkw{unit} ) $ and $ \overline{\overline{   \ottkw{bind} ^{ \ell } \,  \ottmv{x}  =  \ottnt{a}  \,  \ottkw{in}  \,  \ottnt{b}   } }  :=  \ottkw{let} \; ( \ottmv{x} ^{ \ell } ,  \ottmv{y} )\ =\   \overline{\overline{ \ottnt{a} } }  \  \ottkw{in} \   \overline{\overline{ \ottnt{b} } }  $, where $\ottmv{y}$ is a fresh variable. 
By lifting the translation to contexts, we show that translation preserves typing.
\begin{theorem}[Trans. Preserves Typing]
If $ \Omega  \vdash \,  \ottnt{a}  \, :^{ \ell } \,  \ottnt{A} $, then $  \overline{\overline{ \Omega } }   \vdash   \overline{\overline{ \ottnt{a} } }  :^{ \ell }   \overline{\overline{ \ottnt{A} } }  $.
\end{theorem}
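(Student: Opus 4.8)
The plan is to prove the statement by a mutual induction on the well-formedness of SDC contexts and on the SDC typing derivation $\Omega \vdash a :^{\ell} A$, translating each SDC rule to the matching $\textsc{DDC}^{\top}$ rule. Two preliminary facts make the induction routine. First, a \emph{simple-type well-formedness} lemma: for every SDC type $A$ there is a sort $s$ with $|\overline{\overline{\Omega'}}| \vdash \overline{\overline{A}} :^{\top} s$ whenever $\overline{\overline{\Omega'}}$ is well-formed. This requires $\textsc{DDC}^{\top}$ to be instantiated with a pure type system rich enough to form $\mathrm{Unit}$, arrow types, and $\Sigma$-types over a single ``type of types'' (one sort $\star$ with $\mathcal{A}(\star, \ldots)$ and $\mathcal{R}(\star,\star,\star)$ suffices), and it follows by a short structural induction on SDC types, using that $\overline{\overline{A \to B}} = \Pi x{:}^{\bot}\overline{\overline{A}}.\overline{\overline{B}}$, $\overline{\overline{A \times B}} = \Sigma x{:}^{\bot}\overline{\overline{A}}.\overline{\overline{B}}$ and $\overline{\overline{T^{\ell}\;A}} = \Sigma x{:}^{\ell}\overline{\overline{A}}.\mathrm{Unit}$ (with $x$ not free in the bodies). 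Second, I would record the standard $\textsc{DDC}^{\top}$ weakening lemma, together with the observation that, since SDC types carry no term dependency, every substitution $\overline{\overline{B}}\{\overline{\overline{a}}/x\}$ produced by a dependent $\textsc{DDC}^{\top}$ rule is just $\overline{\overline{B}}$; hence all the dependent $\Pi$/$\Sigma$ rules apply here exactly as their non-dependent instances.

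With these in hand, the induction is a case analysis. \rref{SDC-Var} maps to \rref{DCT-Var}: $x{:}^{k}A \in \Omega$ gives $x{:}^{k}\overline{\overline{A}} \in \overline{\overline{\Omega}}$, the side condition $k \leq \ell$ is preserved verbatim, and well-formedness of $\overline{\overline{\Omega}}$ comes from the context half of the mutual induction plus the simple-type lemma. \rref{SDC-Abs} maps to \rref{DCT-Abs}: since the translated arrow has grade $\bot$, the context extension $x{:}^{\bot \vee \ell}\overline{\overline{A}}$ demanded by \rref{DCT-Abs} is exactly $x{:}^{\ell}\overline{\overline{A}}$, which is what the induction hypothesis supplies, and the required $|\overline{\overline{\Omega}}| \vdash \Pi x{:}^{\bot}\overline{\overline{A}}.\overline{\overline{B}} :^{\top} s$ is the simple-type lemma. \rref{SDC-App} maps to \rref{DCT-App} with argument grade $\bot$, so the argument is checked at $\bot \vee \ell = \ell$ (matching the hypothesis) and the result type collapses to $\overline{\overline{B}}$. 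The modality rules are the heart of the embedding. \rref{SDC-Return} maps to \rref{DCT-WPair} on the pair $(\overline{\overline{a}}^{\ell_0}, \mathrm{unit})$ at type $\Sigma x{:}^{\ell_0}\overline{\overline{A}}.\mathrm{Unit}$: its first premise needs $\overline{\overline{a}}$ at grade $\ell_0 \vee \ell$, supplied by the hypothesis since $\ell \vee \ell_0 = \ell_0 \vee \ell$; its second premise is $\overline{\overline{\Omega}} \vdash \mathrm{unit} :^{\ell} \mathrm{Unit}$, discharged by the $\mathrm{Unit}$ introduction rule; and $\Sigma$-well-formedness is the lemma again. \rref{SDC-Bind} maps to \rref{DCT-LetPair} on $\mathrm{let}\,(x^{\ell_0},y){=}\overline{\overline{a}}\ \mathrm{in}\ \overline{\overline{b}}$: the hypothesis on $a$ types the scrutinee, the hypothesis on $b$ types the body after weakening in the spurious witness variable $y{:}^{\ell}\mathrm{Unit}$, the pattern-variable grade $\ell_0 \vee \ell$ again equals $\ell \vee \ell_0$, and the motive refinement is vacuous because $\overline{\overline{B}}$ does not mention the scrutinee. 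The remaining cases of the full SDC --- $\mathrm{Unit}$, sums, and products (using the derived $\textsc{DDC}^{\top}$ projection rules where SDC eliminates strong pairs) --- translate componentwise and go through by the same recipe; SDC has no conversion rule, so \rref{DCT-Conv} never arises.

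I expect the main obstacle to be the bookkeeping forced by the well-formedness premises that $\textsc{DDC}^{\top}$'s rules carry but SDC's do not --- in \rref{DCT-Var}, \rref{DCT-Abs}, \rref{DCT-WPair}, and \rref{DCT-LetPair} --- together with pinning down a pure-type-system instantiation of $\textsc{DDC}^{\top}$ in which every SDC type has a well-formed translation; this is exactly why I would establish the simple-type well-formedness lemma (and preservation of context well-formedness) up front and thread it through every case. A secondary subtlety is the $\Sigma$-elimination case, where one must check that introducing the unused witness variable and the degenerate motive do not obstruct \rref{DCT-LetPair}: this is where the $\textsc{DDC}^{\top}$ weakening lemma and the remark about vacuous substitutions are used.
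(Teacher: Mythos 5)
Your proposal is correct and takes essentially the same route the paper (implicitly) does: a rule-by-rule induction on the SDC typing derivation, mapping \textsc{SDC-Return} to \textsc{DCT-WPair} and \textsc{SDC-Bind} to \textsc{DCT-LetPair} with a weakened fresh witness variable, and discharging the well-formedness premises of the $\textsc{DDC}^{\top}$ rules via a preliminary lemma that every translated simple type is derivable at $ { \color{black}{\top} } $. The paper states the theorem without writing out the proof, but the two observations that carry your induction --- that $ { \color{black}{\bot} }  \vee \ell = \ell$ collapses the graded $\Pi$/$\Sigma$ rules to their SDC counterparts, and that all substitutions into translated types are vacuous because simple types mention no term variables --- are exactly the content of the intended argument.
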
 

Next, assuming a standard call-by-name small-step semantics for both the languages, we can provide a bisimulation.
\begin{theorem}[Forward Simulation]
If $ \ottnt{a}  \leadsto  \ottnt{a'} $ in SDC, then $  \overline{\overline{ \ottnt{a} } }   \leadsto   \overline{\overline{ \ottnt{a'} } }  $ in $\textsc{DDC}^{\top}$.
\end{theorem}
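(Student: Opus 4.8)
The plan is to prove this by induction on the derivation of $a \leadsto a'$ in SDC. SDC's call-by-name reduction consists of the two $\beta$-rules \textsc{SDCStep-Beta} and \textsc{SDCStep-BindBeta}, together with the congruence rules that step the head of an application and the scrutinee of a $\ottkw{bind}$. In each case I will exhibit a matching \emph{single} step in $\textsc{DDC}^{\top}$ between the translated terms, so the bulk of the work reduces to checking that $\overline{\overline{\cdot}}$ interacts well with substitution.

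The first thing I would establish is a substitution lemma for the translation: for every SDC term-or-type $t$, SDC term $a$, and variable $x$, we have $\overline{\overline{t\{a/x\}}} = \overline{\overline{t}}\{\overline{\overline{a}}/x\}$. This follows by a straightforward mutual induction on $t$, since $\overline{\overline{\cdot}}$ is defined compositionally over the syntax and maps types and terms uniformly into $\textsc{DDC}^{\top}$ terms; the only point requiring care is the $\ottkw{bind}$ clause, where the fresh variable $y$ introduced by the translation must be chosen to avoid $x$ and the free variables of $a$ so that no unintended capture occurs.

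With that in hand, the two base cases are direct. For \textsc{SDCStep-Beta}, $(\lambda x{:}A.\,a)\;b \leadsto a\{b/x\}$; because $\overline{\overline{A \to B}}$ is a $\Pi$-type graded at $\bot$, the translation of the redex is $(\lambda x{:}^{\bot}\overline{\overline{A}}.\,\overline{\overline{a}})\;\overline{\overline{b}}^{\bot}$, which takes a single $\textsc{DDC}^{\top}$ $\beta$-step (the grades on the binder and the argument agree) to $\overline{\overline{a}}\{\overline{\overline{b}}/x\}$, and this equals $\overline{\overline{a\{b/x\}}}$ by the substitution lemma. For \textsc{SDCStep-BindBeta}, $\ottkw{bind}^{\ell}\, x = \eta^{\ell}\,a\ \ottkw{in}\ b \leadsto b\{a/x\}$; here $\overline{\overline{\eta^{\ell}\,a}} = (\overline{\overline{a}}^{\ell},\ottkw{unit})$ is already a syntactic pair, so the translated redex $\ottkw{let}\,(x^{\ell}, y) = (\overline{\overline{a}}^{\ell},\ottkw{unit})\ \ottkw{in}\ \overline{\overline{b}}$ fires the $\textsc{DDC}^{\top}$ pair-elimination rule in one step to $\overline{\overline{b}}\{\overline{\overline{a}}/x\}\{\ottkw{unit}/y\}$, which, using freshness of $y$ for both $\overline{\overline{b}}$ and $\overline{\overline{a}}$, collapses to $\overline{\overline{b}}\{\overline{\overline{a}}/x\} = \overline{\overline{b\{a/x\}}}$.

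The congruence cases are then immediate from the induction hypothesis and the corresponding $\textsc{DDC}^{\top}$ congruence rules: from $a \leadsto a'$ we get $\overline{\overline{a}} \leadsto \overline{\overline{a'}}$, hence $\overline{\overline{a\;b}} = \overline{\overline{a}}\;\overline{\overline{b}}^{\bot} \leadsto \overline{\overline{a'}}\;\overline{\overline{b}}^{\bot} = \overline{\overline{a'\;b}}$, and similarly $\overline{\overline{\ottkw{bind}^{\ell}\,x = a\ \ottkw{in}\ b}} \leadsto \overline{\overline{\ottkw{bind}^{\ell}\,x = a'\ \ottkw{in}\ b}}$. I do not expect a genuine obstacle here; the subtlest points are purely bookkeeping — arranging the fresh $y$ in the $\ottkw{bind}$ translation so the $\{\ottkw{unit}/y\}$ substitution is vacuous, and observing that the translated $\ottkw{bind}$-redex is literally a pair constructor, so the call-by-name pair-elimination rule applies without first needing to reduce a subterm. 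This last observation is what guarantees that one SDC step corresponds to exactly one $\textsc{DDC}^{\top}$ step, matching the single-step formulation of the statement.
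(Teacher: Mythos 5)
Your proof is correct and is essentially the argument this theorem calls for: the paper states the result without an explicit proof, and the intended justification is exactly your induction on the step derivation together with the lemma that $\overline{\overline{\,\cdot\,}}$ commutes with substitution, with the two points of care being precisely the ones you isolate (the translated $\eta^{\ell}\,a$ is already a pair constructor, so the let-pair $\beta$-rule fires in one step, and freshness of $y$ makes the $\ottkw{unit}$ substitution vacuous). The only remaining bookkeeping is that full SDC also has unit, products and sums, whose projection and case rules are handled by the identical pattern, so nothing new arises.
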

\begin{theorem}[Backward Simulation]
For any term $a$ in SDC, if $  \overline{\overline{ \ottnt{a} } }   \leadsto  \ottnt{b} $ in $\textsc{DDC}^{\top}$, then there exists $\ottnt{a'}$ in SDC such that $\ottnt{b}  \ottsym{=}   \overline{\overline{ \ottnt{a'} } } $ and $ \ottnt{a}  \leadsto  \ottnt{a'} $.
\end{theorem}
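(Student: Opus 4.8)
The plan is to reduce this to the already-established Forward Simulation together with determinism of $\textsc{DDC}^{\top}$'s call-by-name reduction. It suffices to prove a \emph{reducibility-reflection} lemma: if $\overline{\overline{a}}$ is not a $\textsc{DDC}^{\top}$ normal form, then $a$ is not an SDC normal form, i.e.\ there is some $a'$ with $a \leadsto a'$. Granting this, the theorem is immediate: if $\overline{\overline{a}} \leadsto b$ then $\overline{\overline{a}}$ is non-normal, so the lemma yields $a \leadsto a'$ for some $a'$; Forward Simulation then gives $\overline{\overline{a}} \leadsto \overline{\overline{a'}}$; and since call-by-name reduction in $\textsc{DDC}^{\top}$ is deterministic, $b = \overline{\overline{a'}}$. (If one prefers not to invoke determinism, the same conclusion follows from a direct structural induction on $a$ in which the contractum $b$ is exhibited explicitly, using a lemma that the translation commutes with substitution, $\overline{\overline{c\{a/x\}}} = \overline{\overline{c}}\{\overline{\overline{a}}/x\}$, together with the freshness of the auxiliary variable $y$ introduced when translating $\mathsf{bind}$ so that the induced substitution for $y$ is vacuous.)

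The reducibility-reflection lemma is proved by induction on the structure of $a$. The key auxiliary observation is that $\overline{\overline{\cdot}}$ is compositional and \emph{head-rigid}: it sends each SDC term former to a distinct $\textsc{DDC}^{\top}$ term former — a variable to a variable, $\lambda$ to $\lambda$ (with grade $\bot$), an application to an application, $\eta^{\ell}$ to a pair of grade $\ell$ whose second component is $\mathsf{unit}$, $\mathsf{bind}^{\ell}$ to a $\mathsf{let}$-pair of grade $\ell$, and likewise for the unit, product and sum formers of full SDC. Hence if $\overline{\overline{a}}$ is headed by a $\textsc{DDC}^{\top}$ constructor then $a$ is headed by the matching SDC constructor with translated immediate subterms; in particular an abstraction (resp.\ a pair) in the image of the translation can only come from an SDC abstraction (resp.\ an $\eta$-box or a product pair). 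Now, if $a$ is a variable or an introduction form (an abstraction, a box, a pair, an injection, $\mathsf{unit}$) then $\overline{\overline{a}}$ is head-normal in $\textsc{DDC}^{\top}$, contradicting the hypothesis. If $a$ is an elimination form — $a_1\, a_2$, $\mathsf{bind}^{\ell} x = a_1\ \mathsf{in}\ a_2$, an unpairing, or a case — then either $\overline{\overline{a_1}}$ is non-normal and the induction hypothesis gives $a_1 \leadsto a_1'$, whence $a$ steps by the corresponding SDC congruence rule; or $\overline{\overline{a_1}}$ is a value of exactly the shape the $\textsc{DDC}^{\top}$ $\beta$-rule requires, and head-rigidity makes $a_1$ the matching SDC introduction form, so that $a$ is itself an SDC redex and steps by the matching $\beta$-rule.

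The main obstacle — the one point the whole argument turns on — is aligning the grade annotations across the two operational semantics. Because $\mathsf{bind}^{\ell}$ translates to a $\mathsf{let}$-pair of grade $\ell$ and $\eta^{\ell}$ to a pair of grade $\ell$, a $\mathsf{let}$-pair $\beta$-step on a translated term can fire only when the two grades agree, which is precisely the side condition of \rref{SDCStep-BindBeta}; were the $\textsc{DDC}^{\top}$ $\beta$-rule not to match these labels, a reduct of $\overline{\overline{a}}$ could escape the image of the translation and the theorem would be false. One further point of care: the translation identifies SDC's $T$-modality eliminator $\mathsf{bind}$ with the $\Sigma$-eliminator $\mathsf{let}$-pair, so a $\mathsf{bind}$ applied to a product pair — ill-typed in SDC and stuck there — would translate to a genuine $\textsc{DDC}^{\top}$ redex; the clean statement is therefore for well-typed (or otherwise well-formed) SDC terms, and this hypothesis must be threaded through the induction, which is routine. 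Everything else — the head-rigidity case analysis, the substitution lemma if one takes the direct route, and the freshness bookkeeping for $y$ — is mechanical.
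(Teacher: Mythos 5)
The paper states this theorem without giving a proof, so there is nothing to compare against line by line; judged on its own, your argument is sound. Your primary route --- reflect non-normality of $\overline{\overline{a}}$ back to non-normality of $a$, then combine Forward Simulation with determinism of call-by-name reduction to pin down the reduct --- is a clean repackaging of the direct structural induction you also sketch, and either version works; the direct induction (with the substitution-commutation lemma and the vacuous substitution for the fresh $y$) is the more self-contained of the two. The one place where you go beyond the paper is the observation that the statement, read literally over \emph{all} SDC terms, is threatened by the conflation of $T^{\ell}\,A$ with $\Sigma x\mathord{:}^{\ell}A.\,\mathsf{Unit}$: an ill-typed but SDC-stuck term such as $\mathsf{bind}^{\bot}\,x = (v_1,v_2)\ \mathsf{in}\ b$ (a bind whose scrutinee is a product pair rather than an $\eta$-box) translates to a let-pair applied to a pair whose grades both equal $\bot$, i.e.\ a genuine $\textsc{DDC}^{\top}$ redex even under the grade-matching side condition, while the source term cannot step. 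That is a real defect of the unqualified statement, and your fix --- threading a typing or well-formedness hypothesis through the induction --- is the right one and costs nothing, since Forward Simulation and the congruence cases are unaffected. The remaining ingredients you list (compositionality, head-rigidity of the translation up to the box/pair and bind/projection identifications, freshness of $y$, and the alignment of the $\beta$-rules' label side conditions) are exactly what a careful proof needs.
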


Hence, SDC can be embedded into $\textsc{DDC}^{\top}$, preserving meaning. As such, $\textsc{DDC}^{\top}$ can analyze dependencies in general.

\subsection{Run-time Irrelevance}

Next, we show how to track run-time irrelevance using $\textsc{DDC}^{\top}$. We use the two element lattice $\{  { \color{black}{\bot} }  ,  { \color{black}{\top} }  \}$ with $ { \color{black}{\bot} }  <  { \color{black}{\top} } $ such that $ { \color{black}{\bot} } $ and $ { \color{black}{\top} } $ correspond to run-time relevant and run-time irrelevant terms respectively. So, we need to erase terms marked with $ { \color{black}{\top} } $. However, we first define a general indexed erasure function, $\lfloor \cdot \rfloor_{\ell}$, on $\textsc{DDC}^{\top}$ terms, that erases everything an $\ell$-user should not be able to see. The function is defined by straightforward recursion in most cases. For example, \\ $ \lfloor  \ottmv{x}  \rfloor_ \ell  := \ottmv{x}$ and $ \lfloor   \Pi  \ottmv{x} \!:^{ \ell_{{\mathrm{0}}} }\! \ottnt{A} . \ottnt{B}   \rfloor_ \ell  :=  \Pi  \ottmv{x} \!:^{ \ell_{{\mathrm{0}}} }\!  \lfloor  \ottnt{A}  \rfloor_ \ell  .  \lfloor  \ottnt{B}  \rfloor_ \ell  $ and $ \lfloor   \lambda^{ \ell_{{\mathrm{0}}} }  \ottmv{x} . \ottnt{b}   \rfloor_ \ell  :=  \lambda^{ \ell_{{\mathrm{0}}} }  \ottmv{x} .  \lfloor  \ottnt{b}  \rfloor_ \ell  $. \\ The interesting cases are: \\  $\qquad  \lfloor    \ottnt{b}  \;  \ottnt{a} ^{ \ell_{{\mathrm{0}}} }    \rfloor_ \ell  := \ottsym{(}     \lfloor  \ottnt{b}  \rfloor_ \ell    \;   \lfloor  \ottnt{a}  \rfloor_ \ell  ^{ \ell_{{\mathrm{0}}} }   \ottsym{)}$ if $\ell_{{\mathrm{0}}}  \leq  \ell$ and $\ottsym{(}     \lfloor  \ottnt{b}  \rfloor_ \ell    \;  \ottkw{unit} ^{ \ell_{{\mathrm{0}}} }   \ottsym{)}$ otherwise, \\  $\qquad  \lfloor   ( \ottnt{a} ^{ \ell_{{\mathrm{0}}} },  \ottnt{b} )   \rfloor_ \ell  :=  (   \lfloor  \ottnt{a}  \rfloor_ \ell   ^{ \ell_{{\mathrm{0}}} },   \lfloor  \ottnt{b}  \rfloor_ \ell  ) $ if $\ell_{{\mathrm{0}}}  \leq  \ell$ and $ ( \ottkw{unit} ^{ \ell_{{\mathrm{0}}} },   \lfloor  \ottnt{b}  \rfloor_ \ell  ) $ otherwise. \\ They are so defined because if $ \neg  \ottsym{(}  \ell_{{\mathrm{0}}}  \leq  \ell  \ottsym{)} $, an $\ell$-user should not be able to see $\ottnt{a}$, so we replace it with $\ottkw{unit}$.

This erasure function is closely related to the indistinguishability relation, we saw in Section \ref{sec:geq}, extended to a dependent setting. (This definition appears in \auxref{dep-indist}.) The erasure function maps the equivalence classes formed by the indistinguishability relation to their respective canonical elements. We have verified the following lemmas using the Coq proof assistant. Footnotes mark the file and lemma name of the corresponding mechanized results.
\begin{lemma}[Canonical Element\footnote{\texttt{erasure.v:Canonical\_element}.}]
If $ \Phi  \vdash  \ottnt{a_{{\mathrm{1}}}}  \sim_{ \ell }  \ottnt{a_{{\mathrm{2}}}} $, then $ \lfloor  \ottnt{a_{{\mathrm{1}}}}  \rfloor_ \ell  =  \lfloor  \ottnt{a_{{\mathrm{2}}}}  \rfloor_ \ell $.
\end{lemma}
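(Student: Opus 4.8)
The plan is to prove this by structural induction on the derivation of $\Phi \vdash a_1 \sim_\ell a_2$, using the dependent version of the indexed indistinguishability relation (defined in \auxref{dep-indist}). In each case we apply the induction hypothesis to the immediate sub-derivations and then unfold the definition of the indexed erasure function $\lfloor \cdot \rfloor_\ell$ on both $a_1$ and $a_2$, checking that the two results coincide.

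The base case is the variable rule, which forces $a_1 = a_2 = x$; since $\lfloor x \rfloor_\ell = x$, the equation holds trivially. For every congruence rule whose term former has no graded sub-position that erasure can hide --- that is, sorts, $\ottkw{Unit}$, $\Pi$-types, $\lambda$-abstractions, $\Sigma$-types, the two injections, $\ottkw{case}$, and $\ottkw{let}$-elimination of pairs --- the erasure function simply pushes into each sub-term, so applying the induction hypothesis componentwise immediately gives the desired equality.

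The only genuinely interesting cases are application $b\;a^{\ell_0}$ and pair construction $(a^{\ell_0}, b)$, where erasure behaves differently depending on whether $\ell_0 \leq \ell$. In both cases the indistinguishability derivation relates the head (resp.\ the second component) directly as $b_1 \sim_\ell b_2$, but relates the graded argument (resp.\ the first component) only through the extended equivalence $\Phi \vdash^{\ell_0}_\ell a_1 \sim a_2$. We split on $\ell_0 \leq \ell$. If $\ell_0 \leq \ell$, the extended equivalence (its $\ell_0 \leq \ell$ clause) yields $\Phi \vdash a_1 \sim_\ell a_2$, so two uses of the induction hypothesis give $\lfloor b_1 \rfloor_\ell = \lfloor b_2 \rfloor_\ell$ and $\lfloor a_1 \rfloor_\ell = \lfloor a_2 \rfloor_\ell$; since erasure keeps the (recursively erased) graded component in this branch, the two sides are syntactically identical. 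If $\neg(\ell_0 \leq \ell)$, erasure overwrites the graded component on both sides uniformly with $\ottkw{unit}$, so only $\lfloor b_1 \rfloor_\ell = \lfloor b_2 \rfloor_\ell$ (again from the induction hypothesis) is needed.

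The main obstacle is not a real difficulty but a matter of bookkeeping: one must check that the case analysis on $\ell_0 \leq \ell$ baked into the definition of $\lfloor \cdot \rfloor_\ell$ is mirrored exactly by the two clauses of the extended-equivalence relation. The two definitions are designed to align: in the $\neg(\ell_0 \leq \ell)$ branch the extended equivalence imposes \emph{no} constraint on $a_1$ and $a_2$ (they need not be well-graded or related in any way), and correspondingly erasure discards them entirely, so no induction hypothesis about them is required. One might also wonder whether a separate mutual induction on the extended-equivalence judgment is needed; it is not, because its only non-trivial clause is literally $\Phi \vdash a_1 \sim_\ell a_2$, so the ordinary induction hypothesis on that sub-derivation suffices and the statement needs no strengthening.
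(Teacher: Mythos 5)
Your proof is correct and takes essentially the same approach as the paper's (mechanized) proof: induction on the indistinguishability derivation, where the only non-trivial cases are application and pair formation, handled by splitting on $\ell_0 \leq \ell$ so that the two clauses of the conditional (extended) equivalence line up exactly with the two branches of the erasure function. Your observation that no mutual induction on the conditional judgment is needed, since its $\ell_0 \leq \ell$ clause just wraps an ordinary indistinguishability sub-derivation, is also accurate.
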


Further, a well-graded term and its erasure are indistinguishable.
\begin{lemma}[Erasure Indistinguishability\footnote{\texttt{erasure.v:Erasure\_Indistinguishability}}] \label{erasure_ind}
If $ \Phi  \vdash  \ottnt{a}  :  \ell $, then $ \Phi  \vdash  \ottnt{a}  \sim_{ \ell }   \lfloor  \ottnt{a}  \rfloor_ \ell  $.
\end{lemma}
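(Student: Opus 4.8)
The plan is to prove this by induction on the grading derivation $ \Phi  \vdash  \ottnt{a}  :  \ell $ (equivalently, by structural induction on $\ottnt{a}$ guided by the grading rules). The goal in each case is to produce a derivation of $ \Phi  \vdash  \ottnt{a}  \sim_{ \ell }   \lfloor  \ottnt{a}  \rfloor_ \ell  $, and the engine of the argument is that $\lfloor \cdot \rfloor_\ell$ is defined homomorphically at every node except the two ``guarded'' positions — the argument of an application $ \ottnt{b}  \;  \ottnt{a} ^{ \ell_{{\mathrm{0}}} } $ and the first component of a pair $ ( \ottnt{a} ^{ \ell_{{\mathrm{0}}} },  \ottnt{b} ) $ — so that each clause of the definition is matched against the corresponding congruence rule of $\sim_\ell$.

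For every node whose erasure is homomorphic — variables, sorts, $\ottkw{unit}$/$\ottkw{Unit}$, $\Pi$, $\Sigma$, $\lambda$, $\ottkw{let}$, injections, $\ottkw{case}$, as well as the function part of an application and the second component of a pair — I would simply apply the induction hypotheses to the sub-derivations and close with the matching $\sim_\ell$ congruence (or base) rule; in the binder cases the context is threaded, gaining the bound variable at the grade prescribed by the grading rule (e.g.\ $ \ell_{{\mathrm{0}}}  \vee  \ell $ for a $\lambda$-bound variable, mirroring \rref{DCT-Abs}). Reflexivity of $\sim_\ell$ on well-graded terms (Lemma~\ref{ind_equivalence}) is available if any leaf needs it.

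The interesting cases, application and pairing, hinge on the extended-equivalence judgment $ \Phi  \vdash^{ \ell_{{\mathrm{0}}} }_{ \ell }  \ottnt{a_{{\mathrm{1}}}}  \sim  \ottnt{a_{{\mathrm{2}}}} $, and I split on whether $\ell_{{\mathrm{0}}}  \leq  \ell$. If $\ell_{{\mathrm{0}}}  \leq  \ell$, then $ \ell_{{\mathrm{0}}}  \vee  \ell  = \ell$, so the grading premise places the guarded sub-term $\ottnt{a}$ at level $\ell$; the induction hypothesis yields $ \Phi  \vdash  \ottnt{a}  \sim_{ \ell }   \lfloor  \ottnt{a}  \rfloor_ \ell  $, and \rref{SEq-Leq} lifts this to $ \Phi  \vdash^{ \ell_{{\mathrm{0}}} }_{ \ell }  \ottnt{a}  \sim   \lfloor  \ottnt{a}  \rfloor_ \ell  $, which combines with the induction hypothesis for $\ottnt{b}$ to reconstruct $\lfloor  \ottnt{b}  \;  \ottnt{a} ^{ \ell_{{\mathrm{0}}} } \rfloor_\ell = (  \lfloor  \ottnt{b}  \rfloor_ \ell   \;   \lfloor  \ottnt{a}  \rfloor_ \ell  ^{ \ell_{{\mathrm{0}}} } )$. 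If instead $ \neg  \ottsym{(}  \ell_{{\mathrm{0}}}  \leq  \ell  \ottsym{)} $, erasure replaces $\ottnt{a}$ by $\ottkw{unit}$, and \rref{SEq-Nleq} gives $ \Phi  \vdash^{ \ell_{{\mathrm{0}}} }_{ \ell }  \ottnt{a}  \sim  \ottkw{unit} $ vacuously; here I only need that $\ottnt{a}$ is well-graded (read off from the grading premise, at level $ \ell_{{\mathrm{0}}}  \vee  \ell $) and that $\ottkw{unit}$ is well-graded (immediate). Pairing is identical in shape, using the $\Sigma$/pair rules in place of the $\Pi$/application ones.

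The one auxiliary fact I expect to need is that erasure preserves grading: $ \Phi  \vdash  \ottnt{a}  :  \ell $ implies $ \Phi  \vdash   \lfloor  \ottnt{a}  \rfloor_ \ell   :  \ell $, proved by the same induction (replacing a sub-term by $\ottkw{unit}$ is harmless since $\ottkw{unit}$ grades at any level). This is used to discharge the well-gradedness side conditions that appear in the $\sim_\ell$ congruence rules and in \rref{SEq-Nleq}. I do not foresee a genuine obstacle; the only place demanding care is the grade bookkeeping in the two guarded cases — in particular, noticing that $\ell_{{\mathrm{0}}}  \leq  \ell$ is exactly what collapses $ \ell_{{\mathrm{0}}}  \vee  \ell $ to $\ell$ and thereby makes the induction hypothesis directly applicable to the guarded sub-term, whereas in the complementary case the induction hypothesis is not invoked at all and \rref{SEq-Nleq} does all the work.
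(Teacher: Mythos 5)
Your proof is correct and follows the same route as the paper's (mechanized) argument: induction on the grading derivation, homomorphic congruence everywhere except the two guarded positions, and a case split on $\ell_{{\mathrm{0}}}  \leq  \ell$ there, using the fact that $ \ell_{{\mathrm{0}}}  \vee  \ell  = \ell$ to apply the induction hypothesis in the visible case and the Nleq rule of the conditional judgment in the hidden case. The auxiliary claim that erasure preserves grading is harmless but not really needed here, since the only well-gradedness side conditions not discharged by the induction hypotheses concern the original sub-term (given by the grading premise) and $\ottkw{unit}$ (trivial).
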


Next, we can show that erased terms simulate the reduction behavior of their unerased counterparts.

\begin{lemma}[Erasure Simulation\footnote{\texttt{erasure.v:Step\_erasure,Value\_erasure}}]
If $ \Phi  \vdash  \ottnt{a}  :  \ell $ and $ \ottnt{a}  \leadsto  \ottnt{b} $, then $  \lfloor  \ottnt{a}  \rfloor_ \ell   \leadsto   \lfloor  \ottnt{b}  \rfloor_ \ell  $. Otherwise, if $\ottnt{a}$ is a value, then so is $ \lfloor  \ottnt{a}  \rfloor_ \ell $.
\end{lemma}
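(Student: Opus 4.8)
The plan is to induct on the derivation of $a \leadsto b$, performing inversion on the grading derivation $\Phi \vdash a : \ell$ at each step to recover the grades of the relevant sub-terms; no separate subject-reduction property for the grading judgment is required, since the grades of the immediate sub-terms of the redex (or of the sub-term being reduced in a congruence step) come straight out of inversion. The second clause of the statement --- if $a$ is a value then so is $\lfloor a\rfloor_\ell$ --- is a direct case analysis on the value forms ($\mathsf{unit}$, a sort, a $\Pi$-, $\Sigma$- or $+$-type, $\lambda^{\ell_0}x.\,b$, a pair $(a^{\ell_0}, b)$, and $\mathsf{inj}_i\, v$): erasure preserves each head constructor, and any guarded sub-component is either erased recursively (still a value) or replaced by the value $\mathsf{unit}$.

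The fact that does the real work is that erasure commutes with substitution, $\lfloor b\{a/x\}\rfloor_\ell = \lfloor b\rfloor_\ell\{\lfloor a\rfloor_\ell/x\}$, which I would prove by structural induction on $b$; the only point to check is the guarded positions (an application argument, a pair's first component), where if the guard $\ell_0$ has $\ell_0 \leq \ell$ both sides recurse so the induction hypothesis applies, and if $\neg(\ell_0 \leq \ell)$ both sides replace the sub-term by the closed term $\mathsf{unit}$, so the equality still holds. I would also record a companion fact: if $\Phi, x\!:\!k \vdash b : \ell$ with $\neg(k \leq \ell)$, then $x$ does not occur free in $\lfloor b\rfloor_\ell$ --- by induction on the grading derivation (the variable rule cannot apply to such a variable, and guarded sub-terms above $\ell$ are erased to $\mathsf{unit}$); alternatively this can be read off from Erasure Indistinguishability (Lemma~\ref{erasure_ind}) together with Canonical Element.

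The inductive cases are then routine. For the call-by-name congruence rules, the reduction occurs in a head position (the function of an application, the scrutinee of a $\mathsf{let}$ or $\mathsf{case}$); erasure is homomorphic there and the $\ell_0 \leq \ell$ test guarding the remaining positions does not involve the reducing sub-term, so the induction hypothesis (applied to that sub-term, well-graded by inversion) yields exactly the erased reduct. The $\beta$-rule for $\mathsf{case}/\mathsf{inj}$ is immediate since erasure commutes with $\mathsf{case}$, $\mathsf{inj}_i$ and application. The $\beta$-rules for $\Pi$ and $\Sigma$ are where the content lies. For $(\lambda^{\ell_0}x.\,b)\;a^{\ell_0} \leadsto b\{a/x\}$, inversion gives $b$ graded at $\ell$ with $x$ at grade $\ell_0 \vee \ell$; the erased redex is $(\lambda^{\ell_0}x.\,\lfloor b\rfloor_\ell)\;\lfloor a\rfloor_\ell^{\ell_0}$ if $\ell_0 \leq \ell$ and $(\lambda^{\ell_0}x.\,\lfloor b\rfloor_\ell)\;\mathsf{unit}^{\ell_0}$ otherwise, which $\beta$-reduces to $\lfloor b\rfloor_\ell\{\lfloor a\rfloor_\ell/x\}$ respectively $\lfloor b\rfloor_\ell\{\mathsf{unit}/x\}$. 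In the first case this is $\lfloor b\{a/x\}\rfloor_\ell$ by the substitution fact; in the second, $\neg(\ell_0 \leq \ell)$ gives $\neg(\ell_0 \vee \ell \leq \ell)$, so by the companion fact $x \notin \mathrm{fv}(\lfloor b\rfloor_\ell)$ and hence $\lfloor b\rfloor_\ell\{\mathsf{unit}/x\} = \lfloor b\rfloor_\ell = \lfloor b\rfloor_\ell\{\lfloor a\rfloor_\ell/x\} = \lfloor b\{a/x\}\rfloor_\ell$. The $\mathsf{let}$-pair $\beta$-rule is handled the same way with a double substitution: the second pattern variable is always graded at the ambient $\ell$, so the substitution fact applies to it with no case split, and the first pattern variable is treated exactly as $x$ above. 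The main obstacle is precisely this split in the $\Pi/\Sigma$ $\beta$-cases when the argument sits at a grade not below $\ell$: there the erased side substitutes $\mathsf{unit}$ in place of $\lfloor a\rfloor_\ell$, and closing the gap needs the grading bookkeeping that forces the bound variable to be invisible at $\ell$ in the body.
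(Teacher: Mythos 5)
Your proof is correct, but it takes a genuinely different route from the paper's. The paper obtains this lemma essentially for free from two results it has already established: Erasure Indistinguishability ($\Phi \vdash a \sim_\ell \lfloor a\rfloor_\ell$) together with the Non-interference theorem, which says that $\sim_\ell$-related terms step in lockstep to $\sim_\ell$-related terms. Instantiating non-interference with the pair $(a, \lfloor a\rfloor_\ell)$ yields some $c$ with $\lfloor a\rfloor_\ell \leadsto c$ and $b \sim_\ell c$, and Canonical Element (plus the fact that erasure is idempotent, so $c$ is its own canonical representative) identifies $c$ with $\lfloor b\rfloor_\ell$. You instead re-prove the simulation directly by induction on $a \leadsto b$, which means re-establishing, specialized to erasure, exactly the ingredients that the paper's proof of non-interference already packages in general form: your substitution-commutation fact $\lfloor b\{a/x\}\rfloor_\ell = \lfloor b\rfloor_\ell\{\lfloor a\rfloor_\ell/x\}$ and your ``invisible variables do not occur in the erasure'' fact play the role of the Indistinguishability-under-substitution lemma, and your case split on $\ell_0 \leq \ell$ in the $\beta$-cases mirrors the extended-equivalence side condition there. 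Both arguments are sound and you handle the one delicate point (the guard failing in a $\beta$-redex, where the erased side substitutes $\mathsf{unit}$) correctly via the free-variable fact. What the paper's route buys is economy and a single point of trust --- all the substitution bookkeeping lives inside one theorem that is needed anyway; what your route buys is independence from the non-interference machinery and a proof that computes the reduct of $\lfloor a\rfloor_\ell$ explicitly rather than extracting it from an existential.
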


This lemma follows from Lemma \ref{erasure_ind} and the non-interference theorem (Theorem \ref{lemma:GEq_respects_Step}). Therefore, it is safe to erase, before run time, all sub-terms marked with $ { \color{black}{\top} } $.

This shows that we can correctly analyze run-time irrelevance using $\textsc{DDC}^{\top}$. However, supporting compile-time irrelevance requires some changes to the system. We take them up in the next section.

\section{\DDC{}: Run-time and Compile-time Irrelevance}
\label{sec:compile-time-irrelevance}

\subsection{Towards Compile-time Irrelevance} 


Recall that terms which may be safely ignored while checking for type equality are said to be compile-time irrelevant. In $\textsc{DDC}^{\top}$, the conversion \rref{DCT-Conv} checks for type equality at $ { \color{black}{\top} } $. 
\[ \drule[width=3in]{DCT-Conv} \]
The equality judgment used in this rule $ \Phi  \vdash  \ottnt{a}  \equiv_{  { \color{black}{\top} }  }  \ottnt{b} $ is an instantiation of the general judgment $ \Phi  \vdash  \ottnt{a}  \equiv_{ \ell }  \ottnt{b} $, which is the closure of the indistinguishability relation at $\ell$ under $\beta$-equivalence. When $\ell$ is $ { \color{black}{\top} } $, indistinguishability is just identity. As such, the equality relation at $ { \color{black}{\top} } $ degenerates to standard $\beta$-equivalence. So, \rref{DCT-Conv} does not ignore any part of the terms when checking for type equality. 

To support compile-time irrelevance then, we need the conversion rule to use equality at some grade strictly less than $ { \color{black}{\top} } $ so that $ { \color{black}{\top} } $-marked terms may be ignored. For the irrelevance lattice $\mathcal{L}_I$, the level $ { \color{black}{C} } $ can be used for this purpose. For any other lattice $\mathcal{L}$, we can add two new elements, $ { \color{black}{C} } $ and $ { \color{black}{\top} } $, above every other existing element, such that $\mathcal{L} <  { \color{black}{C} }  <  { \color{black}{\top} } $, and thereafter use level $ { \color{black}{C} } $ for this purpose. So, for any lattice, we can support compile-time irrelevance by equating types at $ { \color{black}{C} } $.
 
Referring back to the examples in Section \ref{comp-irr}, note that for \cd{phantom : Nat$\!^{ { \color{black}{\top} } }$ -> Type}, we have \cd{phantom 0$^{ { \color{black}{\top} } }$ $\equiv_{ { \color{black}{C} } }$ phantom 1$^{ { \color{black}{\top} } }$}. With this equality, we can type-check \cd{idp : phantom 0$^{ { \color{black}{\top} } }$ -> phantom 1$^{ { \color{black}{\top} } }$ = λ x. x}, even without knowing the definition of \cd{phantom}.

Now, observe that in \rref{DCT-Conv}, the new type $\ottnt{B}$ is also checked at $ { \color{black}{\top} } $. If we want to check for type equality at $ { \color{black}{C} } $, we need to make sure that the types themselves are checked at $ { \color{black}{C} } $. However, checking types at $ { \color{black}{C} } $ would rule out variables marked at $ { \color{black}{\top} } $ from appearing in them. This would restrict us from expressing many examples, including the polymorphic identity function.

To move out of this impasse, we take inspiration from EPTS~\citep{mishra,mishra-linger:phd}. The key idea, adapted from \citet{mishra}, is to use a judgment of the form $  { \color{black}{C} }   \wedge  \Omega  \vdash  \ottnt{a} :^{  { \color{black}{C} }  }  \ottnt{A} $ instead of a judgment of the form $ \Omega  \vdash  \ottnt{a} :^{  { \color{black}{\top} }  }  \ottnt{A} $. The operation $ { \color{black}{C} }   \wedge  \Omega$ takes the point-wise meet of the labels in the context $\Omega$ with $ { \color{black}{C} } $, essentially reducing any label marked as $ { \color{black}{\top} } $ to $ { \color{black}{C} } $, making it available for use in a $ { \color{black}{C} } $-expression. This operation, called \emph{truncation}, makes $ { \color{black}{\top} } $ marked variables available at $ { \color{black}{C} } $. Other systems also use similar mechanisms for tracking irrelevance --- for example, we can see a relation between this idea and analogous ones in \cite{pfenning:2001} and \cite{abel}. In these systems, ``context resurrection'' operation makes proof variables and irrelevant variables in the context available for use, similar to
how $C \wedge \Omega$ makes $\top$-marked variables in the context available for use. 

\subsection{DDC: Basics}

Next, we design a general dependency analyzing calculus, \DDC{}, 
that takes advantage of compile-time irrelevance in its type system.
\DDC{} is a generalization of $\textsc{DDC}^{\top}$ and
$\text{EPTS}^{\bullet}$ \citep{mishra}. When $ { \color{black}{C} } $ equals $ { \color{black}{\top} } $, DDC
degenerates to $\textsc{DDC}^{\top}$, that does not use compile-time
irrelevance. When $ { \color{black}{C} } $ equals $ { \color{black}{\bot} } $, 
DDC degenerates to $\text{EPTS}^{\bullet}$, that identifies compile-time and run-time
irrelevance. A crucial distinction between $\text{EPTS}^{\bullet}$ and DDC is that
while the former is tied to a two element lattice, the latter can use any lattice. 
Thus, not only can \DDC{} distinguish between run-time and compile-time irrelevance,
but also it can simultaneously track other dependencies.

\begin{figure}[h]
  \drules[T]
  {$ \Omega  \vdash  \ottnt{a} :^{ \ell }  \ottnt{A} $}{\DDC{} core typing rules}{Var,Type,Pi,AbsC,AppC,ConvC} 
\begin{drulepar*}[]{$ \Omega \Vdash   \ottnt{a} :^{ \ell }  \ottnt{A} $}{Truncate at $ { \color{black}{\top} } $}
    \drule{CT-Leq}
    \drule[width=3in]{CT-Top}
  \end{drulepar*}
\caption{Dependent type system with compile-time irrelevance (core rules)}
\label{fig:ddc}
\label{fig:truncation}
\end{figure}

The core typing rules of DDC appear in Figure~\ref{fig:ddc}. Compared to
$\textsc{DDC}^{\top}$, this type system maintains the invariant that for any
$ \Omega  \vdash  \ottnt{a} :^{ \ell }  \ottnt{A} $, we have $\ell  \leq   { \color{black}{C} } $. To ensure that this is
the case, \rref{T-Type} and \rref{T-Var} include this precondition. This restriction means that we cannot 
really derive any term at $ { \color{black}{\top} } $ in DDC. We can get around this restriction by deriving $  { \color{black}{C} }   \wedge  \Omega  \vdash  \ottnt{a} :^{  { \color{black}{C} }  }  \ottnt{A} $ in place of $ \Omega  \vdash  \ottnt{a} :^{  { \color{black}{\top} }  }  \ottnt{A} $. 

Wherever $\textsc{DDC}^{\top}$ uses $ { \color{black}{\top} } $ as the observer level on a typing judgment, DDC uses truncation and level $ { \color{black}{C} } $ instead. If $\textsc{DDC}^{\top}$ uses some grade other than $ { \color{black}{\top} } $ as the observer level, DDC leaves the derivation as such. So a $\textsc{DDC}^{\top}$ judgment $ \Omega  \vdash  \ottnt{a} :^{ \ell }  \ottnt{A} $ is replaced with a \emph{truncated-at-top judgment}, $ \Omega \Vdash   \ottnt{a} :^{ \ell }  \ottnt{A} $ which can be read as: if $\ell =  { \color{black}{\top} } $, use the truncated version $  { \color{black}{C} }   \wedge  \Omega  \vdash  \ottnt{a} :^{  { \color{black}{C} }  }  \ottnt{A} $; otherwise use the normal version $ \Omega  \vdash  \ottnt{a} :^{ \ell }  \ottnt{A} $, as we see in Figure \ref{fig:truncation}. In the typing rules, uses of this new judgment have been highlighted in gray to emphasize the modification with respect to \DDCt{}.

\subsection{$\Pi$-types}

\Rref{T-Pi} is unchanged. The lambda \rref{T-AbsC} now checks the type at $ { \color{black}{C} } $ after truncating the variables in the context to $ { \color{black}{C} } $. The application \rref{T-AppC} checks the argument using the truncated-at-top judgment. Note that if $\ell_{{\mathrm{0}}} =  { \color{black}{\top} } $, the term $\ottnt{a}$ can depend upon any variable in $\Omega$. Such a dependence is allowed since information can always flow from relevant to  irrelevant contexts. 

To see how irrelevance works in this system, let's consider the
definition and use of the polymorphic identity function.

\begin{lstlisting}
   id : Π x:$^{ { \color{black}{\top} } }$Type. x -> x 
   id = λ$^{ { \color{black}{\top} } }$x. λ y. y
\end{lstlisting}

In \DDCt{}, the type \lstinline{Π x:$^{ { \color{black}{\top} } }$Type. x -> x} is checked at $ { \color{black}{\top} } $.
However, here it must be checked at level $ { \color{black}{C} } $, which requires the premise \hspace{2pt}\lstinline{x:$^{ { \color{black}{C} } }$Type $\vdash$ x -> x :$^{ { \color{black}{C} } }$ Type}. Note that if we used the same grade for the bound variable $\ottmv{x}$ in \rref{T-Pi} and \rref{T-AbsC}, we would have been in trouble because variable \cd{x} is compile-time relevant while we check the type, even though it is irrelevant in the term.\footnote{This is why we fuse the graded modality with the dependent types. If they 
were separated, and we had to bind here, it would be a problem since a dependent function and its type have different restrictions vis-\`{a}-vis the bound variable. }


Finally, observe that \rref{T-ConvC} uses the definitional equality at
$ { \color{black}{C} } $ instead of $ { \color{black}{\top} } $ and that the new type is checked after truncation.



\subsection{$\Sigma$-types}
%
%


\[ \drule{T-WPairC} \qquad \drule{T-LetPairC} \]

We also need to modify the typing rules for $\Sigma$ types accordingly. In particular, when we create a pair, we check the first component using the truncated-at-top judgment. This is akin to how we check the argument in \rref{T-AppC}. Note that if $\ell_{{\mathrm{0}}} =  { \color{black}{\top} } $, the first component $\ottnt{a}$ is compile-time irrelevant. In such a situation, we cannot type-check the second projection since it requires the first projection, as we see in the derived\footnote{\texttt{strong\_exists.v:T\_wproj1,T\_wproj2}} projection rules below. So pairs having type $ \Sigma  \ottmv{x} \!\!:^{  { \color{black}{\top} }  }\!\! \ottnt{A} . \ottnt{B} $ can only be eliminated via pattern matching if $B$ mentions $x$. However, pairs having type $ \Sigma  \ottmv{x} \!\!:^{  { \color{black}{C} }  }\!\! \ottnt{A} . \ottnt{B} $ can be eliminated via projections.

For example, for an output of the \cd{filter} function, \cd{ys : $\Sigma$m:$\!^{ { \color{black}{C} } }$Nat. Vec m Bool}, we have \cd{$\pi_1$ ys} $:^{ { \color{black}{C} } }$ \cd{Nat} and \cd{$\pi_2$ ys : Vec ($\pi_1$ ys) Bool}. Note that (\cd{$\pi_1$ ys}) is visible at $ { \color{black}{C} } $ and is used in the type of (\cd{$\pi_2$ ys}). We can substitute (\cd{$\pi_1$ ys}) for \cd{m} in (\cd{Vec m Bool}) because \cd{m :$^{ { \color{black}{C} } }$Nat $\ \vdash$ Vec m Bool $\ :^{ { \color{black}{C} } }$ Type}. However, (\cd{$\pi_1$ ys}) cannot be used at $ { \color{black}{\bot} } $, so it will be erasable then.

\[ \drule[width=3in]{T-ProjOne} \qquad \drule[width=3in]{T-ProjTwoC} \]



%

\subsection{Non-interference}
\label{sec:ddc-geq}

DDC satisfies an analogous noninterference theorem to the one presented for
SDC, using suitable definitions for the \emph{grading} relation, written $ \Phi  \vdash  \ottnt{a}  :  \ell $, and \emph{indexed
indistiguishability}, written $ \Phi  \vdash  \ottnt{b_{{\mathrm{1}}}}  \sim_{ \ell }  \ottnt{b_{{\mathrm{2}}}} $. The complete definition of these judgements appears in \auxiliarymaterial.

\begin{lemma}[Typing implies grading\footnote{\texttt{typing.v:Typing\_Grade}}] \label{DDC:typing_grading}
If $ \Omega  \vdash\,  \ottnt{a}  \, :^{ \ell }  \,  \ottnt{A} $ then $  |  \Omega  |   \vdash  \ottnt{a}  :  \ell $.
\end{lemma}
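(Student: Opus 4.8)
The plan is to prove the lemma by structural induction on the derivation of $ \Omega  \vdash\,  \ottnt{a}  \, :^{ \ell }  \,  \ottnt{A} $, simultaneously with a claim for the truncated-at-top judgment: whenever $ \Omega \Vdash   \ottnt{a} :^{ \ell }  \ottnt{A} $, we have $ |  \Omega  |  \Vdash  \ottnt{a} :^{ \ell } $, where the right-hand $\Vdash$ is the evident lifting of \rref{CT-Leq} and \rref{CT-Top} to the untyped \emph{grading} relation. The essential point is that the grading judgment is the ``level skeleton'' of typing: each grading rule is the corresponding typing rule with all type information erased, so the proof amounts to a rule-by-rule translation of the derivation. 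Two bookkeeping identities about context erasure are used throughout: it distributes over context extension, $ |   \Omega ,   \ottmv{x} \! :^{ \ell }\! \ottnt{A}    |  =  |  \Omega  | ,  \ottmv{x}  \! :  \ell  $, and over truncation, $ |   { \color{black}{C} }   \wedge  \Omega   |  =  { \color{black}{C} }   \wedge   |  \Omega  | $.

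For the truncated-at-top claim, \rref{CT-Leq} (the subcase $\ell \neq  { \color{black}{\top} } $) is immediate from the induction hypothesis, and \rref{CT-Top} (the subcase $\ell =  { \color{black}{\top} } $) applies the induction hypothesis to the subderivation $  { \color{black}{C} }   \wedge  \Omega  \vdash  \ottnt{a} :^{  { \color{black}{C} }  }  \ottnt{A} $ to obtain $  { \color{black}{C} }   \wedge   |  \Omega  |   \vdash  \ottnt{a}  :   { \color{black}{C} } $, which is the desired conclusion after rewriting by the truncation identity above. For the main induction, \rref{T-Var} and \rref{T-Type} translate directly, keeping the precondition $\ell  \leq   { \color{black}{C} } $; \rref{T-Pi}, \rref{T-AbsC}, \rref{T-AppC}, \rref{T-WPairC} and \rref{T-LetPairC} each translate to the structurally identical grading rule, with the (truncated-)typing premises discharged by induction hypotheses, with context extensions translated using the first identity, and with all side conditions on labels (the joins $ \ell_{{\mathrm{0}}}  \vee  \ell $ placed on newly-bound variables, the constraints $\ell_{{\mathrm{0}}}  \leq  \ell$, and so on) copied verbatim. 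The only case of a slightly different character is \rref{T-ConvC}: because grading is untyped, it carries no non-trivial conversion, so we simply appeal to the induction hypothesis on the subderivation $ \Omega  \vdash  \ottnt{a} :^{ \ell }  \ottnt{A} $ and obtain $ |  \Omega  |   \vdash  \ottnt{a}  :  \ell $ directly; the definitional-equality and type-well-formedness premises of \rref{T-ConvC} are not needed.

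The main obstacle is getting the truncation bookkeeping exactly right rather than any individual case: one must verify that erasure commutes with $ { \color{black}{C} }   \wedge (\cdot)$ and that the mutual recursion between the plain and truncated-at-top judgments is well founded, which it is, since the \rref{CT-Top} premise is a strictly smaller (sub)derivation, merely at level $ { \color{black}{C} } $. A secondary point, relevant only if the grading relation is formulated with residual structure at conversion (for instance, demanding the new type be well graded), is that one would then also need definitional equality to be well graded on both sides at its index level; that follows from the ``Contains guarded equivalence'' lemma together with the fact that $\beta$-reduction preserves grading, but is unnecessary with the untyped grading relation used here.
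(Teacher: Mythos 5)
Your proposal is correct and takes essentially the same route as the paper's (mechanized) proof \texttt{typing.v:Typing\_Grade}: a rule-by-rule induction on the typing derivation in which each typing rule maps onto its grading counterpart, the conversion case collapses to the induction hypothesis because grading is untyped, and the only real bookkeeping is that context erasure commutes with extension and with truncation by $ { \color{black}{C} } $. Nothing further is needed.
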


\begin{lemma}[Equivalence\footnote{\texttt{geq.v:GEq\_refl,GEq\_symmetry,GEq\_trans}}] \label{DDC:ind_equivalence}
Indexed indistinguishability at $\ell$ is an equivalence relation on well-graded terms at $\ell$.
\end{lemma}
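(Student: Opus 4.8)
The plan is to prove the three defining properties of an equivalence relation separately, matching the three mechanized lemmas \texttt{GEq\_refl}, \texttt{GEq\_symmetry}, and \texttt{GEq\_trans}. In each case I would simultaneously establish the corresponding property of the auxiliary \emph{extended equivalence} relation $\Phi \vdash^{\ell_0}_{\ell} a_1 \sim a_2$ (the DDC analogue of the SEq-Leq/SEq-Nleq rules), since it appears as a premise in the indistinguishability rules for pairs and for the embedded graded modality. The overall structure follows the SDC proof of Lemma~\ref{ind_equivalence}; the extra syntactic forms of DDC (sorts, unit, $\Pi$, $\lambda$, application, $\Sigma$, pairing, let-binding, sums, injections, case) all fit the same template, and crucially $\sim_\ell$ is untyped and contains no $\beta$-rules (those live in $\equiv_\ell$), so the whole argument is purely structural.

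\textbf{Reflexivity} is the only part that really uses well-gradedness. I would show that $\Phi \vdash a : \ell$ implies $\Phi \vdash a \sim_\ell a$ by induction on the grading derivation. Since the grading rules are obtained from the indistinguishability rules by replacing each $\sim_\ell$ premise with its grading counterpart, every case simply reassembles its conclusion from the induction hypotheses. The one case needing care is the one whose indistinguishability rule invokes extended equivalence on a subterm --- the first component of a pair $(a^{\ell_0},b)$, equivalently $\eta^{\ell_0}\,a$: there I split on whether $\ell_0 \leq \ell$, using reflexivity of $\sim_\ell$ from the induction hypothesis when the component is visible and the SEq-Nleq-style rule (which permits arbitrary terms) when it is not. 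Note that in a $\Pi$- or $\Sigma$-type the domain and codomain may be graded at different levels, but the grading derivation records exactly those levels, so the induction hypotheses apply as-is.

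\textbf{Symmetry} goes by induction on $\Phi \vdash a \sim_\ell b$. Each rule relates two terms with the same head constructor and recursively relates corresponding subterms, hence is symmetric up to swapping, and each case closes by re-applying the same rule to the symmetrized induction hypotheses; the extended-equivalence cases are symmetric in the same way (the $\ell_0 \not\leq \ell$ case symmetrically). Symmetry and transitivity do not themselves require well-gradedness; the lemma bundles them with reflexivity only for a uniform statement.

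\textbf{Transitivity} is the main obstacle. I would argue by simultaneous induction on $\Phi \vdash a \sim_\ell b$ and $\Phi \vdash b \sim_\ell c$. The crucial point is that indistinguishability is \emph{syntax-directed}: related terms have the same outermost constructor, so all the ``hiding'' is confined to the extended-equivalence premises on the designated subterm positions and never occurs at the top. Thus the middle term $b$ pins down which rule concludes both derivations; I can invert the second derivation, align it with the first, and discharge each matched subgoal with the induction hypothesis. For a subterm governed by extended equivalence I split once more on $\ell_0 \leq \ell$: when $\ell_0 \leq \ell$ both premises are genuine $\sim_\ell$ derivations and transitivity of $\sim_\ell$ (the induction hypothesis) applies; when $\ell_0 \not\leq \ell$ the desired $\Phi \vdash^{\ell_0}_{\ell} a' \sim c'$ holds immediately by the SEq-Nleq-style rule regardless of $a'$ and $c'$. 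The variable case is trivial since the variable rule relates a name only to itself. The genuine work is in checking that this inversion step is sound across all of DDC's grammar --- that every indistinguishability rule is head-constructor-directed and that the extended-equivalence relation is the sole source of looseness --- after which transitivity is in every case either the induction hypothesis or a direct appeal to the $\ell_0 \not\leq \ell$ rule.
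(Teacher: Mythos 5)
Your proposal is correct and follows the same route as the paper's mechanization, which proves reflexivity (on well-graded terms, by induction on the grading derivation), symmetry, and transitivity as three separate structural inductions, handling the conditional/extended-equivalence premises by the case split on $\ell_{{\mathrm{0}}}  \leq  \ell$ exactly as you describe. The key observations you identify --- that the relation is untyped, head-constructor-directed, and that all looseness is confined to the conditionally-related subterm positions --- are precisely what makes the paper's syntactic proof go through.
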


\begin{lemma}[Indistinguishability under substitution\footnote{\texttt{subst.v:CEq\_GEq\_equality\_substitution}}] 
  If $  \Phi  ,   \ottmv{x}  \! :  \ell    \vdash  \ottnt{b_{{\mathrm{1}}}}  \sim_{ k }  \ottnt{b_{{\mathrm{2}}}} $ and $ \Phi  \vdash^{ \ell }_{ k }  \ottnt{a_{{\mathrm{1}}}}  \sim  \ottnt{a_{{\mathrm{2}}}} $ then $ \Phi  \vdash  \ottnt{b_{{\mathrm{1}}}}  \ottsym{\{}  \ottnt{a_{{\mathrm{1}}}}  \ottsym{/}  \ottmv{x}  \ottsym{\}}  \sim_{ k }  \ottnt{b_{{\mathrm{2}}}}  \ottsym{\{}  \ottnt{a_{{\mathrm{2}}}}  \ottsym{/}  \ottmv{x}  \ottsym{\}} $.
\end{lemma}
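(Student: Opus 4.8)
The plan is to generalize the statement so that the substituted variable may sit anywhere in the context — proving it for all splits $\Phi_1, x{:}\ell, \Phi_2$ — and to prove it by \emph{simultaneous} induction together with the companion claim for the auxiliary extended equivalence: if $\Phi_1, x{:}\ell, \Phi_2 \vdash^{\ell_0}_k b_1 \sim b_2$ and $\Phi_1 \vdash^{\ell}_k a_1 \sim a_2$, then $\Phi_1, \Phi_2 \vdash^{\ell_0}_k b_1\{a_1/x\} \sim b_2\{a_2/x\}$. Both generalizations are forced: the indistinguishability rule for a grade-tagged subterm (the first component of a pair, the argument of an application, the analogue of $\eta^{\ell_0}$) is stated in terms of $\vdash^{\ell_0}_k$, so the two relations must be attacked together; and every congruence rule that binds a variable pushes $x$ one slot deeper in the context, so the ``$x$ at the end'' form is not closed under the induction. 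I would assume, as usual, that bound variables are chosen fresh for $a_1$ and $a_2$, and I would also use the routine weakening lemma for $\sim_k$.

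The extended-equivalence half is immediate. If the derivation ends in the $\ell_0 \not\leq k$ rule, the conclusion holds with no appeal to an induction hypothesis, since that rule relates \emph{arbitrary} terms and $b_1\{a_1/x\}$, $b_2\{a_2/x\}$ are terms. If it ends in the $\ell_0 \leq k$ rule, I strip it off to the $\sim_k$ premise, apply the indistinguishability induction hypothesis, and re-apply the $\ell_0 \leq k$ rule. The indistinguishability half is then a structural induction whose congruence cases ($\Pi$, $\lambda$, application, $\Sigma$, pair, $\ottkw{let}$, $+$, $\ottkw{inj}_i$, $\ottkw{case}$, and the analogues of $\ottkw{bind}$/$\eta$) all follow the same pattern: commute the substitution past the head constructor (using freshness for binders), apply the appropriate induction hypothesis to each subderivation — the extended-equivalence half to grade-tagged positions, the generalized-context form to premises under a binder — and reassemble with the same rule.

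All the content sits in the variable rule. If the two sides are some $y \neq x$, the substitution is the identity on both and the rule re-derives the goal, since $y$'s grade in the context is unchanged and so its side condition (grade $\leq k$) still holds. If the two sides are $x$ itself, the side condition of that rule applied to the assumption $x{:}\ell$ tells us $\ell \leq k$; hence the second hypothesis $\Phi_1 \vdash^{\ell}_k a_1 \sim a_2$ could only have been derived by the $\ell_0 \leq k$ rule of the extended equivalence, which hands us exactly $\Phi_1 \vdash a_1 \sim_k a_2$, and weakening by $\Phi_2$ finishes it. Dually, when $\ell \not\leq k$ this subcase is impossible, so every free occurrence of $x$ in $b_1$ or $b_2$ lies inside a grade-tagged position whose tag does not flow to $k$, where the substitution is absorbed by the $\ell_0 \not\leq k$ rule — which is exactly why the lemma holds even with no information available about $a_1 \sim_k a_2$ in that regime.

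I expect the real obstacle to be organizational rather than mathematical: pinning down the mutual-induction statement and the context generalization so that the binder cases and the grade-tagged-subterm case line up, and noticing that the side condition on the variable rule (an assumption $x{:}\ell$ can only be \emph{used} at levels $\geq \ell$) is precisely what licenses the inversion of the second hypothesis in the one case where it is needed. None of the dependent-type features (conversion, truncation, the $C$/$\top$ distinction) enter, since both $\sim_k$ and $\vdash^{\ell_0}_k$ are untyped and ungraded; the SDC argument transfers essentially verbatim to \DDC{}'s larger grammar.
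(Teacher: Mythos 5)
Your proposal is correct and matches the structure of the paper's (mechanized) proof: the Coq lemma \texttt{CEq\_GEq\_equality\_substitution} is itself a mutual induction over the indistinguishability and conditional (extended) relations, with the substituted variable generalized to an arbitrary position in the context, and the only substantive case is the variable rule, which you resolve exactly as the development does --- the side condition $\ell \leq k$ on the variable rule forces the conditional hypothesis to have been built by the $\ell \leq k$ rule, yielding $\Phi \vdash a_1 \sim_k a_2$ directly. The one detail to double-check is whether the $\neg(\ell_0 \leq k)$ rule of the conditional relation carries a well-gradedness side condition on its two terms (the appendix's ``conditional grading'' judgment suggests the mechanization threads one through); if so, that case additionally needs a routine substitution lemma for the grading judgment, but nothing in your argument would otherwise change.
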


\begin{theorem}[Non-interference for DDC\footnote{\texttt{geq.v:CEq\_GEq\_respects\_Step}}]
\label{lemma:DDC:GEq_respects_Step}
  If $ \Phi  \vdash  \ottnt{a_{{\mathrm{1}}}}  \sim_{ k }  \ottnt{a'_{{\mathrm{1}}}} $ and $ \ottnt{a_{{\mathrm{1}}}}  \leadsto  \ottnt{a_{{\mathrm{2}}}} $ then there exists some $\ottnt{a'_{{\mathrm{2}}}}$ such that $ \ottnt{a'_{{\mathrm{1}}}}  \leadsto  \ottnt{a'_{{\mathrm{2}}}} $ and $ \Phi  \vdash  \ottnt{a_{{\mathrm{2}}}}  \sim_{ k }  \ottnt{a'_{{\mathrm{2}}}} $.
\end{theorem}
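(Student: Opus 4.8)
The plan is to prove the statement by induction on the derivation of $a_1 \leadsto a_2$, using inversion on $\Phi \vdash a_1 \sim_k a'_1$ in each case. The key structural observation is that, since reduction is call-by-name, every redex sits at the ``active'' position of a term and never inside a graded sub-term (the argument of an application or the first component of a pair). Consequently the extended-equivalence relation $\Phi \vdash^{\ell_0}_{k} {\cdot} \sim {\cdot}$ never needs to be ``stepped''; it enters only through the Indistinguishability-under-substitution lemma in the $\beta$-cases. (In the mechanization the two relations are defined mutually, so the induction is formally carried out over the combined judgement, but only the indistinguishability component does any real work here.) For the congruence rules --- stepping the head of an application, the scrutinee of a $\mathsf{let}$ or $\mathsf{case}$, and so on --- I would invert $\Phi \vdash a_1 \sim_k a'_1$ to see that $a'_1$ has the same outermost shape with an indistinguishable active sub-term, apply the induction hypothesis to that sub-term, and rebuild the compatibility rule with the other premises unchanged. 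This works because the indistinguishability rules at these forms are syntax-directed, so $a'_1$ is forced to be another redex of the same kind whenever $a_1$ is.

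The real content is in the $\beta$-cases. For function application, $(\lambda^{\ell_0} x. b)\; c^{\ell_0} \leadsto b\{c/x\}$: inverting the hypothesis $\Phi \vdash (\lambda^{\ell_0} x. b)\; c^{\ell_0} \sim_k a'_1$ yields $a'_1 = (\lambda^{\ell_0} x. b')\; c'^{\ell_0}$ with $\Phi, x{:}\ell_0 \vdash b \sim_k b'$ and $\Phi \vdash^{\ell_0}_{k} c \sim c'$; then $a'_1 \leadsto b'\{c'/x\}$ by the same $\beta$-rule, and the Indistinguishability-under-substitution lemma delivers $\Phi \vdash b\{c/x\} \sim_k b'\{c'/x\}$, which is exactly the conclusion. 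The pair-elimination rule $\mathsf{let}\,(x^{\ell_0},y) = (c^{\ell_0}, d)\ \mathsf{in}\ e \leadsto e\{c/x\}\{d/y\}$ is analogous but invokes the substitution lemma twice, and the $\mathsf{case}$ rule needs only congruence-style reassembly. The one delicate point running through these cases is the grade bookkeeping: one must check that the label $\ell_0$ annotating the binder, the grade recorded for $x$ in the context, and the superscript of the extended relation all line up so that the substitution lemma applies. In particular, when $\neg(\ell_0 \le k)$ the hypothesis $\Phi \vdash^{\ell_0}_{k} c \sim c'$ holds for arbitrary $c$ and $c'$ --- which is precisely why $\top$-marked (irrelevant) sub-terms are permitted to differ between $a_1$ and $a'_1$ without obstructing the conclusion.

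Compared with the proof of Theorem~\ref{lemma:GEq_respects_Step} for SDC, the only genuinely new burden is the larger stock of term constructors in \DDC{} --- $\Sigma$-types, sums, sorts and $\mathsf{Unit}$ --- each contributing a routine congruence or $\beta$-case; notably, the truncation operation of \DDC{} does not appear in this argument at all, because the indistinguishability relation, unlike the typing judgement, never mentions it, so the proof is insensitive to where $ { \color{black}{C} } $ sits in the lattice. I therefore expect the $\beta$-cases, and in particular pinning down the grade arithmetic required to discharge the substitution lemma, to be the only real obstacle.
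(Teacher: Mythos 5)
Your proposal is correct and follows essentially the same route as the paper's (mechanized) proof: induction on the step derivation, inversion of the syntax-directed indistinguishability rules to force $a'_1$ into the same shape, and discharge of the $\beta$-cases via the Indistinguishability-under-substitution lemma, with the conditional relation $\Phi \vdash^{\ell_0}_{k} \cdot \sim \cdot$ absorbing the irrelevant sub-terms exactly as you describe. The grade bookkeeping you flag (matching the binder's annotation, the context grade, and the superscript of the conditional relation) is indeed the only delicate point, and it works out because $\ell_0 \leq k$ iff $\ell_0 \vee k \leq k$.
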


\subsection{Consistency of Equality}

\label{consisteq}

The equality relation of DDC incorporates compile-time irrelevance. To show that the
type system is sound, we need to show that the equality relation is consistent. Consistency of definitional equality means that there is no derivation that equates two
types having different head forms. For example, it should not equate $ \mathbf{Nat} $ with $\ottkw{Unit}$.

Note that if $ { \color{black}{\top} } $ inputs can interfere with $ { \color{black}{C} } $ outputs, the equality relation cannot
be consistent. To see why, let $  \ottmv{x} \! :^{  { \color{black}{\top} }  }\! \ottnt{A}   \vdash \,  \ottnt{b}  \, :^{  { \color{black}{C} }  } \,  \ottkw{Bool} $ and for $\ottnt{a_{{\mathrm{1}}}} , \ottnt{a_{{\mathrm{2}}}} : A$, let 
the terms $\ottnt{b}  \ottsym{\{}  \ottnt{a_{{\mathrm{1}}}}  \ottsym{/}  \ottmv{x}  \ottsym{\}}$ and $\ottnt{b}  \ottsym{\{}  \ottnt{a_{{\mathrm{2}}}}  \ottsym{/}  \ottmv{x}  \ottsym{\}}$ reduce to $\ottkw{True}$ and $\ottkw{False}$ 
respectively. Now, $  \ottsym{(}   \lambda^{  { \color{black}{\top} }  }  \ottmv{x} . \ottkw{if} \, \ottnt{b} \, \ottkw{then} \,  \mathbf{Nat}  \, \ottkw{else} \, \ottkw{Unit}   \ottsym{)}  \;  \ottnt{a_{{\mathrm{1}}}} ^{  { \color{black}{\top} }  }   \equiv_{  { \color{black}{C} }  }   \ottsym{(}   \lambda^{  { \color{black}{\top} }  }  \ottmv{x} . \ottkw{if} \, \ottnt{b} \, \ottkw{then} \,  \mathbf{Nat}  \, \ottkw{else} \, \ottkw{Unit}   \ottsym{)}  \;  \ottnt{a_{{\mathrm{2}}}} ^{  { \color{black}{\top} }  }  $. But then, by $\beta$-equivalence $  \mathbf{Nat}   \equiv_{  { \color{black}{C} }  }  \ottkw{Unit} $.

To prove consistency, we construct a standard parallel
reduction relation and show that this relation is confluent. Thereafter, we prove 
that if two terms are definitionally equal at $\ell$, then they are joinable at $\ell$, meaning they reduce, through parallel reduction, to two terms that are indistinguishable at $\ell$. Next, we show that joinability at $\ell$ implies consistency. Therefore, we conclude that for any $\ell$, the equality relation at $\ell$ is consistent. This implies that the equality relation at $ { \color{black}{C} } $, that ignores sub-terms marked with $ { \color{black}{\top} } $, is sound. Hence, DDC tracks compile-time irrelevance correctly. Note that DDC can track run-time irrelevance the same way as $\textsc{DDC}^{ { \color{black}{\top} } }$. 

We formally state consistency in terms of \emph{head forms}, i.e. syntactic
forms that correspond to types such as sorts $\ottnt{s}$, $\ottkw{Unit}$, $ \Pi  \ottmv{x} \!:^{ \ell }\! \ottnt{A} . \ottnt{B} $, etc.
\begin{theorem}[Consistency\footnote{\texttt{consist.v:DefEq\_Consistent}}]
If $ \Phi  \vdash  \ottnt{a}  \equiv_{ \ell }  \ottnt{b} $, and $\ottnt{a}$ and $\ottnt{b}$ both are head forms, then they have the same head form.
\end{theorem}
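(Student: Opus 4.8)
The plan is to follow the roadmap announced just before the statement: build a parallel reduction relation, prove it confluent, and then reduce consistency to a fact about \emph{joinable head forms}. First I would define a parallel reduction relation $\Rightarrow$ on the untyped syntax of \DDC{}, in the usual Tait--Martin-L\"of style: reflexive on every syntactic form, congruent under all constructors, and contracting any chosen set of redexes simultaneously --- here the redexes are $\beta$ for $\lambda$/application, the commuting conversion for $\ottkw{let}$ over a pair, and for $\ottkw{case}$ over an injection. Grade annotations are carried along passively, since reduction never inspects them, so this is routine. I would then prove the diamond property for $\Rightarrow$ by the standard complete-development argument (define the maximal parallel reduct $a^{\star}$ and show $a \Rightarrow b$ implies $b \Rightarrow a^{\star}$), and conclude by a tiling argument that the reflexive-transitive closure $\Rightarrow^{*}$ is confluent.

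Second, I would introduce $\ell$-joinability: $a$ and $b$ are $\ell$-joinable when there exist $a'$, $b'$ with $a \Rightarrow^{*} a'$, $b \Rightarrow^{*} b'$, and $\Phi \vdash a' \sim_{\ell} b'$. The key lemma is that $\Phi \vdash a \equiv_{\ell} b$ implies $a$ and $b$ are $\ell$-joinable, proved by induction on the derivation of $\equiv_{\ell}$. The base cases --- a single $\beta$-step, and the inclusion of $\sim_{\ell}$ (the ``Contains guarded equivalence'' lemma) --- are immediate; reflexivity is trivial; symmetry uses that $\sim_{\ell}$ is symmetric on well-graded terms (Lemma~\ref{DDC:ind_equivalence}). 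The crux is transitivity: from $a \Rightarrow^{*} a' \sim_{\ell} b' \mathrel{{}^{*}{\Leftarrow}} b$ and $b \Rightarrow^{*} b'' \sim_{\ell} c' \mathrel{{}^{*}{\Leftarrow}} c$, confluence gives a common reduct $d$ of $b'$ and $b''$, and I then need the fact that indistinguishability travels along parallel reduction --- a parallel-reduction analogue of the non-interference theorem (Theorem~\ref{lemma:DDC:GEq_respects_Step}) --- to push $a'$ and $c'$ to indistinguishable reducts of $d$, after which symmetry and transitivity of $\sim_{\ell}$ (again Lemma~\ref{DDC:ind_equivalence}) close the diamond. Proving this parallel-reduction version of non-interference, together with the preliminary fact that $\Rightarrow^{*}$ preserves well-gradedness (so that Lemmas~\ref{DDC:typing_grading} and~\ref{DDC:ind_equivalence} remain applicable along the reduction), is where I expect the real work to be.

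Finally, with the joinability lemma in hand, consistency follows quickly. I would observe that parallel reduction preserves head forms and, in fact, their outermost constructor: a term whose head is a head-form constructor (a sort, $\mathsf{Unit}$, a $\Pi$-type, a $\Sigma$-type, or a sum) reduces only to a term with the same outermost constructor, since no redex pattern has a head form on its left. Hence if $a$ and $b$ are head forms and $a \Rightarrow^{*} a' \sim_{\ell} b' \mathrel{{}^{*}{\Leftarrow}} b$, then $a'$ and $b'$ are head forms with the same heads as $a$ and $b$ respectively. It then suffices to show that $\sim_{\ell}$ relates two head forms only when they have the same head: this is immediate by inversion on the indistinguishability rules, since every rule whose conclusion is a head form has matching outermost constructors on both sides --- the top-level node of a type is never the thing elided by the relation (the annotation $\ell_{0}$ in, say, $\Pi x\!:^{\ell_{0}}\!A.\,B$ guards $A$, not the $\Pi$). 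Chaining these equalities yields that $a$ and $b$ have the same head form, which is the theorem.

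To summarize the difficulty: the single confluence proof and the head-form-preservation facts are standard, and the consistency step is a short inversion argument; the genuine obstacle is the transitivity case of the joinability lemma, which requires lifting non-interference (Theorem~\ref{lemma:DDC:GEq_respects_Step}) from single steps to parallel reduction, and for that I must first establish both confluence of $\Rightarrow^{*}$ and preservation of well-gradedness under $\Rightarrow^{*}$.
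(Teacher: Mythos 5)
Your proposal is correct and follows essentially the same route as the paper: a parallel reduction relation proved confluent, a joinability relation ending in $\ell$-indistinguishable terms, the lemma that definitional equality implies joinability (with transitivity as the hard case), and a final inversion showing that indistinguishability and parallel reduction both preserve outermost head constructors. The only organizational difference is that the paper's parallel reduction is itself label-indexed, $\Phi \vdash a \Rightarrow_{\ell} b$, with conditional rules that let unobservable arguments change arbitrarily, so the ``non-interference for parallel reduction'' lemma you correctly identify as the crux is packaged into the definition of the relation and its confluence proof rather than stated separately against an ungraded reduction.
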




\subsection{Soundness theorem}

DDC is type sound and we have checked this and other results using the
Coq proof assistant. Below, we give an overview of the important lemmas in this
development.

The properties below are stated for \DDC{}, but they also apply to $\textsc{DDC}^{\top}$ since \DDC{} degenerates to \DDCt{} whenever $ { \color{black}{C} }  =  { \color{black}{\top} } $.
First, we list the properties related to grading that hold for all judgments:
indexed indistinguishability, definitional equality, and typing. (We only state
the lemmas for typing, their counterparts are analogous.) These lemmas are similar to
their simply-typed counterparts in Section~\ref{sec:sdc-metatheory}.

\begin{lemma}[Narrowing\footnote{\texttt{narrowing.v:Typing\_narrowing}}]
If $ \Omega  \vdash  \ottnt{a} :^{ \ell }  \ottnt{A} $ and $\Omega'  \leq  \Omega$, then $ \Omega'  \vdash  \ottnt{a} :^{ \ell }  \ottnt{A} $
\end{lemma}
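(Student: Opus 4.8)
The plan is to proceed by induction on the typing derivation $\Omega \vdash a :^{\ell} A$, just as in the simply-typed case (Narrowing for SDC). Since the conversion \rref{T-ConvC} appeals to the definitional equality judgment $ |  \Omega  |  \vdash A \equiv_{ { \color{black}{C} } } B$, I would first establish the analogous narrowing statements for indexed indistinguishability ($ \Phi  \vdash  \ottnt{a}  \sim_{ \ell }  \ottnt{b} $) and for definitional equality ($ \Phi  \vdash  \ottnt{a}  \equiv_{ \ell }  \ottnt{b} $) --- these are the ``analogous'' companion lemmas the text refers to --- by induction on their respective derivations. For indistinguishability the only interesting rule is the variable rule, which requires the variable's grade to lie below the observer level; narrowing only lowers grades, so transitivity of $ \leq $ keeps that premise satisfied, and every other rule is an immediate use of the induction hypothesis. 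Definitional equality is the closure of indistinguishability under $\beta$-reduction, and reduction does not touch the context, so its narrowing follows at once. If these relations are defined mutually, the three narrowing lemmas can instead be bundled into one simultaneous induction.

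For the typing judgment itself the leaf cases are easy. \Rref{T-Var} needs $ \ottmv{x} \! :^{ \ell_{{\mathrm{0}}}' }\! \ottnt{A}  \in \Omega'$ with the lowered grade still satisfying $\ell_{{\mathrm{0}}}'  \leq  \ell$, which holds by $\ell_{{\mathrm{0}}}'  \leq  \ell_{{\mathrm{0}}}  \leq  \ell$; the side condition $\ell  \leq   { \color{black}{C} } $ constrains the judgment grade, not the context, so it is untouched, and any context-well-formedness premise is handled by a companion narrowing lemma for context formation proved in the same induction. \Rref{T-Type} depends on the context only through that same side condition. For the binding forms (\rref{T-Pi}, \rref{T-AbsC}, \rref{T-LetPairC}) I extend the ordering: if $\Omega'  \leq  \Omega$ then $ \Omega' ,   \ottmv{x} \! :^{ k }\! \ottnt{A}  \leq  \Omega ,   \ottmv{x} \! :^{ k }\! \ottnt{A} $ for the same fresh-variable grade $k$, so the induction hypothesis applies to the premises typed under the extended context. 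All remaining structural premises are discharged directly by the induction hypothesis.

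The one genuinely new ingredient compared to SDC is the truncation operation $ { \color{black}{C} }   \wedge  \Omega$ and the truncated-at-top judgment $ \Omega \Vdash   \ottnt{a} :^{ \ell }  \ottnt{A} $, which occur in \rref{T-AbsC}, \rref{T-AppC}, \rref{T-ConvC} and \rref{T-WPairC}. Two observations suffice. First, meet is monotone in each argument, so $\Omega'  \leq  \Omega$ implies $ { \color{black}{C} }   \wedge  \Omega'  \leq   { \color{black}{C} }   \wedge  \Omega$; hence narrowing propagates through every truncated premise. Second, narrowing holds for $\Vdash$ itself: by case analysis on \rref{CT-Leq} it reduces to narrowing for $\vdash$, and on \rref{CT-Top} it reduces to narrowing for $ { \color{black}{C} }   \wedge  \Omega \vdash a :^{ { \color{black}{C} } } A$, again via monotonicity of meet. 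With these the remaining cases are routine.

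I do not expect a serious obstacle; the proof is mostly bookkeeping. The part that most deserves care is getting the proof order right --- establishing narrowing for the untyped relations (indistinguishability, definitional equality) before, or mutually with, narrowing for typing, so that the \rref{T-ConvC} case is properly justified --- and checking that every appeal to truncation is covered by the monotonicity-of-meet observation and that no rule secretly imposes an \emph{upper} bound on a context grade (which would be the one thing that could break under narrowing). In the mechanization this is \texttt{narrowing.v:Typing\_narrowing} together with its companions for the other judgments.
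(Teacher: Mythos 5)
Your proposal is correct and takes essentially the same route as the paper's (mechanized) proof: a straightforward induction on the typing derivation, with transitivity of $\leq$ discharging the variable case, companion narrowing lemmas for indistinguishability and definitional equality (the paper notes these are "analogous" and proved for all judgments) covering the conversion premise, and monotonicity of meet handling the truncated-at-top premises. No gaps.
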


\begin{lemma}[Weakening\footnote{\texttt{weakening.v:Typing\_weakening}}]
If $  \Omega_{{\mathrm{1}}} ,  \Omega_{{\mathrm{2}}}   \vdash  \ottnt{a} :^{ \ell }  \ottnt{A} $ then $   \Omega_{{\mathrm{1}}} ,  \Omega  ,  \Omega_{{\mathrm{2}}}   \vdash  \ottnt{a} :^{ \ell }  \ottnt{A} $.
\end{lemma}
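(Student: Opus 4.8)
The plan is to prove this by structural induction on the typing derivation $\Omega_1,\Omega_2 \vdash a :^{\ell} A$, carried out simultaneously with the analogous weakening statements for the judgments that typing depends on: the grading relation $\Phi \vdash a : \ell$, the definitional equality $\Phi \vdash a \equiv_\ell b$, and the truncated-at-top judgment $\Omega \Vdash a :^{\ell} A$. (These are the "analogous" counterparts the excerpt alludes to.) Throughout I would use the standard freshness convention so that the variables bound in the inserted context $\Omega$ are distinct from everything already in scope — exactly the side condition that the cofinitely-quantified / locally-nameless presentation of the Coq development supplies automatically.

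Most cases are mechanical: apply the induction hypotheses to the premises and reassemble the same rule. Two small observations make the binding cases go through. First, for \rref{T-Pi}, \rref{T-AbsC}, \rref{T-WPairC} and \rref{T-LetPairC}, the premises extend the context with fresh binders; I would push $\Omega$ so that it sits before those new binders, i.e. apply the IH with the split $\Omega_1$ and $\Omega,\Omega_2$ augmented by the new binder, so the extended contexts line up. Second, several rules — \rref{T-AbsC}, \rref{T-AppC}, \rref{T-ConvC}, and the $\Sigma$ rules — invoke the truncation operation $C \wedge (-)$ and/or the erasure operation $|-|$ on contexts; both are defined pointwise, hence commute with concatenation: $C \wedge (\Omega_1,\Omega,\Omega_2) = (C\wedge\Omega_1),(C\wedge\Omega),(C\wedge\Omega_2)$, and similarly for $|-|$. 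So an IH (or the analogous weakening lemma for grading / definitional equality) applied to the truncated context immediately delivers the truncated-and-weakened context.

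The only cases with any content are \rref{T-Var}, \rref{T-AppC} and \rref{T-ConvC}. For \rref{T-Var} the variable being typed lies in $\Omega_1$ or $\Omega_2$; inserting $\Omega$ in the middle neither removes it nor changes its grade, and the side condition $\ell \le C$ is untouched, so the rule still applies. For \rref{T-AppC} the argument is checked with the truncated-at-top judgment $\Omega \Vdash a :^{\ell_0} A$; here I would first establish weakening for $\Vdash$ by case analysis on its two rules \rref{CT-Leq} and \rref{CT-Top}, using typing-weakening when $\ell_0 \neq \top$ and typing-weakening together with the "truncation commutes with concatenation" fact when $\ell_0 = \top$. \rref{T-ConvC} is handled the same way: weakening for definitional equality on the premise $|\Omega_1,\Omega_2| \vdash A \equiv_C B$, and the commutation fact on the well-formedness premise $C \wedge (\Omega_1,\Omega_2) \vdash B :^{C} s$.

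I do not expect a genuine obstacle; this is a routine structural lemma. The only thing requiring care is the simultaneous-induction bookkeeping — weakening for typing calls weakening for $\Vdash$, for definitional equality, and for grading — so the package of statements must be arranged so each can appeal to the others at structurally appropriate subderivations without circularity. In the mechanization this is precisely the bundle recorded at \texttt{weakening.v:Typing\_weakening} with its companion weakening lemmas for the auxiliary judgments.
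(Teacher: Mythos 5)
Your proposal is correct and matches the approach taken in the paper's (mechanized) proof: a routine induction on the typing derivation, with companion weakening lemmas for the untyped auxiliary judgments (grading, definitional equality) established first, and the truncated-at-top judgment handled by case analysis using the fact that $ { \color{black}{C} }  \wedge (-)$ and $|-|$ commute with context concatenation. No gaps.
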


\begin{lemma}[Restricted upgrading\footnote{\texttt{pumping.v:Typing\_pumping}}]
If $   \Omega_{{\mathrm{1}}} ,   \ottmv{x} \! :^{ \ell_{{\mathrm{0}}} }\! \ottnt{A}   ,  \Omega_{{\mathrm{2}}}   \vdash  \ottnt{b} :^{ \ell }  \ottnt{B} $ and $\ell_{{\mathrm{1}}}  \leq  \ell$ then 
   $    \Omega_{{\mathrm{1}}} ,   \ottmv{x} \! :^{   \ell_{{\mathrm{0}}}  \vee  \ell_{{\mathrm{1}}}   }\! \ottnt{A}    ,  \Omega_{{\mathrm{2}}}   \vdash  \ottnt{b} :^{ \ell }  \ottnt{B} $.
\end{lemma}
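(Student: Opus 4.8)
The plan is to prove the statement by induction on the derivation of $\Omega_1, \ottmv{x} :^{\ell_0} \ottnt{A}, \Omega_2 \vdash \ottnt{b} :^{\ell} \ottnt{B}$, mirroring the proof of its simply-typed analogue, Lemma~\ref{pumping}. Since the grading, indexed indistinguishability, and definitional equality judgments occur among the premises of the typing rules, I would first establish the corresponding restricted-upgrading statements for each of them — each by an analogous, easier induction — and then handle typing, or, equivalently, run a single simultaneous induction over all four judgments. Upgrading a context variable does not change the observer level $\ell$, so the invariant $\ell \le  { \color{black}{C} } $ that \DDC{} derivations satisfy is preserved with no need to strengthen the statement.

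The crux is \rref{T-Var}. If the variable looked up is the distinguished $\ottmv{x}$, the rule tells us $\ell_0 \le \ell$; combining this with the hypothesis $\ell_1 \le \ell$, and using that $\ell$ is then an upper bound of both $\ell_0$ and $\ell_1$, we obtain $\ell_0 \vee \ell_1 \le \ell$, so \rref{T-Var} reapplies with the upgraded grade. If the variable is any other variable, the grade on $\ottmv{x}$ is irrelevant to the rule and we simply reapply it. The purely structural rules — \rref{T-Type}, \rref{T-Pi}, the rules for disjoint unions, and so on — follow by applying the induction hypothesis to each premise and rebuilding the derivation; when a rule extends the context with a fresh variable, that variable is absorbed into $\Omega_2$ and the induction hypothesis is invoked with the extended right-hand context.

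The main obstacle is the set of rules that go through truncation and the truncated-at-top judgment $\Omega \Vdash \ottnt{a} :^{\ell} \ottnt{A}$, that is, \rref{T-AbsC}, \rref{T-AppC}, \rref{T-ConvC}, \rref{T-WPairC}, \rref{T-LetPairC}, and their kin. In such a rule a premise is checked against $ { \color{black}{C} }  \wedge \Omega'$ at level $ { \color{black}{C} } $ when the governing annotation is $ { \color{black}{\top} } $ (and against $\Omega'$ at its own level otherwise, which is the structural case already handled). Here one uses the pointwise identity $ { \color{black}{C} }  \wedge (\Omega_1, \ottmv{x}:^{\ell_0}\ottnt{A}, \Omega_2) = ( { \color{black}{C} }  \wedge \Omega_1), \ottmv{x}:^{ { \color{black}{C} }  \wedge \ell_0}\ottnt{A}, ( { \color{black}{C} }  \wedge \Omega_2)$: the induction hypothesis, upgrading $\ottmv{x}$ by the amount $ { \color{black}{C} }  \wedge \ell_1$ (which satisfies the required precondition $( { \color{black}{C} }  \wedge \ell_1) \le  { \color{black}{C} } $), produces the grade $( { \color{black}{C} }  \wedge \ell_0) \vee ( { \color{black}{C} }  \wedge \ell_1)$ for $\ottmv{x}$, and we need this to equal $ { \color{black}{C} }  \wedge (\ell_0 \vee \ell_1)$. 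So the proof rests on the lattice identity $ { \color{black}{C} }  \wedge (\ell_0 \vee \ell_1) = ( { \color{black}{C} }  \wedge \ell_0) \vee ( { \color{black}{C} }  \wedge \ell_1)$ together with monotonicity of $\wedge$; this holds trivially for a chain such as $\mathcal{L}_I$, and more generally whenever $ { \color{black}{C} } $ lies above every grade at which ordinary derivations live — precisely the situation arranged by the construction that adds $ { \color{black}{C} } $ and $ { \color{black}{\top} } $ atop an arbitrary lattice. In \rref{T-ConvC} there is an additional equality premise over the grade-only context $ { \color{black}{C} }  \wedge |\Omega'|$, discharged by the definitional-equality version of restricted upgrading established up front, and a well-formedness premise, discharged by the typing induction hypothesis. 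Once these points are settled the remaining cases are routine.
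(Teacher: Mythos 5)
Your proposal is correct and follows the same route as the paper's (mechanized) proof: induction on the typing derivation, with the variable case discharged by $\ell_0\leq\ell$ and $\ell_1\leq\ell$ giving $\ell_0\vee\ell_1\leq\ell$, companion upgrading lemmas for the grading and equality judgments, and the truncated premises handled by commuting $ { \color{black}{C} } \wedge-$ past the upgrade. The one remark worth making is that your treatment of the truncated premises leans on the identity $ { \color{black}{C} } \wedge(\ell_0\vee\ell_1)=( { \color{black}{C} } \wedge\ell_0)\vee( { \color{black}{C} } \wedge\ell_1)$, which indeed holds here because $ { \color{black}{C} } $ is comparable to every grade, but which you could avoid altogether: instantiate the induction hypothesis with the upgrade increment $ { \color{black}{C} } \wedge(\ell_0\vee\ell_1)$ (still $\leq { \color{black}{C} } $), so that the resulting grade $( { \color{black}{C} } \wedge\ell_0)\vee( { \color{black}{C} } \wedge(\ell_0\vee\ell_1))= { \color{black}{C} } \wedge(\ell_0\vee\ell_1)$ follows from monotonicity and absorption alone, making the argument valid for an arbitrary lattice.
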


Next, we list some properties that are specific to the typing judgment. For any typing judgment in DDC,
the observer grade $\ell$ is at most $ { \color{black}{C} } $. Further, the observer grade of any judgment can be
raised up to $ { \color{black}{C} } $.

\begin{lemma}[Bounded by $ { \color{black}{C} } $\footnote{\texttt{pumping.v:Typing\_leq\_C}}]
If $ \Omega  \vdash  \ottnt{a} :^{ \ell }  \ottnt{A} $ then $\ell  \leq   { \color{black}{C} } $.
\end{lemma}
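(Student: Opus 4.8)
The plan is a routine structural induction on the derivation of $ \Omega  \vdash  \ottnt{a} :^{ \ell }  \ottnt{A} $, using the statement itself as the induction hypothesis (no strengthening is needed). The bound $\ell  \leq   { \color{black}{C} } $ is built into the leaf rules: \rref{T-Var} and \rref{T-Type} carry $\ell  \leq   { \color{black}{C} } $ as an explicit premise, and the same holds for the remaining axiom-style rules (for $\ottkw{Unit}$ and $\ottkw{unit}$, and the injections). So every base case is immediate.

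For the composite rules, the key observation is that each one has a distinguished premise --- the one typing the \emph{principal} subterm --- whose observer grade is exactly the grade $\ell$ of the conclusion: the function premise of \rref{T-AppC}, the body premise of \rref{T-AbsC}, the scrutinee and branch-body premises of \rref{T-LetPairC}, the converted-term premise of \rref{T-ConvC}, a component premise of \rref{T-Pi} (and the second-component premise of \rref{T-WPairC}), and so on for the $\Sigma$-type formation and disjoint-union rules. Applying the induction hypothesis to that premise yields $\ell  \leq   { \color{black}{C} } $ directly, since the premise is a subderivation with the same observer grade. The projection rules \rref{T-ProjOne} and \rref{T-ProjTwoC} are derived forms, so they require no separate treatment.

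The only point worth spelling out is why the truncated-at-top premises cause no trouble. A premise of the form $ \Omega \Vdash   \ottnt{a} :^{ \ell_{{\mathrm{0}}} }  \ottnt{A} $ unfolds, by \rref{CT-Leq} and \rref{CT-Top}, either to an ordinary typing judgment at $\ell_{{\mathrm{0}}}$ with $\ell_{{\mathrm{0}}}  \leq   { \color{black}{C} } $, or, when $\ell_{{\mathrm{0}}} =  { \color{black}{\top} } $, to the ordinary judgment $  { \color{black}{C} }   \wedge  \Omega  \vdash  \ottnt{a} :^{  { \color{black}{C} }  }  \ottnt{A} $; in neither case does it yield a plain typing judgment whose observer grade exceeds $ { \color{black}{C} } $. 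In fact, we never need to inspect such premises for this lemma, since the conclusion grade always coincides with the grade of some non-truncated premise. There is thus no real obstacle: the lemma simply records the design decision --- preconditions on the leaf rules together with grade preservation everywhere else --- that makes the truncated-at-top judgment, rather than a plain judgment at $ { \color{black}{\top} } $, the mechanism by which DDC ``works at compile time''. (When $ { \color{black}{C} }  =  { \color{black}{\top} } $ the statement is vacuous, consistent with DDC degenerating to $\textsc{DDC}^{\top}$.)
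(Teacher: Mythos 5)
Your proof is correct and follows the same route the paper takes (the paper only sketches it in prose and defers to the Coq lemma \texttt{pumping.v:Typing\_leq\_C}): a direct induction on the derivation, with the bound supplied by the explicit $\ell  \leq   { \color{black}{C} } $ premises on the leaf rules and propagated through every other rule via a premise at the same observer grade. Your remark that truncated-at-top premises never determine the conclusion grade is exactly the right observation and matches the system's design.
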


\begin{lemma}[Subsumption\footnote{\texttt{typing.v:Typing\_subsumption}}]
If $ \Omega  \vdash  \ottnt{a} :^{ \ell }  \ottnt{A} $ and $\ell  \leq  k$ and $k  \leq   { \color{black}{C} } $ then $ \Omega  \vdash  \ottnt{a} :^{ k }  \ottnt{A} $
\end{lemma}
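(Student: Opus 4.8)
The plan is to prove this by structural induction on the typing derivation $ \Omega  \vdash  \ottnt{a} :^{ \ell }  \ottnt{A} $, carrying the grade $k$ and the hypothesis $k  \leq   { \color{black}{C} } $ through the induction. This is the dependent analogue of the SDC Subsumption lemma, and the overall shape of the argument is the same. What is new is the bookkeeping forced by truncation and by the truncated-at-top judgment $ \Omega \Vdash   \ottnt{a} :^{ \ell }  \ottnt{A} $. Note that the hypothesis $k  \leq   { \color{black}{C} } $ is doing real work here that it did not do in SDC: it is exactly what discharges the new side conditions $\ell  \leq   { \color{black}{C} } $ that \rref{T-Var} and \rref{T-Type} impose after $\ell$ is raised to $k$, and it is also what allows the induction hypothesis to be applied at all, since every DDC typing judgment has observer grade at most $ { \color{black}{C} } $.

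In the base cases \rref{T-Var} and \rref{T-Type}, raising the observer from $\ell$ to $k$ requires re-establishing (i) the bound $k  \leq   { \color{black}{C} } $, which is a hypothesis, and (ii) for \rref{T-Var}, the comparison $\ell_{{\mathrm{0}}}  \leq  k$ between the grade $\ell_{{\mathrm{0}}}$ assigned to the variable in $\Omega$ and the observer, which follows from $\ell_{{\mathrm{0}}}  \leq  \ell$ and $\ell  \leq  k$ by transitivity of $ \leq $. For \rref{T-ConvC}, the equality premise $  |  \Omega  |   \vdash  \ottnt{A}  \equiv_{  { \color{black}{C} }  }  \ottnt{B} $ and the well-formedness premise for $\ottnt{B}$ (which is stated via truncation at $ { \color{black}{C} } $ and hence does not mention $\ell$) are untouched; only the premise $ \Omega  \vdash  \ottnt{a} :^{ \ell }  \ottnt{A} $ needs the induction hypothesis. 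The same observation applies to every subterm in a rule that is checked by a truncated-at-top judgment $ \Omega \Vdash   \ottnt{c} :^{  { \color{black}{\top} }  }  \ottnt{C} $ --- for instance the function type in \rref{T-AbsC} --- since the level being raised is $\ell$, not the internal $ { \color{black}{\top} } $, and such subderivations are therefore unchanged verbatim.

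The real work, and the main obstacle I expect, is in the binder rules: \rref{T-Pi}, \rref{T-AbsC}, \rref{T-AppC}, \rref{T-WPairC}, and \rref{T-LetPairC}. For a subterm checked at $ \ell_{{\mathrm{0}}}  \vee  \ell $ in a context extended by $ \ottmv{x} \! :^{  \ell_{{\mathrm{0}}}  \vee  \ell  }\! \ottnt{A} $ (the body in \rref{T-AbsC} and \rref{T-LetPairC}, and the argument/first component in \rref{T-AppC}/\rref{T-WPairC} when $\ell_{{\mathrm{0}}} \neq  { \color{black}{\top} } $), the plan is: first apply the induction hypothesis to raise the observer of that subderivation to the level required by the conclusion, then invoke Restricted Upgrading to bump the grade of the bound variable to match, using the lattice identities $( \ell_{{\mathrm{0}}}  \vee  \ell )  \vee  k  =  \ell_{{\mathrm{0}}}  \vee  k $ and $ \ell  \vee  k  = k$ (valid since $\ell  \leq  k$). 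A point that must be checked here is that the joined grade $ \ell_{{\mathrm{0}}}  \vee  k $ at which a non-truncated argument is re-checked remains $\leq  { \color{black}{C} } $; this holds because in DDC's lattice $ { \color{black}{C} } $ sits directly below $ { \color{black}{\top} } $, so $\ell_{{\mathrm{0}}} \neq  { \color{black}{\top} } $ already forces $\ell_{{\mathrm{0}}}  \leq   { \color{black}{C} } $, and $k  \leq   { \color{black}{C} } $ by hypothesis, whence the join is bounded by $ { \color{black}{C} } $. Finally, the premises of the form $ \Omega \Vdash   \cdot :^{ \cdot }  \cdot $ unfold, via \rref{CT-Leq}/\rref{CT-Top}, either to an ordinary typing judgment (handled by the induction hypothesis on the contained $\vdash$ subderivation) or to a truncated judgment at $ { \color{black}{C} } $ (left unchanged); making this precise is cleanest either as a mutual induction over $\vdash$ and $\Vdash$ or, equivalently, by case-analysis on how the $\Vdash$ premise was derived. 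The derived projection rules \rref{T-ProjOne} and \rref{T-ProjTwoC} then inherit the property directly from the $\Sigma$ cases. No individual step is deep; the only place genuine care is needed is keeping the grade arithmetic in the binder cases consistent with the $\leq  { \color{black}{C} } $ invariant.
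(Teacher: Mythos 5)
Your proposal is correct and follows essentially the same route as the paper's (mechanized) proof: induction on the typing derivation, transitivity of $ \leq $ for the variable case, the hypothesis $k  \leq   { \color{black}{C} } $ to re-establish the boundedness side conditions, and Restricted Upgrading (pumping) combined with the identity $\ottsym{(}   \ell_{{\mathrm{0}}}  \vee  \ell   \ottsym{)}  \vee  k  \ottsym{=}   \ell_{{\mathrm{0}}}  \vee  k $ to fix up the bound-variable grades in the binder cases. The handling of the truncated-at-top premises (unchanged when they truncate at $ { \color{black}{C} } $, IH otherwise) is also how the formal development proceeds.
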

%

Note that we don't require contexts to be well-formed in the typing judgment; we add context 
well-formedness constraints, as required, to our lemmas. The following lemmas are true for well-formed contexts.  A context $\Omega$ is well-formed, expressed as $\vdash  \Omega$, iff for any assumption
$\ottmv{x}:^\ell\ottnt{A}$ in $\Omega$, we have $ \Omega' \Vdash   \ottnt{A} :^{  { \color{black}{\top} }  }   \ottnt{s}  $, where
$\Omega'$ is the prefix of $\Omega$ that appears before the assumption.

\begin{lemma}[Substitution\footnote{\texttt{typing.v:Typing\_substitution\_CTyping}}]
If $   \Omega_{{\mathrm{1}}} ,   \ottmv{x} \! :^{ \ell_{{\mathrm{0}}} }\! \ottnt{A}   ,  \Omega_{{\mathrm{2}}}   \vdash  \ottnt{b} :^{ \ell }  \ottnt{B} $ and $\vdash  \Omega_{{\mathrm{1}}}$ and $ \Omega_{{\mathrm{1}}} \Vdash   \ottnt{a} :^{ \ell_{{\mathrm{0}}} }  \ottnt{A} $
then $  \Omega_{{\mathrm{1}}} ,  \Omega_{{\mathrm{2}}}   \ottsym{\{}  \ottnt{a}  \ottsym{/}  \ottmv{x}  \ottsym{\}}  \vdash  \ottnt{b}  \ottsym{\{}  \ottnt{a}  \ottsym{/}  \ottmv{x}  \ottsym{\}} :^{ \ell }  \ottnt{B}  \ottsym{\{}  \ottnt{a}  \ottsym{/}  \ottmv{x}  \ottsym{\}} $
\end{lemma}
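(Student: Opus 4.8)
The plan is to prove the lemma by induction on the derivation of $\Omega_1, x :^{\ell_0} A, \Omega_2 \vdash b :^\ell B$, carried out \emph{simultaneously} with the analogous statement for the truncated-at-top judgment: if $\Omega_1, x :^{\ell_0} A, \Omega_2 \Vdash b :^\ell B$, $\vdash \Omega_1$, and $\Omega_1 \Vdash a :^{\ell_0} A$, then $\Omega_1, \Omega_2\{a/x\} \Vdash b\{a/x\} :^\ell B\{a/x\}$. Throughout we use four elementary facts: substitution leaves grades untouched, so $(C \wedge \Omega)\{a/x\} = C \wedge (\Omega\{a/x\})$; the lattice meet is idempotent, with $C \wedge \ell_0 = \ell_0$ when $\ell_0 \leq C$ and $C \wedge \ell_0 = C$ otherwise; truncation preserves well-formedness, i.e.\ $\vdash \Omega$ implies $\vdash C \wedge \Omega$; and Narrowing.

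For the $\Vdash$ statement we peel off the last rule. If it is \rref{CT-Leq} then $\ell \neq \top$ and the conclusion reduces immediately to the $\vdash$ induction hypothesis. If it is \rref{CT-Top} then $\ell = \top$ and the premise is $C \wedge (\Omega_1, x :^{\ell_0} A, \Omega_2) \vdash b :^C B$, that is, $(C \wedge \Omega_1), x :^{C \wedge \ell_0} A, (C \wedge \Omega_2) \vdash b :^C B$. We apply the $\vdash$ induction hypothesis to substitute $a$ for the variable $x$ now held at grade $C \wedge \ell_0$; this needs $\vdash C \wedge \Omega_1$ (truncation preserves well-formedness) and $C \wedge \Omega_1 \Vdash a :^{C \wedge \ell_0} A$, which follows from $\Omega_1 \Vdash a :^{\ell_0} A$: when $\ell_0 \leq C$ by Narrowing along $C \wedge \Omega_1 \leq \Omega_1$, and when $\neg(\ell_0 \leq C)$ because $\Omega_1 \Vdash a :^\top A$ already unfolds to $C \wedge \Omega_1 \vdash a :^C A$ and meet is idempotent. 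Repackaging with \rref{CT-Top} and using that substitution commutes with $C \wedge (-)$ on contexts gives the claim.

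For the $\vdash$ statement, in \rref{T-Var} with $b = y$ we split on whether $y = x$. If $y \neq x$, then $y$ occurs in $\Omega_1$ or $\Omega_2$ and $B$ is unchanged or substituted accordingly, so re-applying \rref{T-Var} (whose side conditions $\ell' \leq \ell$ and $\ell \leq C$ carry over) suffices. If $y = x$, then $B = A$, the rule gives $\ell_0 \leq \ell$, and by well-scopedness $x \notin \mathrm{dom}(\Omega_1)$ and $x \notin \mathrm{FV}(A)$, so $\Omega_1\{a/x\} = \Omega_1$ and $A\{a/x\} = A$; from $\Omega_1 \Vdash a :^{\ell_0} A$ we obtain $\Omega_1 \vdash a :^{\ell_0} A$ (if $\ell_0 = \top$, then $\ell_0 \leq \ell \leq C$ forces $C = \top$ and the unfolding is still a $\vdash$ derivation by idempotence of meet), and then Subsumption ($\ell_0 \leq \ell \leq C$) and Weakening give $\Omega_1, \Omega_2\{a/x\} \vdash a :^\ell A$. \rref{T-Type} is immediate. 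For \rref{T-Pi}, \rref{T-AbsC}, \rref{T-AppC}, \rref{T-WPairC}, \rref{T-LetPairC}, and \rref{T-ConvC}, we apply the appropriate induction hypothesis ($\vdash$ or $\Vdash$) to each premise and reassemble, using the standard commutation of substitutions, e.g.\ $B_1\{a_2/y\}\{a/x\} = B_1\{a/x\}\{a_2\{a/x\}/y\}$, to match the substituted result type. In \rref{T-ConvC} we additionally rewrite the definitional-equality premise using the \emph{Relevant substitution} lemma when $\ell_0 \leq C$ --- for which we supply the reflexivity instance $|\Omega_1| \vdash a \equiv_C a$, obtained from $\Omega_1 \Vdash a :^{\ell_0} A$ by subsuming to level $C$, then Typing-implies-grading, then ``contains guarded equivalence'' with reflexivity of $\sim_C$ --- and the \emph{Irrelevant substitution} lemma when $\neg(\ell_0 \leq C)$, which permits an arbitrary substitute.

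I expect the main obstacle to be the bookkeeping forced by truncation rather than any conceptual difficulty: the \rref{CT-Top} case, where the induction hypothesis is invoked over a \emph{truncated} context and hence needs the auxiliary facts that truncation preserves well-formedness and that a truncated typing of $a$ exists at grade $C \wedge \ell_0$; and the conversion case, where the choice between the relevant and irrelevant definitional-equality substitution lemmas must be made uniformly for every lattice (on the condition $\ell_0 \leq C$), with the relevant branch additionally requiring well-gradedness of $a$. The remaining cases are the routine substitution bureaucracy familiar from pure type systems.
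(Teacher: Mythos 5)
Your proposal is correct and matches the approach of the paper's (mechanized) proof: induction on the typing derivation with the truncated-at-top judgment handled as a companion statement by unfolding \rref{CT-Leq}/\rref{CT-Top}, the variable case discharged by weakening and subsumption (exactly as in the SDC version sketched in Section~\ref{sec:sdc-metatheory}), and the conversion case discharged by the relevant/irrelevant substitution lemmas for definitional equality according to whether $\ell_{{\mathrm{0}}}  \leq   { \color{black}{C} } $. The auxiliary facts you flag (truncation commutes with substitution and preserves context well-formedness, idempotence of meet, narrowing along $ { \color{black}{C} }   \wedge  \Omega_{{\mathrm{1}}}  \leq  \Omega_{{\mathrm{1}}}$) are precisely the bookkeeping the Coq development carries.
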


Next, if a term is well-typed in our system, the type itself is also well-typed. 
\begin{lemma}[Regularity\footnote{\texttt{typing.v:Typing\_regularity}}]
  If $ \Omega  \vdash  \ottnt{a} :^{ \ell }  \ottnt{A} $ and $\vdash  \Omega$ then $ \Omega \Vdash   \ottnt{A} :^{  { \color{black}{\top} }  }   \ottnt{s}  $.
\end{lemma}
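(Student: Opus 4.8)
The plan is to induct on the derivation of $\Omega \vdash a :^\ell A$. A good number of cases are immediate because the corresponding typing rule already carries a premise that re-checks the type in its conclusion: \rref{T-ConvC} supplies the new type $B$ checked (after truncation) at $ { \color{black}{C} } $; \rref{T-AbsC} checks the $\Pi$-type at $ { \color{black}{C} } $ after truncation; \rref{T-WPairC} checks the $\Sigma$-type; and \rref{T-LetPairC}, together with the elimination rules for sums, supplies and checks the result motive. In each of these the desired conclusion $\Omega \Vdash A :^{ { \color{black}{\top} } } s$ is exactly a premise once we unfold $\Vdash$ via \rref{CT-Top}, modulo an application of Narrowing or Weakening to reconcile contexts. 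For \rref{T-Type} and \rref{T-Pi} the type $A$ is itself a sort $s_1$; here I would invoke the standing assumption (the footnote on the PTS parameters) that every sort $s_1$ has some $s_2$ with $\mathcal{A}(s_1,s_2)$, so that $ { \color{black}{C} }  \wedge \Omega \vdash s_1 :^{ { \color{black}{C} } } s_2$ follows from \rref{T-Type} in the (still well-formed) truncated context, i.e. $\Omega \Vdash s_1 :^{ { \color{black}{\top} } } s_2$. For \rref{T-Var}, $A$ is read off from $\Omega$; since $\vdash\Omega$, the defining condition of context well-formedness gives $\Omega' \Vdash A :^{ { \color{black}{\top} } } s$ for the prefix $\Omega'$ preceding the binding of $x$, and Weakening promotes this to $\Omega$.

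The genuinely load-bearing case is application, \rref{T-AppC}, where $a = a_1\;a_2^{\ell_0}$, the function has type $\Pi x\!:^{\ell_0}\!A_0.\,B$, the conclusion type is $A = B\{a_2/x\}$, and the rule supplies $\Omega \Vdash a_2 :^{\ell_0} A_0$. By the induction hypothesis on the function subderivation, $\Omega \Vdash \Pi x\!:^{\ell_0}\!A_0.\,B :^{ { \color{black}{\top} } } s$. I would then prove, as a routine auxiliary, an inversion lemma for $\Pi$-types that threads through \rref{T-ConvC}: it yields sorts $s_1,s_2$ with $\mathcal{R}(s_1,s_2,s)$, together with $ { \color{black}{C} }  \wedge \Omega \vdash A_0 :^{ { \color{black}{C} } } s_1$ and $( { \color{black}{C} }  \wedge \Omega), x\!:^{k}\!A_0 \vdash B :^{ { \color{black}{C} } } s_2$ for the grade $k$ on the bound variable in \rref{T-Pi}. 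Applying the Substitution lemma with $a_2$ for $x$ — whose argument hypothesis is already a $\Vdash$-judgment, matching the shape produced by \rref{T-AppC}, once its context and grade are transported along truncation and adjusted via Narrowing and Subsumption — gives $ { \color{black}{C} }  \wedge \Omega \vdash B\{a_2/x\} :^{ { \color{black}{C} } } s_2$, that is $\Omega \Vdash A :^{ { \color{black}{\top} } } s_2$. The same pattern settles the derived \rref{T-ProjOne} and \rref{T-ProjTwoC} and the $\mathrm{inj}$/$\mathrm{case}$ rules for sums.

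I expect the main obstacle to be the bookkeeping around the truncation operator $ { \color{black}{C} }  \wedge (-)$: making the context of the $\Pi$-inversion output, the hypotheses of the Substitution lemma, and the argument judgment $\Omega \Vdash a_2 :^{\ell_0} A_0$ all line up requires small auxiliary facts — idempotence and monotonicity of $ { \color{black}{C} }  \wedge (-)$, preservation of context well-formedness under truncation, and the transport of a $\Vdash$-judgment from $\Omega$ to $ { \color{black}{C} }  \wedge \Omega$ — none deep individually, but all of which must be exactly right. A secondary point is that the induction on subderivations that extend the context (\rref{T-AbsC}, \rref{T-Pi}, \rref{T-LetPairC}) needs those extended contexts to be well-formed; this is obtained by applying regularity to the domain/scrutinee subderivation (or, cleanly, by stating the lemma simultaneously with preservation of context well-formedness). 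Finally, when the lattice has $ { \color{black}{C} }  =  { \color{black}{\top} } $ the truncation is the identity and \rref{CT-Top} collapses into \rref{CT-Leq}, so the argument degenerates to the ordinary PTS regularity proof for \DDCt{}, which is a useful sanity check.
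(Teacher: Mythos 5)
Your proposal is correct and follows the standard route that the paper's mechanized proof (\texttt{typing.v:Typing\_regularity}) takes: induction on the typing derivation, reading the result off the type-checking premises of the introduction and conversion rules, using $\vdash \Omega$ for the variable case, the sort-totality assumption on $\mathcal{A}$ for the sort and $\Pi$ cases, and syntactic inversion of the $\Pi$-type's typing followed by the Substitution lemma for application and the other elimination forms. The truncation bookkeeping and the need for well-formedness of extended contexts that you flag are exactly the right points of care, and your proposed remedies (Narrowing, Subsumption, preservation of $\vdash \Omega$ under $ { \color{black}{C} }  \wedge (-)$, and strengthening the induction) are the ones needed.
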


\noindent Finally, we have the two main lemmas proving type soundness.

\begin{lemma}[Preservation\footnote{\texttt{typing.v:Typing\_preservation}}]
If $ \Omega  \vdash  \ottnt{a} :^{ \ell }  \ottnt{A} $ and $\vdash  \Omega$ and  $ \ottnt{a}  \leadsto  \ottnt{a'} $, then $ \Omega  \vdash  \ottnt{a'} :^{ \ell }  \ottnt{A} $.
\end{lemma}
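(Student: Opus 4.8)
The plan is to argue by induction on the typing derivation $ \Omega  \vdash  \ottnt{a} :^{ \ell }  \ottnt{A} $, doing a case analysis on the reduction $ \ottnt{a}  \leadsto  \ottnt{a'} $ within each case. Since the operational semantics is call-by-name, the introduction forms --- variables, sorts, $\Pi$- and $\Sigma$-types, $\lambda$-abstractions, pairs, and injections --- are values and never step, so those typing cases are vacuous; the only non-vacuous cases are \rref{T-AppC}, \rref{T-LetPairC}, the \cd{case}-elimination rule, and \rref{T-ConvC}. The congruence subcases (reduction in the function of an application, or in the scrutinee of a \cd{let}-pair or \cd{case}) are uniform: apply the induction hypothesis to the subderivation for that head position and re-apply the same rule, the overall type being unchanged --- the one wrinkle is that in \rref{T-LetPairC} the reduced scrutinee reappears inside the match-refined result type, so that subcase is closed with an extra \rref{T-ConvC} using that $ \ottnt{a}  \leadsto  \ottnt{a'} $ entails $  |  \Omega  |   \vdash  \ottnt{a}  \equiv_{ \ell }  \ottnt{a'} $. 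The \rref{T-ConvC} case of the main induction is equally immediate: apply the hypothesis to the subject premise and re-cast.

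The real content is in the $\beta$-subcases: $ \ottsym{(}   \lambda^{ \ell_{{\mathrm{0}}} }  \ottmv{x} . \ottnt{b}   \ottsym{)}  \;  \ottnt{a} ^{ \ell_{{\mathrm{0}}} }   \leadsto  \ottnt{b}  \ottsym{\{}  \ottnt{a}  \ottsym{/}  \ottmv{x}  \ottsym{\}}$, $ \ottkw{let} \; ( \ottmv{x} ^{ \ell_{{\mathrm{0}}} } ,  \ottmv{y} )\ =\  ( \ottnt{a_{{\mathrm{1}}}} ^{ \ell_{{\mathrm{0}}} },  \ottnt{a_{{\mathrm{2}}}} ) \  \ottkw{in} \  \ottnt{b}  \leadsto  \ottnt{b}  \ottsym{\{}  \ottnt{a_{{\mathrm{1}}}}  \ottsym{/}  \ottmv{x}  \ottsym{\}}  \ottsym{\{}  \ottnt{a_{{\mathrm{2}}}}  \ottsym{/}  \ottmv{y}  \ottsym{\}}$, and the two \cd{case}-on-$\ottkw{inj}_i$ reductions. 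For each of these I need a generation (inversion) lemma for the typing of the corresponding introduction form --- for instance, that from $ \Omega  \vdash   \lambda^{ \ell_{{\mathrm{0}}} }  \ottmv{x} . \ottnt{b}  :^{ \ell }  \ottnt{C} $ one extracts $\ottnt{A},\ottnt{B}$ with $   \Omega ,   \ottmv{x} \! :^{  \ell_{{\mathrm{0}}}  \vee  \ell  }\! \ottnt{A}    \vdash  \ottnt{b} :^{ \ell }  \ottnt{B} $, the corresponding $\Pi$-type well-formedness premise, and $  |  \Omega  |   \vdash  \ottnt{C}  \equiv_{  { \color{black}{C} }  }   \Pi  \ottmv{x} \!:^{ \ell_{{\mathrm{0}}} }\! \ottnt{A} . \ottnt{B}  $ (with analogous statements for pairs and injections). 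Proving these requires \emph{injectivity} of the $\Pi$ and $\Sigma$ constructors for definitional equality --- that $   \Pi  \ottmv{x} \!:^{ \ell_{{\mathrm{0}}} }\! \ottnt{A} . \ottnt{B}   \equiv_{ \ell }   \Pi  \ottmv{x} \!:^{ \ell_{{\mathrm{0}}}' }\! \ottnt{A'} . \ottnt{B'}  $ forces $\ell_{{\mathrm{0}}} = \ell_{{\mathrm{0}}}'$ along with $ \Phi  \vdash  \ottnt{A}  \equiv_{ \ell }  \ottnt{A'} $ and $  \Phi  ,   \ottmv{x}  \! :  \ell_{{\mathrm{0}}}    \vdash  \ottnt{B}  \equiv_{ \ell }  \ottnt{B'} $ --- which I would obtain from the confluence of the parallel reduction relation and the joinability result underpinning the consistency theorem of Section~\ref{consisteq}: definitionally equal head forms reduce to indistinguishable terms, and indistinguishable $\Pi$ (resp.\ $\Sigma$) types relate their grades and their sub-components.

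With generation in hand, the function-$\beta$ subcase is closed by the Substitution lemma: invert the application to get $ \Omega  \vdash   \lambda^{ \ell_{{\mathrm{0}}} }  \ottmv{x} . \ottnt{b}  :^{ \ell }   \Pi  \ottmv{x} \!:^{ \ell_{{\mathrm{0}}} }\! \ottnt{A} . \ottnt{B}  $ and $ \Omega \Vdash   \ottnt{a} :^{ \ell_{{\mathrm{0}}} }  \ottnt{A} $, then generation on the $\lambda$ gives $\ottnt{b}$ typed under $ \ottmv{x} \! :^{  \ell_{{\mathrm{0}}}  \vee  \ell  }\! \ottnt{A} $, and Substitution --- stated precisely for a first hypothesis of the truncated-at-top form $ \Omega_{{\mathrm{1}}} \Vdash   \ottnt{a} :^{ \ell_{{\mathrm{0}}} }  \ottnt{A} $, hence applicable verbatim even when $\ell_{{\mathrm{0}}} =  { \color{black}{\top} } $ --- yields $ \Omega  \vdash  \ottnt{b}  \ottsym{\{}  \ottnt{a}  \ottsym{/}  \ottmv{x}  \ottsym{\}} :^{ \ell }  \ottnt{B}  \ottsym{\{}  \ottnt{a}  \ottsym{/}  \ottmv{x}  \ottsym{\}} $; a final \rref{T-ConvC} reconciles $\ottnt{B}  \ottsym{\{}  \ottnt{a}  \ottsym{/}  \ottmv{x}  \ottsym{\}}$ with the stated result type, where Regularity together with $\vdash \Omega$ supplies the required well-formedness. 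The \cd{case}-on-$\ottkw{inj}_i$ subcases proceed the same way but are lighter, as no dependency links the branches to the shape of the scrutinee.

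The delicate case, and the one I expect to be the main obstacle, is $\Sigma$-elimination. \Rref{T-LetPairC} checks the body against a result type that has been \emph{refined} by the match --- the derivation records the equation identifying the pattern $ ( \ottmv{x} ^{ \ell_{{\mathrm{0}}} },  \ottmv{y} ) $ with the scrutinee --- and it introduces $\ottmv{x}$ at the joined grade $ \ell_{{\mathrm{0}}}  \vee  \ell $ while the first component $\ottnt{a_{{\mathrm{1}}}}$ of the actual pair is checked only via the truncated-at-top judgement at grade $\ell_{{\mathrm{0}}}$. To close the case I would: invert the pair typing to $ \Omega \Vdash   \ottnt{a_{{\mathrm{1}}}} :^{ \ell_{{\mathrm{0}}} }  \ottnt{A} $ and $ \Omega  \vdash  \ottnt{a_{{\mathrm{2}}}} :^{ \ell }  \ottnt{B}  \ottsym{\{}  \ottnt{a_{{\mathrm{1}}}}  \ottsym{/}  \ottmv{x}  \ottsym{\}} $; apply Substitution first for $\ottmv{x}$ (the grades line up because the body holds $\ottmv{x}$ at $ \ell_{{\mathrm{0}}}  \vee  \ell $ and Substitution consumes a truncated-at-top hypothesis at $\ell_{{\mathrm{0}}}$) and then for $\ottmv{y}$; note that under these substitutions the recorded match-equation collapses to a reflexive instance equating the now-concrete pair with itself; and finally re-cast to the target type with \rref{T-ConvC}. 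Threading the grades and the refinement together --- most acutely when $\ell_{{\mathrm{0}}} =  { \color{black}{\top} } $, where $\ottnt{a_{{\mathrm{1}}}}$ is simultaneously run-time and compile-time irrelevant yet still occurs in the type of $\ottnt{a_{{\mathrm{2}}}}$ --- is the fiddly heart of the argument, which is presumably why it, along with the surrounding metatheory, was mechanized in Coq.
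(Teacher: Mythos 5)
Your proposal is correct and follows essentially the same route as the paper's (mechanized) proof: induction on the typing derivation with congruence cases discharged by the induction hypothesis, $\beta$-cases closed by inversion plus the Substitution lemma stated with a truncated-at-top hypothesis, $\Pi$/$\Sigma$-injectivity obtained from the confluence-and-joinability machinery of the consistency proof, and Regularity supplying the well-formedness side conditions for the final conversion steps. Your identification of the graded $\Sigma$-elimination case as the delicate point, and the extra conversion needed when the scrutinee of a dependent match reduces, matches the structure of the Coq development the paper relies on.
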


\begin{lemma}[Progress\footnote{\texttt{progress.v:Typing\_progress}}]
If $ \varnothing  \vdash  \ottnt{a} :^{ \ell }  \ottnt{A} $ then either $\ottnt{a}$ is a value or there exists some
$\ottnt{a'}$ such that $ \ottnt{a}  \leadsto  \ottnt{a'} $.
\end{lemma}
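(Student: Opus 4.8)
The plan is to proceed by a straightforward induction on the typing derivation $\varnothing \vdash \ottnt{a} :^{\ell} \ottnt{A}$. The variable case \rref{T-Var} is vacuous because the context is empty. All the introduction and type-formation forms --- sorts, $\ottkw{Unit}$, $\ottkw{unit}$, $\Pi$- and $\Sigma$-types, $\lambda$-abstractions, pairs, sum types, and injections --- are values, so for those the first disjunct holds immediately. The conversion rule \rref{T-ConvC} follows directly from the induction hypothesis, since conversion changes only the type, not the term. What remains are the elimination forms: application (\rref{T-AppC}), pair elimination ($\ottkw{let}\;(\ottmv{x}^{\ell_0},\ottmv{y}) = \ottnt{a}\;\ottkw{in}\;\ottnt{b}$, from \rref{T-LetPairC}), and $\ottkw{case}$-analysis of a sum.

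For each elimination form I would first apply the induction hypothesis to the principal subterm in evaluation position --- the function in an application, the scrutinised pair in a $\ottkw{let}$, the scrutinised sum in a $\ottkw{case}$. If that subterm steps, the whole expression steps by the corresponding congruence rule of the call-by-name relation. If it is already a value, I invoke a \emph{canonical forms} lemma: a closed value of type $\Pi \ottmv{x} \!:^{\ell_0}\! \ottnt{A} . \ottnt{B}$ is a $\lambda$-abstraction $\lambda^{\ell_0}\ottmv{x}.\ottnt{b}$, a closed value of a $\Sigma$-type is a pair $(\ottnt{a}^{\ell_0},\ottnt{b})$, and a closed value of a sum type is an injection. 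Given canonical forms, the matching $\beta$-rule fires --- and here the grade annotations on the elimination form and on the value line up exactly, because the step relation is the same call-by-name reduction used for \DDCt{} and SDC --- so the expression steps and the second disjunct holds.

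The genuine work, and the only part that is not routine bookkeeping, is the canonical forms lemma. Because \rref{T-ConvC} lets a term be retyped up to definitional equality, a priori nothing syntactic rules out, say, a $\lambda$-abstraction being assigned a sum type. This is precisely where the Consistency theorem is needed: I would first prove an inversion-modulo-conversion lemma stating that if $\varnothing \vdash \ottnt{v} :^{\ell} \ottnt{C}$ for a value $\ottnt{v}$, then, after peeling off the uses of \rref{T-ConvC}, $\ottnt{C}$ is definitionally equal (in the empty label context) to the ``natural'' type of $\ottnt{v}$, which is a head form. Consistency then forces $\ottnt{C}$ to be a head form with the same top constructor, which pins down the shape of $\ottnt{v}$. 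Regularity is used along the way to know that the types occurring in these derivations are themselves well formed so that Consistency applies to them. One point worth checking but ultimately harmless is the truncated-at-top judgment $\Omega \Vdash \ottnt{a} :^{\ell} \ottnt{A}$ that appears as a premise for arguments and first components: under call-by-name those subterms are never in evaluation position at the outermost step, so no case analysis on $\Vdash$ is required, and progress is unaffected by truncation. With canonical forms in hand, every remaining case closes mechanically, matching the structure of the Coq development (\texttt{progress.v:Typing\_progress}).
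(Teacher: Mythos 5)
Your proposal is correct and follows essentially the same route as the paper's development: induction on the typing derivation, with the elimination cases discharged by congruence rules or by canonical-forms reasoning, where the canonical-forms/inversion step is exactly where the Consistency theorem of Section~\ref{consisteq} (definitional equality preserves head forms) is needed. This matches the structure of the mechanized proof in \texttt{progress.v}.
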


Hence, DDC is type sound. We have seen earlier that it tracks run-time and compile-time irrelevance correctly.

DDC is parameterized by a generic pure type system and a generic lattice. When the parameterizing pure type system is strongly normalizing, such as the Calculus of Constructions, type-checking is decidable. In the next section, we provide a demonstration.

\section{Type Checking}
\label{sec:typechecking}

\newcommand{\IS}{$\text{ICC}^{\ast}$}

As a pure type system, not all instances of DDC admit decidable type
checking. For example, in the presence of the \textsf{type}:\textsf{type}
axiom, the system includes non-terminating computations via Girard's
paradox. As as a result, we cannot decide equality in that system, so type
checking will be undecidable.  However, if the sorts, axioms and rules are
chosen such that the language is strongly normalizing, then we can define a
decidable type checking algorithm. This algorithm is standard, but relies on a
decision procedure for the equality judgement.

Our consistency proof, described in Section~\ref{consisteq}, gives us a
start. This proof uses an auxiliary binary relation called \emph{joinability},
which holds when two terms can use multiple steps of parallel reduction to
reach two simpler terms that differ only in their unobservable components. Joinability
and definitional equality induce the same relation on DDC terms. We can show
that two DDC terms are definitionally equal if and only if they
are joinable\footnote{\texttt{consist.v:DefEq\_Joins,Joins\_DefEq}}, which
means that a decision procedure based on joinability will be sound and
complete for DDC's labeled definition of equivalence.

Therefore, the decidability of type checking reduces to showing strong
normalization. If we select the sorts, axioms and rules of DDC to match those
of the Calculus of Constructions~\citep{pts},
we believe that this result holds, but leave a direct proof for future
work. However, by translating this instance of DDC to \IS{}, we can show that
a sublanguage of this instance is strongly
normalizing. \IS{}~\citep{barras:icc-star}, is a version of the Implicit
Calculus of Constructions with annotations that support decidable type
checking, but because it includes only (relevant and irrelevant) $\Pi$-types, so we must
restrict our attention to the corresponding fragment of DDC.

We define the following translation, written $\widetilde{\cdot}$,
that converts DDC terms to \IS{} terms.  The key parts of this translation map
arguments labeled $ { \color{black}{C} } $ and below to relevant arguments, and those
labeled greater than $ { \color{black}{C} } $, such as $ { \color{black}{\top} } $, to irrelevant
arguments.\footnote{The syntax of \IS{} uses parentheses to indicate usual
  (relevant) arguments and square brackets to indicate arguments that are
  irrelevant at both run time and compile time.}

\begin{align*}
 \widetilde{ \ottmv{x} }  = \ottmv{x} & \hspace*{20pt}  \widetilde{   \ottnt{s}   }  =  \ottnt{s}  & 
 \widetilde{  \Pi  \ottmv{x} \!:^{ \ell }\! \ottnt{A} . \ottnt{B}  }  & = \begin{cases}
                                 \Pi ( \ottmv{x}  \!:\!  \widetilde{ \ottnt{A} }  ).  \widetilde{ \ottnt{B} }   \;\; \text{if}\: \ell  \leq   { \color{black}{C} }  \\
                                 \Pi [  \ottmv{x}  \!:\!  \widetilde{ \ottnt{A} }   ] .  \widetilde{ \ottnt{B} }   \;\; \text{otherwise}
                              \end{cases} \\
 \widetilde{  \lambda  \ottmv{x} \!:^{ \ell }\! \ottnt{A}  .  \ottnt{b}  }  & = \begin{cases}
                                 \lambda (  \ottmv{x} \!:\!  \widetilde{ \ottnt{A} }   ) .   \widetilde{ \ottnt{b} }   \;\; \text{if}\: \ell  \leq   { \color{black}{C} }  \\
                                 \lambda [  \ottmv{x} \!:\!  \widetilde{ \ottnt{A} }   ] .   \widetilde{ \ottnt{b} }   \;\; \text{otherwise}
                              \end{cases} &
 \widetilde{   \ottnt{b}  \;  \ottnt{a} ^{ \ell }   }  & =   \begin{cases}
                         \widetilde{ \ottnt{b} }   \; (   \widetilde{ \ottnt{a} }   )  \;\; \text{if}\: \ell  \leq   { \color{black}{C} }  \\
                         \widetilde{ \ottnt{b} }   \; [   \widetilde{ \ottnt{a} }   ]  \;\; \text{otherwise} 
                       \end{cases}                                                
\end{align*}

Note that \IS{} compares terms for equality after an erasure 
operation, written $\cdot^{\ast}$, that removes all irrelevant arguments.
Now, we can show that the above translation preserves definitional equality and typing. 
Here, $ \widetilde{ \Omega } $ denotes $\Omega$ with the labels at
the variable bindings omitted.  
\begin{lemma}[Translation preservation] \label{DDCToICCStar}
If $ \Phi  \vdash  \ottnt{A}  \equiv_{  { \color{black}{C} }  }  \ottnt{B} $, then $   \widetilde{ \ottnt{A} }  ^{\ast}   \cong_{\beta\eta}    \widetilde{ \ottnt{B} }  ^{\ast}  $. \\
If $ \Omega  \vdash \,  \ottnt{a}  \, :^{ \ell } \,  \ottnt{A} $, then $  \widetilde{ \Omega }   \vdash\,   \widetilde{ \ottnt{a} }   \, :  \,   \widetilde{ \ottnt{A} }  $. 
\end{lemma}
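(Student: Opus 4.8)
The plan is to prove the two implications in turn. The first, preservation of definitional equality, concerns only the untyped translation $\widetilde{\cdot}$ together with \IS{}'s erasure $\cdot^{\ast}$; the second, preservation of typing, is an induction on DDC typing derivations (restricted to the $\Pi$-fragment, since that is all \IS{} provides) that appeals to the first in its conversion case. For the first statement, recall that $ \equiv_{  { \color{black}{C} }  } $ is the congruence-and-equivalence closure of indexed indistinguishability at $ { \color{black}{C} } $ together with the small-step relation $\leadsto$. Since $ \cong_{\beta\eta} $ on pure \IS{} terms is likewise an equivalence relation and a congruence, it suffices to prove two local facts and then compose them along an equality derivation: (i) if $ \Phi  \vdash  \ottnt{a}  \sim_{  { \color{black}{C} }  }  \ottnt{b} $ then $  \widetilde{ \ottnt{a} }  ^{\ast} $ and $  \widetilde{ \ottnt{b} }  ^{\ast} $ are the same pure term up to $\alpha$; and (ii) if $ \ottnt{a}  \leadsto  \ottnt{b} $ then $   \widetilde{ \ottnt{a} }  ^{\ast}   \cong_{\beta\eta}    \widetilde{ \ottnt{b} }  ^{\ast}  $. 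Fact (i) is a structural induction on $\sim_{ { \color{black}{C} } }$: the sub-terms that indistinguishability at $ { \color{black}{C} } $ is permitted to ignore are exactly those guarded by a label strictly above $ { \color{black}{C} } $, and by construction these are precisely the argument annotations that $\widetilde{\cdot}$ sends to \IS{}'s irrelevant positions (square brackets) and that $\cdot^{\ast}$ then deletes. Fact (ii) is a case analysis on the contracted redex: a $\beta$-step at a label $ \leq   { \color{black}{C} } $ translates to a single $\beta$-step on the erased \IS{} terms (using a routine lemma that $\widetilde{\cdot}$ and $\cdot^{\ast}$ commute with substitution on visible variables), a $\beta$-step at a label above $ { \color{black}{C} } $ lies entirely inside an \IS{}-irrelevant subterm and is erased on both sides so the two erasures are literally equal, and congruence steps follow from congruence of $ \cong_{\beta\eta} $.

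For the second statement, I would induct on the derivation of $ \Omega  \vdash \,  \ottnt{a}  \, :^{ \ell } \,  \ottnt{A} $ (assuming $\Omega$ well-formed, as the DDC metatheory supplies where needed), mapping each DDC rule to the corresponding \IS{} rule and choosing the \emph{relevant} form ($(x:A)$, ordinary application) when the governing label is $ \leq   { \color{black}{C} } $ and the \emph{irrelevant} form ($[x:A]$, bracketed application) otherwise, exactly as $\widetilde{\cdot}$ does; one fixes DDC's sorts, axioms and rules to those of the Calculus of Constructions so that \rref{T-Type} and \rref{T-Pi} line up with \IS{}'s axiom and product rules. Three observations carry the cases. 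First, $\widetilde{\cdot}$ commutes with substitution, so the substituted result types appearing in the application rules translate correctly. Second, truncation is invisible to the translation: $ \widetilde{  { \color{black}{C} }   \wedge  \Omega } $ and $ \widetilde{ \Omega } $ denote literally the same \IS{} context, because truncation only rewrites grades and $\widetilde{\cdot}$ forgets grades; this is what lets the truncated-at-top premises of \rref{T-AbsC}, \rref{T-AppC} and \rref{T-ConvC} be handed unchanged to the ordinary \IS{} rules. Third, whenever DDC forms $ \lambda  \ottmv{x} \!:^{ \ell_{{\mathrm{0}}} }\! \ottnt{A}  .  \ottnt{b} $ with $ \neg  ( \ell_{{\mathrm{0}}}  \leq   { \color{black}{C} } ) $, the grading discipline forces $\ottmv{x}$ to occur in $\ottnt{b}$ only underneath argument markers above $ { \color{black}{C} } $, hence only in \IS{}-irrelevant positions of $ \widetilde{ \ottnt{b} } $, hence not free in $  \widetilde{ \ottnt{b} }  ^{\ast} $ --- which discharges the side condition of \IS{}'s irrelevant abstraction rule. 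The conversion case then combines the induction hypothesis with the first statement: from $  |  \Omega  |   \vdash  \ottnt{A}  \equiv_{  { \color{black}{C} }  }  \ottnt{B} $ we get $   \widetilde{ \ottnt{A} }  ^{\ast}   \cong_{\beta\eta}    \widetilde{ \ottnt{B} }  ^{\ast}  $, which is exactly \IS{}'s notion of type equality, so \IS{}'s conversion rule applies.

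I expect the main obstacle to be the first statement --- in particular, establishing rigorously that the ``blob'' subterms that DDC's definitional equality at $ { \color{black}{C} } $ genuinely ignores coincide with those deleted by $  \widetilde{\cdot} ^{\ast} $. This amounts to reconciling DDC's label-indexed indistinguishability with \IS{}'s erasure-based conversion and then threading the mixed indistinguishability/reduction zig-zags of an equality derivation through that correspondence. A secondary difficulty in the second statement is the bookkeeping around \IS{}'s side condition for irrelevant abstraction and the precise shape of \IS{}'s (relevant and irrelevant) application rules, which dictates exactly which DDC premises are needed; the truncation-invisibility observation is the key to keeping this manageable.
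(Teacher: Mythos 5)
The paper states this lemma without giving a proof (it is not among the mechanized results), so there is nothing to compare line-by-line; your overall strategy --- syntactic equality of $  \widetilde{\cdot}  ^{\ast} $-images on $\sim_{ { \color{black}{C} } }$-related terms, preservation of $\beta$-steps, and a rule-by-rule induction for typing in which truncation is invisible because $\widetilde{\cdot}$ forgets grades --- is exactly the one the surrounding text presupposes (``because $\beta$-reductions are preserved by the translation\ldots''), and your handling of \IS{}'s free-variable side condition for irrelevant abstraction via the grading discipline is the right argument.

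There is, however, one concrete gap in the first half. You premise the induction on the claim that $ \equiv_{  { \color{black}{C} }  } $ is the congruence-and-equivalence closure of $\sim_{ { \color{black}{C} } }$ and $\leadsto$. That is the paper's informal gloss, but the actual inductive definition (see the appendix) also contains the non-congruence rules \textsc{Eq-SubstIrrel}, \textsc{Eq-PiFst} and \textsc{Eq-PiSnd}. Your two ``local facts'' do not cover these cases: \textsc{Eq-PiFst}/\textsc{Eq-PiSnd} are inversion rules, and pushing them through the translation requires injectivity of products up to $ \cong_{\beta\eta} $ in \IS{}, which is a nontrivial property you would have to import or prove (it follows from confluence and normalization of \IS{}, but it is not a congruence argument). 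The cleaner repair is available inside the paper itself: use the equivalence of definitional equality with joinability (\texttt{DefEq\_Joins}/\texttt{Joins\_DefEq}), so that $ \Phi  \vdash  \ottnt{A}  \equiv_{  { \color{black}{C} }  }  \ottnt{B} $ yields multi-step parallel reductions of $\ottnt{A}$ and $\ottnt{B}$ to terms indistinguishable at $ { \color{black}{C} } $; then your fact (i) plus a parallel-reduction version of your fact (ii) give $   \widetilde{ \ottnt{A} }  ^{\ast}   \cong_{\beta\eta}    \widetilde{ \ottnt{B} }  ^{\ast}  $ without ever inverting a product. (\textsc{Eq-SubstIrrel} then also comes for free, since variables whose labels are not below $ { \color{black}{C} } $ do not occur in the erased image.) With that rerouting, and the caveat that fact (ii) must be stated for well-graded terms so that the bound variable of a discarded redex genuinely vanishes from the erasure, your argument goes through.
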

Next, observe that because $\beta$-reductions are preserved by the
translation, any parallel reduction in DDC between terms $\ottnt{a}$ and $\ottnt{b}$ at level
$ { \color{black}{C} } $, where $\ottnt{a} \neq \ottnt{b}$, would correspond to a sequence of reduction steps
$ \widetilde{ \ottnt{a} }  \rightarrow^{+}_{\beta_{ie}}  \widetilde{ \ottnt{b} } $ in \IS{}. That means that an infinite sequence of parallel reductions $\ottnt{a_{{\mathrm{0}}}}$, $\ottnt{a_{{\mathrm{1}}}}$, \ldots, where
each term differs from the previous, corresponds to an infinite sequence of
reductions $ \widetilde{ \ottnt{a_{{\mathrm{0}}}} } $, $ \widetilde{ \ottnt{a_{{\mathrm{1}}}} } $ \ldots in \IS{}.  Therefore, as all
well-typed \IS{} terms are strongly normalizing, we can conclude that this
is so for this instance of DDC.

\paragraph{Non-terminating instances of DDC.}
For pure type systems that are not strongly normalizing, such as the
\textsf{type}:\textsf{type} language, there is an alternative approach to
developing a calculus with decidable type checking, following
\citet{weirich:systemd}. The key idea is to develop an annotated version of
DDC that book-keeps additional information from typing and equality derivations. 
In such an annotated version, the conversion rule would include an explicit coercion 
annotation that witnesses the equality between the concerned types, thus avoiding the need for
normalization.

\section{Discussions and Related Work} \label{related}

\subsection{Irrelevance in Dependent Type Theories}
\label{sec:irrelevance-related}

Overall, compile-time and run-time irrelevance is a well-studied topic in
the design of dependent type systems. In some systems, the focus is only
on support for run-time irrelevance: see~\cite{McBride:2016,atkey,brady:idris2,miquel,mishra,matus}.  In other systems, the focus is on compile-time irrelevance: see~\cite{pfenning:2001,abel}. Some systems support both, but require them to overlap, such as ~\cite{barras:icc-star,mishra-linger:phd,weirich:systemd,Nuyts18}. The system of \citet{Moon:2021} does not require them to overlap but their type system does not make use of compile-time irrelevance in the conversion rule.

To compare, system $\textsc{DDC}^{\top}$, presented here, can support run-time irrelevance only and is similar to the core language of \citet{matus}. However, note that $\textsc{DDC}^{\top}$ can track dependencies in general while the system in \citet{matus} tracks run-time irrelevance alone. \textsc{DDC}, on the other hand, is the only system that we are aware of that tracks run-time and compile-time irrelevance separately and makes use of the latter in the conversion rule. Further, \textsc{DDC} tracks these irrelevances in the presence of strong $\Sigma$-types with erasable first components, something which, to the best of our knowledge, no prior work has been able to.

Prior work has identified the difficulty in handling strong $\Sigma$-types with erasable first
components in a setting that tracks compile-time irrelevance. \citet{abel} point out that strong irrelevant $\Sigma$-types make their theory inconsistent. Similarly, EPTS$^{\bullet}$~\citep{mishra-linger:phd} cannot define the projections for pairs having such $\Sigma$-types. The reason behind this is that EPTS$^{\bullet}$ is hard-wired to work with a two-element lattice which identifies compile-time and run-time irrelevance. As such, projections from such pairs lead to type unsoundness. For example, considering the first components to be run-time irrelevant, the pairs $ ( \ottkw{Int}  ,  \ottkw{unit} ) $ and $ ( \ottkw{Bool}  ,  \ottkw{unit} ) $ are run-time equivalent. Since EPTS$^{\bullet}$ identifies run-time and compile-time irrelevance, these pairs are also compile-time equivalent. Then, taking the first projections of these pairs, one ends up with $\ottkw{Int}$ and $\ottkw{Bool}$ being compile-time equivalent. We resolve this problem by distinguishing between run-time and compile-time irrelevance, thus requiring a lattice with  three elements.

Next, we compare our work with existing literature with respect to the equality relation. We analyze compile-time irrelevance to enable the equality relation to ignore unnecessary sub-terms. However, since our equality relation is untyped, we cannot include type-dependent
rules in our system, such as $\eta$-equivalence for the \cd{Unit} type.
Several prior works on irrelevance~\citep{miquel,barras:icc-star,mishra-linger:phd,matus} use an untyped equality relation.  However, some prior work, such as~\cite{pfenning:2001,abel}, do consider compile-time irrelevance in the context of typed-directed equality. But such systems require irrelevant arguments to functions appear only irrelevantly in the codomain type of the function, thus ruling out several examples including the polymorphic identity function.

\subsection{Quantitative Type Systems}

Our work is closely related to quantitative type systems \citep{petricek,Ghica:2014,Brunel:2014,McBride:2016,atkey,orchard,abel20,grad,Moon:2021}. Such systems provide a fine-grained accounting of coeffects, viewed as resources, for example, variable usage, linearity, liveness, etc. A typical judgment from a quantitative type system \citep{grad} may look like: \[    \ottmv{x} \! :^{  1  }\! \ottkw{Bool}  ,   \ottmv{y} \! :^{  1  }\! \ottkw{Int}   ,   \ottmv{z} \! :^{  0  }\! \ottkw{Bool}   \vdash \ottkw{if}\, x \, \ottkw{then}\, y + 1\, \ottkw{else}\, y - 1\, :^1\, \ottkw{Int} \] The variable $\ottmv{x}$ is used once in the condition, the variable $\ottmv{y}$ is used once in each of the branches while the variable $\ottmv{z}$ is not used at all. As such, they are marked with these quantities in the context. 

This form of judgment is very similar to our typing judgments with quantities appearing in place of levels. However, there is a crucial difference: to the right of the turnstile, while any level may appear in our judgments, only the quantity $ 1 $ can appear in typing judgments of quantitative systems. A quantitative system that allows an arbitrary quantity to the right of the turnstile is not closed under
substitution \citep{McBride:2016,atkey}. As such, quantitative systems are tied to a fixed reference while our systems can view programs from different reference levels. This difference in form results from the difference in the purposes the two kinds of systems serve: quantitative systems count while our systems compare. Counting requires a fixed standard or reference whereas comparison does not. Applications that require counting, like linearity tracking, are handled well by quantitative systems while applications that require comparison, like ensuring secure information flow, are handled well by systems of our kind.

From a type-theoretic standpoint, in general, quantitative systems cannot eliminate pairs through projections. This is so because there is no general way to split the resources of the context that type-checks a pair. Eliminating pairs through projections is straightforward in our systems because the grade on the typing judgment can control where the projections are visible.  

\subsection{Dependency Analysis and Dependent Type Theory}

Dependency analysis and dependent type theories have come together in some existing work.

Like our system, \cite{prost:lambda-cube} extends the $\lambda$-cube so that it may track
dependencies. However, unlike our system, this work uses sorts to track dependencies.
It is inspired by the distinction between sorts in the Calculus of Constructions
where computationally relevant and irrelevant terms live in sorts \cd{Set} and \cd{Prop} respectively.
As \citet{mishra-linger:phd} points out, such an approach ties up two distinct language features,
sorts and dependency analysis, which can be treated in a more orthogonal manner.

\citet{color} is very related to our work. They use colors to erase terms while we use grades. Colors and grades both form a lattice structure and their usage in the respective type systems are quite similar. However, \citet{color} use internalized parametricity to reason about erasure; so it is important that their type theory is logically consistent. Our work does not rely on the normalizing nature of the theory; we take a direct route to analyzing erasure.

Like our work, \citet{caires} track information flow in a dependent type system. But \citet{caires} focus on more imperative features, like modeling of state while we focus on irrelevance. A distinguishing feature of their system is that they allow security labels to depend upon terms, something that we don't attempt here.

\section{Conclusion} \label{conclusion}

We started with the aim of designing a dependent calculus that can analyze dependencies in general, and run-time and compile-time irrelevance in particular. Towards this end, we designed a simple dependency calculus, SDC, and then extended it to two dependent calculi, $\textsc{DDC}^{\top}$ and DDC. $\textsc{DDC}^{\top}$ can track run-time irrelevance while DDC can track both run-time and compile-time irrelevance along with other dependencies.

In future, we would like to explore how irrelevance interacts with other dependencies. We also want to explore whether our systems can be integrated with existing graded type systems, especially quantitative type systems. Yet another interesting direction for research is that how they compare with graded effect systems.

Our work lies in the intersection of dependency analysis and irrelevance tracking in dependent type systems. Both these areas have rich literature of their own. We hope that the connections established in this paper will be mutually beneficial and help in the future exploration of dependencies and irrelevance in dependent type systems.

\section{Acknowledgments}
\label{sec:acknowledgments}
The first two authors were supported by the National Science Foundation under 
Grant Nos. 1703835 and 1521539. The second author was supported by the National Science Foundation under Grant No. 2104535.

\newpage
\bibliography{qtt,weirich,eades,pritam}

\vfill

\ifextended\else
{\small\medskip\noindent{\bf Open Access} This chapter is licensed under the terms of the Creative Commons\break Attribution 4.0 International License (\url{http://creativecommons.org/licenses/by/4.0/}), which permits use, sharing, adaptation, distribution and reproduction in any medium or format, as long as you give appropriate credit to the original author(s) and the source, provide a link to the Creative Commons license and indicate if changes were made.}

{\small \spaceskip .28em plus .1em minus .1em The images or other third party material in this chapter are included in the chapter's Creative Commons license, unless indicated otherwise in a credit line to the material.~If material is not included in the chapter's Creative Commons license and your intended\break use is not permitted by statutory regulation or exceeds the permitted use, you will need to obtain permission directly from the copyright holder.}

\medskip\noindent\includegraphics{cc_by_4-0.eps}
\fi

\ifextended
\appendix

\section{System Specification for SDC }
\label{app:sdc-rules}

\[
\begin{array}{llcll}
\textit{labels} & \ell, k & ::= &  { \color{black}{\bot} }  \mid  { \color{black}{\top} }  \mid  k  \wedge  \ell  \mid  k  \vee  \ell  \mid \ldots \\ 
\textit{types} & \ottnt{A},\ottnt{B}   & ::=& \ottkw{Unit} \mid   \ottnt{A}  \to  \ottnt{B}   \mid  \ottnt{A}  \times  \ottnt{B}  \mid  \ottnt{A}  +  \ottnt{B}   \mid  T^{ \ell }\;  \ottnt{A} \\
\textit{terms} & \ottnt{a}, \ottnt{b}   & ::=& \ottmv{x} \mid  \lambda \ottmv{x} \!:\! \ottnt{A} . \ottnt{a}  \mid  \ottnt{a}  \;  \ottnt{b}  & \mbox{\it variables and functions} \\ 
                              && \mid & \ottkw{unit}  \mid  ( \ottnt{a}  ,  \ottnt{b} )  \mid  \pi_1\  \ottnt{a}  \mid  \pi_2\  \ottnt{a}  & \mbox{\it unit and products} \\
                              && \mid &  \ottkw{inj}_1\,  \ottnt{a}  \mid  \ottkw{inj}_2\,  \ottnt{a}  \mid  \ottkw{case} \,  \ottnt{a} \, \ottkw{of}\,  \ottnt{b_{{\mathrm{1}}}}  ;  \ottnt{b_{{\mathrm{2}}}}  & \mbox{\it sums} \\
                              && \mid &  \eta^{ \ell }\;  \ottnt{a}  \mid  \ottkw{bind} ^{ \ell } \,  \ottmv{x}  =  \ottnt{a}  \,  \ottkw{in}  \,  \ottnt{b}  & \mbox{\it graded modality}
\\
\textit{contexts} & \Omega & ::= & \varnothing \mid  \Omega ,   \ottmv{x} \! :^{ \ell }\! \ottnt{A}  \\
\\
\end{array}
\]

\subsection{Typing and Operational Semantics}

\drules[SDC]{$ \Omega  \vdash\,  \ottnt{a}  \, :^{ \ell }  \,  \ottnt{A} $}{Typing rules for SDC}{Var,Unit,Abs,App,Pair,ProjOne,ProjTwo,InjOne,InjTwo,Case,Return,Bind}

\drules[SDCStep]{$ \ottnt{a}  \leadsto  \ottnt{a'} $}{CBN small step operational semantics for SDC}{AppCong,Beta,ProjOneCong,ProjTwoCong,ProjOneBeta,ProjTwoBeta,CaseCong,CaseOneBeta,CaseTwoBeta,BindCong,BindBeta}

\subsection{Indexed Indistinguishability}

\drules[SGEq]{$ \Phi  \vdash  \ottnt{a}  \sim_{ \ell }  \ottnt{b} $}{Indexed Syntactic Equality}{Var,Unit,Abs,App,Return,Bind,Pair,ProjOne,ProjTwo,InjOne,InjTwo,Case}
\drules[SEq]{$ \Phi  \vdash^{ \ell_{{\mathrm{0}}} }_{ \ell }  \ottnt{a}  \sim  \ottnt{b} $}{Conditional Syntactic Equality}{Leq,Nleq}

\section{System Specification for DDC }
\label{app:ddc-rules}

This is the complete system that we have formalized using the Coq proof
assistant. The type setting of all of these rules is generated from the same Ott
source that also generates the Coq definitions. Some of the following rules may appear different from
their presentation in the paper. The reason behind this is that Ott is not very good at handling multiple
variable bindings. So, when necessary, we replace expressions involving multiple bound variables with equivalent expressions that have only single bound variable.   

\subsection{Operational semantics}

\drules[ValueType]{}{Values that are types}{Type,Pi,WSigma,Sum,Unit}
\drules[V]{}{Values}{ValueType,TmUnit,WPair,InjOne,InjTwo}

\drules[S]{$ \ottnt{a}  \leadsto  \ottnt{a'} $}{CBN small-step operational semantics}{AppCong,Beta,CaseCong,CaseOneBeta,CaseTwoBeta,
  LetPairCong,LetPairBeta}

\subsection{Definitional equality}

\drules[Eq]{$ \Phi  \vdash  \ottnt{a}  \equiv_{ \ell }  \ottnt{b} $}{Definitional Equality}{Refl,Sym,Trans,SubstIrrel,Beta,Pi,Abs,App,PiFst,PiSnd,
    WSigma,WSigmaFst,WSigmaSnd,WPair,LetPair,
    Sum,SumFst,SumSnd,InjOne,InjTwo,Case,TyUnit,TmUnit}
\drules[CEq]{$ \Phi  \vdash^{ k }  \ottnt{a}  \equiv_{ \ell }  \ottnt{b} $}{Conditional Definitional Equality}{Leq,Nleq}

\drules[G]{$ \Phi  \vdash  \ottnt{a}  :  k $}{Grading}{Type,Var,Pi,Abs,App,WSigma,WPair,LetPair,
                             Sum,InjOne,InjTwo,Case,TyUnit,TmUnit}
\drules[CG]{$ \Phi  \vdash_{ k }^{ \ell }  \ottnt{a} $}{Conditional Grading}{Leq,Nleq}

\subsection{Type System}

\drules[T]{$ \Omega  \vdash  \ottnt{a} :^{ \ell }  \ottnt{A} $}{Typing}{Type,Conv,Var,Pi,Abs,App,AppIrrel,WSigma,WPair,WPairIrrel,LetPairCA,
              Sum,InjOne,InjTwo,CaseC,TmUnit,TyUnit}

\subsection{Indexed Indistinguishability} \label{dep-indist}

\drules[GEq]{$ \Phi  \vdash  \ottnt{a}  \sim_{ \ell }  \ottnt{b} $}{Indexed Indistinguishability}{Type,Var,Pi,Abs,App,WSigma,WPair,LetPair,Sum,InjOne,InjTwo,Case,TyUnit,TmUnit}
\drules[CEq]{$ \Phi  \vdash^{ \ell_{{\mathrm{0}}} }_{ \ell }  \ottnt{a}  \sim  \ottnt{b} $}{Conditional Indistinguishability}{Leq,Nleq}

\subsection{Auxiliary Judgments}
\label{parconsist}

\drules[Par]{$ \Phi   \vdash    \ottnt{a}  \Rightarrow_{ \ell }  \ottnt{b} $}{Parallel reduction}{Refl,Pi,AppBeta,App,Abs,WSigma,WPair,WPairBeta,LetPair,
  Sum,InjOne,InjTwo,CaseBetaOne,CaseBetaTwo,Case}
\drules[CPar]{$ \Phi  \vdash^{ k }_{ \ell }  \ottnt{a}   \Rightarrow   \ottnt{b} $}{Conditional Parallel reduction}{Leq,Nleq}
\drules[MP]{$ \Phi   \vdash   \ottnt{a}   \Rightarrow^{\ast} _{ \ell }  \ottnt{b} $}{Parallel reduction, reflexive transitive closure}
   {Refl,Step}
\drules[]{$ \Phi \vdash   \ottnt{a}   \Leftrightarrow _{ \ell }   \ottnt{b} $}{Joinability}
   {join}
\drules[ConsistentXXa]{$ \mathsf{Consistent}\  \ottnt{a}  \  \ottnt{b} $}{Consistent Head Forms}{Type,Unit,Pi,WSigma,Sum,StepXXR,StepXXL}
\fi

\end{document}